\documentclass[12pt]{article}
\usepackage{a4wide}
\usepackage[utf8]{inputenc}
\usepackage[T1]{fontenc}
\usepackage[english]{babel}
\usepackage{lmodern,eurosym}
\usepackage{amsmath}
\usepackage{mathrsfs}
\usepackage{amsfonts}
\usepackage{amssymb}
\usepackage{graphicx}
\usepackage{verbatim}
\usepackage{tikz}
\usetikzlibrary{arrows,patterns,topaths}
\usetikzlibrary{positioning}
\tikzstyle{call} = [->, line width=1mm, lightgray]
\tikzstyle{callsequencelabel} = [black, font=\footnotesize]
\usepackage{array}
\usepackage{float}
\usepackage[pdfusetitle,hidelinks]{hyperref}

\usepackage[T1,safe]{tipa}
\usepackage{listings}
\usepackage{amssymb}
\usepackage{color}
\definecolor{keywordcolor}{rgb}{0.7, 0.1, 0.1}   
\definecolor{tacticcolor}{rgb}{0.0, 0.1, 0.6}    
\definecolor{commentcolor}{rgb}{0.4, 0.4, 0.4}   
\definecolor{symbolcolor}{rgb}{0.0, 0.1, 0.6}    
\definecolor{sortcolor}{rgb}{0.1, 0.5, 0.1}      
\definecolor{attributecolor}{rgb}{0.7, 0.1, 0.1} 

\newcommand{\lean}[1]{%
  \href{https://m4lvin.github.io/Gossip-in-Lean/docs/find/?pattern=Error.#1\#doc}{\texttt{\detokenize{#1}}}%
}

\usepackage[thmmarks]{ntheorem}
\theorembodyfont{\rmfamily}
\theoremsymbol{\ensuremath{\dashv}}
\usepackage{newproof}

\newtheorem{theorem}{Theorem}
\newtheorem{proposition}[theorem]{Proposition}
\newtheorem{lemma}[theorem]{Lemma}
\newtheorem{corollary}[theorem]{Corollary}
\newtheorem{conjecture}[theorem]{Conjecture}

\newtheorem{definition}[theorem]{Definition}
\newtheorem{example}[theorem]{Example}
\newtheorem{observation}[theorem]{Observation}

\newcommand{\LNS}{\mathsf{LNS}}

\newcommand{\ANY}{\mathsf{ANY}}
\newcommand{\prot}{\mathsf{P}}

\renewcommand{\phi}{\varphi}

\newcommand{\weg}[1]{}

\newcommand{\Exp}{\mathit{Exp}}
\newcommand{\cExp}{\mathit{Exp}^\mathit{cor}}

\newcommand{\eq}{\leftrightarrow}
\newcommand{\Eq}{\Leftrightarrow}
\newcommand{\imp}{\rightarrow}
\newcommand{\Imp}{\Rightarrow}

\newcommand{\Pmi}{\Leftarrow}
\newcommand{\et}{\wedge}
\newcommand{\vel}{\vee}
\newcommand{\Et}{\bigwedge}

\newcommand{\M}{\hat{K}}
\renewcommand{\phi}{\varphi}
\newcommand{\union}{\cup}

\newcommand{\inter}{\cap}

\newcommand{\bigO}{\mathcal{O}}

\newcommand{\Kw}{\mathit{Kw}}

\newcommand{\ov}[1]{\overline{#1}}
\newcommand{\un}[1]{\underline{#1}}

\usepackage{soul}

\usepackage{bm}

\newcommand{\last}{{\mathit{last}}}
\newcommand{\full}{{\mathit{full}}}

\newcommand{\power}{{\mathcal{P}}}

\newcommand{\Kv}{\mathit{Kv}}
\newcommand{\fv}{\mathsf{v}^\sim}

\newcommand{\initsecret}{S}
\newcommand{\initsecreto}{I}
\newcommand{\initsecretp}{T}

\newcommand{\mmodel}{{\mathcal{M}}}
\newcommand{\cor}{{\mathit{cor}}}

\begin{document}

\title{Self-Correcting Gossip Protocols}

\author{Giorgio Cignarale \and Hans van Ditmarsch\thanks{Author affiliations of Giorgio, Stephan: TU Wien, Austria; Hans: CNRS, IRIT, University of Toulouse, France; Malvin: ILLC, University of Amsterdam, Netherlands; Hugo: TU Berlin, Germany; Vaishnavi: IIT Delhi, India. Hans van Ditmarsch, {\tt hansvanditmarsch@gmail.com}, is corresponding author. We acknowledge substantial contributions from Roman Kuznets and Ulrich Schmid to this work.} \and Stephan Felber \and Malvin Gattinger \and Hugo Rincon Galeana \and Vaishnavi Sundararajan}
\date{}

\maketitle

\begin{abstract}
We investigate self-correcting gossip protocols with errors. In distributed computing, protocols with errors have been widely investigated in temporal epistemic logics. Instead, we propose a dynamic epistemic logic. 
We show how to correct transmission errors due to faulty messages without a central authority coordinating protocol execution, how this affects optimality, and how this compares to bounded memory and full information protocols. 
\end{abstract}


\section{Introduction}\label{sec.introduction}

In \emph{gossip protocols} \cite{tijdeman:1971,BAKER1972191,west82a,hedetniemietal:1988,kermarrecetal:2007,attamahetal.ecai:2014,AptW18,hvdetal.dynamicgossip:2019,CooperHMMR19,logicofgossiping:2020}, given a set of $n$ \emph{agents}, each agent knows a single \emph{secret} (we can think of this as the identity of that agent, or as all the agent knows); when agents call each other, they exchange all secrets they know; and the goal is for all agents to become \emph{experts}, that is, to know all secrets. Here, we assume that all agents can call each other (are all neighbours), so that there are no network constraints, and we also assume that messages are always received. A stronger epistemic goal is that all agents are \emph{super experts}, that is, all agents know that all agents know all secrets. If only secrets are exchanged, even higher-order epistemic goals are unreachable \cite{hvdetal.lucky:2024}.

The usual assumption in epistemic gossip protocols is that message transmission is correct, and that agents behave correctly. Under these assumptions, and in the absence of network constraints, $2n-4$ calls are optimal to reach the goal that all become experts \cite{tijdeman:1971}, and $n-2 + \binom{n}{2}$ calls are optimal to reach the goal that all become super experts \cite{hvdetal.lucky:2024}. What epistemic goals are still reachable and how is optimality affected when transmission errors may occur during a call, and when agents may behave incorrectly? In distributed computing, bounds on transmission errors are investigated in population protocols \cite{AspnesR07}, the epistemic consequences of faulty behaviour is investigated in \cite{KuznetsP0F19,abs-2106-11499}, and self-correction (self-stabilization) is investigated in \cite{DaliotD05,DolevFPSSL14,dolev:2000}. Faulty agents in epistemic gossip protocols have been investigated in \cite{line:2018,BergG20,tinafurer:2023}. In \cite{BergG20} the authors investigate how the presence of unreliable agents affects whether the gossip protocol can terminate successfully (that is, whether all agents get to know all secrets). They investigate this for dynamic gossip \cite{hvdetal.dynamicgossip:2019} wherein agents not only exchange secrets but also numbers (relaxing network constraints), and for various epistemic gossip protocols including the protocol called $\LNS$ wherein one can only call a neighbour if one does not know her secret. They not only require that all reliable agents get to know the secrets of all reliable agents, but also that the reliable agents get to know who the unreliable agents are --- the latter is an uncommon requirement in distributed computing. Publication \cite{line:2018} is a precursor of \cite{BergG20}, and \cite{tinafurer:2023} is a follow-up proposing yet other variations of $\LNS$ for unreliable agents.

In this work we investigate in depth what happens when at most one transmission error may occur during protocol execution. We assume synchronous communication where agents are only aware of the calls involving them but are aware of a global clock. We show how a transmission error may cause false beliefs, how to correct this without a central authority coordinating protocol execution and how then to obtain epistemic protocol goals. We do not investigate optimality except by examples demonstrating that lower bounds must be higher. We show how our approach compares to (weaker) bounded memory protocols \cite{DH08} and to (stronger) full information protocols \cite{MosesT88}. We later wish to pursue how the same or a similar formalization can be used for self-correcting gossip protocols with at most $f$ faulty messages or with at most $f$ faulty (Byzantine) agents, and for an asynchronous setting. 


\section{Gossip with at most one transmission error} \label{section.onetransmission}

\subsection{Structures, syntax and semantics} \label{section.syntaxsemantics}

\paragraph*{Secret distributions}


Let a set of $n$ \emph{agents} $A = \{a_1,\dots,a_n\}$ be given. We typically assume few agents in which case they are named $a,b,c,\dots$ instead. 
Given the agents $A$, the set of \emph{secrets} is the product $A \times \{0,1\}$, where for $(a,1)$ we write $a$ and for $(a,0)$ we write $\ov{a}$. The overloaded use of agent names as secret values is disambiguated by context. If $B \subseteq A \times \{0,1\}$ is a subset of secrets, we define the \emph{swap} $\pm{a}$ of the values for the secret of agent $a$ in that set as $B^{\pm{a}}:= B[a/\ov{a},\ov{a}/a]$ (if $a$ is in $B$ replace it by $\ov{a}$ and simultaneously if $\ov{a}$ is in $B$ replace it by $a$). Note that $(B^{\pm{a}})^{\pm{a}}=B$, and if $B$ has no information on $a$ then $B^{\pm{a}}=B$.

\begin{definition}[Secret distribution]
A \emph{secret distribution} is a function \[ S: A \imp \power(A \times \{0,1\}) \] For $S(a)$ we write $S_a$ (the \emph{holding} of $a$). In an \emph{initial secret distribution} $\initsecret$, for all $a\in A$, $\initsecret_a =\{a\}$ or $\initsecret_a =\{\ov{a}\}$. The set of initial secret distributions is $\bm{I}$. In {\em {\bf the} initial secret distribution} denoted $\initsecreto$, for all $a\in A$, $\initsecreto_a =\{a\}$.
\end{definition}
A secret distribution lists for each agent what secrets that agent holds and what their values are. We let $S_a^{\inter{b}}$ denote $S_a \inter \{b,\ov{b}\}$. This is the set of values that agent $a$ holds for secret $b$.

It is convenient to have an abbreviated notation for secret distributions: a lexicographically ordered list of $n$ holdings of secrets $S_a$, $S_b$, \dots that are separated by vertical bars $|$, where each $S_a$ is also written as an ordered list but without separation symbols (so $\{a,b,c\}$ becomes $abc$), and where holding two values $b$ and $\ov{b}$ for the same secret is represented as a {\em conflicting value} $\un{b}$.  We can thus view the secrets $S_a$ held by agent $a$ as an \emph{annotated} subset $B$ of $A$. An example is secret distribution $S = ab\un{c}|ab\ov{c}|abc|d$, wherein $S_a = \{a,b,c,\ov{c}\}$, etcetera. We often assume the initial secret distribution $\initsecreto$, for example $a|b|c|d$.

In error-free gossip, at any stage an agent $a$ holds some subset $B \subseteq A$ of all secrets, including its own, and secret distributions are $n$-tuples $B_1|\dots|B_n$ with $B_1,\dots,B_n\subseteq A$. By assuming any such secret $b \in A$ held by $a$ to be a secret value $(b,1)$ in our setting with errors, such error-free secret distributions are now special cases, such as $a|b|c|d$ above.

\paragraph*{Call and call sequence}

In error-free gossip a call is a pair $(a,b)$ where $a \neq b \in A$, denoted $ab$, and which means that $a$ calls $b$, and wherein the agents exchange all their (values of) secrets. We say that $a$ and $b$ are \emph{involved} in the call. In gossip with errors, $ab$ is a \emph{correct call} and we also consider a \emph{faulty call} $a^cb$ where in the call from $a$ to $b$ there is a transmission error made in the secret $c$ held by $a$, so that $b$ receives the other value for secret $c$. Similarly we define $ab^c$ where in the call from $a$ to $b$ there is a transmission error made in the secret $c$ known by $b$. Both a correct call and a faulty call are now a \emph{call}. An arbitrary call is denoted $\kappa$ (for `singleton $\kappa$all sequence'), and an arbitrary call between $a$ and $b$, in either direction, is denoted $ab^\kappa$.

\begin{definition}[Call sequence]
A \emph{call sequence} is a finite sequence of calls containing at most one faulty call. Call sequences are denoted $\sigma,\tau$, and $\epsilon$ is the empty call sequence. We let $\sigma\sqsubseteq\tau$ mean that call sequence $\sigma$ is a \emph{prefix} of call sequence $\tau$, and $\sigma.\tau$ is \emph{concatenation} of call sequences.
\end{definition}

\paragraph{Syntax}

The set of \emph{atoms} (\emph{propositional variables}) $P$ is the product $(A \times \{0, 1\}) \times A$ where, analogously to the convention for secrets, for $((b,0), a)$ we write $\ov{b}_a$ and for $((b,1),a)$ we write $b_a$. 

\begin{definition}[Logical language]
The logical language consists of \emph{formulas} 
\[ \phi  ::= b_a \mid \ov{b}_a \mid \neg \phi \mid \phi \et \phi \mid K_a \phi \] 
\end{definition}
Other propositional connectives are defined by notational abbreviation, $K_a \phi$ stands for `agent $a$ knows that $\phi$', $\M_a \phi := \neg K_a \neg \phi$ stands for `agent $a$ considers $\phi$ possible', $E_B \phi := \Et_{a \in B} K_a\phi$ means that everyone in $B$ knows $\phi$ ($\phi$ is \emph{mutual knowledge} among the agents in $B$). In error-free gossip $b_a$ means that agent $a$ \emph{knows} or \emph{holds} the secret of agent $b$. In our setting with errors $b_a$ means that $a$ \emph{holds value} $b$ of the secret of $b$ whereas $\ov{b}_a$ means that $a$ holds value $\ov{b}$ of secret $b$. We finally define $\Kv_a b := K_a b_b \vel K_a \ov{b}_b$, for `agent $a$ knows (the value of) the secret of $b$' or `$a$ knows secret $b$'. (The notation for knowing value is reminiscent of notation $\Kw_a p$ for `knowing whether $p$', defined as $K_a p \vel K_a \neg p$.) After introducing the semantics we will show how all these epistemic readings relate.

Definitions~\ref{def.semanticscall}, \ref{def.observationrelation} and \ref{def.semanticsformulas} of respectively the semantics of a call, the observation relation, and the semantics of formulas, are defined by simultaneous induction.

\paragraph*{Semantics of a call}
In a call the agents exchange all the secrets they know. The semantics of a call do not depend on the direction of the call, so the semantics of calls $ab$ and $ba$ are the same.  A call from agent $a$ to agent $b$ affects the secret distribution. In standard gossip, if $a$ holds $X \subseteq A$ and $b$ holds $Y \subseteq A$, then after the call agents $a$ and $b$ both hold $X \union Y$. For gossip with errors the semantics of a call are slightly more complex: given set of values of secrets $X$ and $Y$ we still take their union, except when the agent already knew the correct value of a secret $d$ before the call, and also when the agent knows the correct value of a secret $d$ after the call. In those cases, if the union of $X$ and $Y$ contains conflicting values $d$ and $\ov{d}$ for $d$, we then remove the value that is known to be incorrect. 

\begin{definition}[Semantics of call] \label{def.semanticscall}
Given an initial secret distribution $\initsecret$ and a call sequence $\sigma$, the secret distribution $\initsecret[\sigma]$ is defined by induction on $\sigma$. For the basis, $\initsecret[\epsilon] := \initsecret$. For the induction, given call sequence $\sigma$ and $a,b,c,d \in A$ with $a \neq b$ and $d \neq a,b$:
\[\begin{array}{lclcllcl}
\initsecret[\sigma.ab]_a &=& \initsecret[\sigma.ba]_a \ = \  \initsecret[\sigma.a^cb]_a \ = \ \initsecret[\sigma.ba^c]_a &=& (\initsecret[\sigma]_a \union \initsecret[\sigma]_b {\setminus} *) {\setminus} {**} \\
\initsecret[\sigma.ab^c]_a &=& \initsecret[\sigma.b^ca]_a &=& (\initsecret[\sigma]_a \union \initsecret[\sigma]_b^{\pm{c}} {\setminus} *) {\setminus} {**}  \\
\initsecret[\sigma.bd]_a &=& \initsecret[\sigma.b^cd]_a \ = \ \initsecret[\sigma.bd^c]_a &=& \initsecret[\sigma]_a
\end{array}\]
The set $*$ consists of \emph{known incorrect values of secrets} by agent $a$ after call sequence $\sigma$. The set $*$ is defined as: $* = \{ d \mid  \initsecret,\sigma \models K_a \ov{d}_d \} \union \{ \ov{d} \mid  \initsecret,\sigma \models K_a d_d \}$. The set $**$ is defined as follows. For $ab^\kappa = ab,ba,a^cb,ba^c$ the set $**$ consists of all secret values $d$ such that $\initsecretp,\tau \models \ov{d}_d$ for all initial secret distributions $\initsecretp$ and call sequences $\tau$ with $(\initsecret,\sigma)\sim_a (\initsecretp,\tau)$ and $\initsecret[\sigma]_b = \initsecretp[\tau]_b$, and all secret values $\ov{d}$ such that $\initsecretp,\tau \models d_d$ for all $\initsecretp,\tau$ with $(\initsecret,\sigma)\sim_a (\initsecretp,\tau)$ and $\initsecret[\sigma]_b = \initsecretp[\tau]_b$. For $ab^\kappa = ab^c,b^ca$ we replace $\initsecretp[\tau]_b$ by $\initsecretp[\tau]_b^{\pm{c}}$.
\end{definition}

In the definition above, $\models$ is the satisfaction relation of Definition~\ref{def.semanticsformulas}, below. 

The set $**$ is defined in a roundabout way for technical reasons: as the $**$-discarded values define the semantics of a call $ab^\kappa$, we cannot use the more intuitive formulation that set $**$ consists of the known incorrect values after sequence $\sigma.ab^\kappa$, as that would be circular. However, we can still `think' of agent $a$ first receiving the holding of agent $b$, then concluding she now knows a secret $d$ as a result of that, and finally $**$-discarding the value of that secret she now knows to be incorrect. All this is combined in the semantics of a single call.

The sets $*$ and $**$ can both be considered forms of self-correction. Intuitively, the set $\ast$ consists of values of secrets contributed by agent $b$ that agent $a$ refuses to incorporate because she already knows the correct values of those secrets (in case she had self-corrected for those secrets, then she already did so in the past). Whereas intuitively the set $**$ consists of the values of secrets that agent $a$ got to know during the call involving $a$ and $b$ and for which she had a conflict before that call, and that she then discards. Set $*$ is self-correction in the form of persistence of correct values; `refusal' to accept incorrect values. Whereas $**$ is self-correction in the form of `discarding' values held before the call.

As there is at most one transmission error, sets $*$ and $**$ can only be about the same unique secret $d$ in a given call sequence and cannot both be non-empty. But the semantics is presented in a general form that is also suitable for multiple errors.

From here on, $S$ and $T$ always denote \emph{initial} secret distributions (while still being declared as such) while non-initial secret distributions always take shape $S[\sigma]$, $T[\tau]$, $I[\sigma]$, etcetera.

\paragraph*{Observation model}

The \emph{observation model} $\mmodel(\bm{I})$ is the Kripke model $(W,\sim,V)$ where \emph{domain} $W$ consists of \emph{gossip states} $(\initsecret,\sigma)$ for initial secret distributions $\initsecret \in \bm{I}$ and call sequences $\sigma$, where for each $a \in A$ \emph{observation relation} $\sim_a$ between gossip states is defined below, and where \emph{valuation} $V$ maps a gossip state to a secret distribution such that $V(\initsecret,\sigma)= \initsecret[\sigma]$.
\begin{definition}[Observation relation] \label{def.observationrelation} The {\em observation relation} is the equivalence closure of the following recursive clauses by call sequence length, where $S$ and $T$ are initial secret distributions, and where $a,b,c,d,e,f,g \in A$ with $a \neq b$ and $c,d,f,g \neq a$, and the clauses for the other direction of the call between $a$ and $b$, and for the other direction of the faulty call between $c$ and $d$, are the same. 
\[\begin{array}{lll}
(\initsecret,\epsilon) \sim_a (\initsecretp,\epsilon) & \text{iff} & \initsecret_a = \initsecretp_a \\
(\initsecret,\sigma.ab) \sim_a (\initsecretp,\tau.ab) & \text{iff} & (\initsecret,\sigma) \sim_a (\initsecretp,\tau) \text{ and } \initsecret[\sigma]_b = \initsecretp[\tau]_b \\
(\initsecret,\sigma.ab) \sim_a (\initsecretp,\tau.a^eb) & \text{iff} & (\initsecret,\sigma) \sim_a (\initsecretp,\tau) \text{ and } \initsecret[\sigma]_b = \initsecretp[\tau]_b\\
(\initsecret,\sigma.ab) \sim_a (\initsecretp,\tau.ab^e) & \text{iff} & (\initsecret,\sigma) \sim_a (\initsecretp,\tau) \text{ and } \initsecret[\sigma]_b = \initsecretp[\tau]_b^{\pm e} \\
(\initsecret,\sigma.cd) \sim_a (\initsecretp,\tau.fg) & \text{iff} & (\initsecret,\sigma) \sim_a (\initsecretp,\tau) \\ 
(\initsecret,\sigma.c^ed) \sim_a (\initsecretp,\tau.fg) & \text{iff} & (\initsecret,\sigma) \sim_a (\initsecretp,\tau)
\end{array}\]
\end{definition}
If $(\initsecret,\sigma) \sim_a (\initsecretp,\tau)$, then $|\sigma|=|\tau|$. We note that $\sim_a$ therefore defines a \emph{synchronous} observation relation \cite{kermarrecetal:2007,attamahetal.ecai:2014,AptW18,logicofgossiping:2020}: agents are only aware of calls involving them, but are still aware that a call took place if they were not involved in a call; calls are scheduled in `rounds' consisting of single calls thus defining a global clock.

In error-free gossip we only need to consider the initial secret distribution $\initsecreto$, we therefore omit that parameter, and then define: $\epsilon \sim_a \epsilon$, $\sigma.ab \sim_a \tau.ab$ iff $\sigma \sim_a \tau$ and $\sigma_b=\tau_b$, and $\sigma.cd \sim_a \tau.fg$ iff $\sigma \sim_a \tau$ ($\initsecreto[\sigma]_b$ is then denoted $\sigma_b$) \cite{DitmarschGR23,DitmarschEPRS17}. 

In the definition of the observation relation, the restriction to call sequences containing at most one transmission error is only implicit. Given $(\initsecret,\sigma) \sim_a (\initsecretp,\tau)$, in case $\tau$ contains a faulty call, the extension of $\tau$ with $a^eb$ or $ab^e$ is not a call sequence. So, turning the matter around, the clauses involving $\tau.a^eb$ and $\tau.ab^e$ in the definition imply that $\tau$ contains no faulty calls. This means that equivalence classes of $\sim_a$ indistinguishable gossip states for sequences $\sigma$ involving more and more calls, involve fewer and fewer initial secret distributions $\initsecret$, thus eventually resulting in knowledge of secrets, namely when only one secret distribution is considered possible by the agent. 

\paragraph*{Semantics of formulas}
The formulas of the logical language are interpreted in the observation model. 
\begin{definition}[Semantics] \label{def.semanticsformulas}
Given a gossip state $(\initsecret,\sigma)$, the semantics are defined by induction on formula structure.
\[\begin{array}{lll}
\initsecret,\sigma \models b_a & \text{ iff } & b \in \initsecret[\sigma]_a \\
\initsecret,\sigma \models \ov{b}_a & \text{ iff } & \ov{b}\in\initsecret[\sigma]_a \\
\initsecret,\sigma \models \neg\phi & \text{ iff } & \initsecret,\sigma \not\models \phi \\
\initsecret,\sigma \models \phi\et\psi & \text{ iff } & \initsecret,\sigma \models \phi \text{ and } \initsecret,\sigma \models \psi \\
\initsecret,\sigma \models K_a\phi & \text{ iff } & \initsecretp,\tau \models \phi \text{ for all } (\initsecretp,\tau) \sim_a (\initsecret,\sigma)
\end{array}\]
A formula $\phi$ is \emph{valid}, notation $\models\phi$, if $\initsecret,\sigma\models\phi$ for all initial secret distributions $\initsecret$ and call sequences $\sigma$.
\end{definition}
Given initial secret distribution $\initsecret$ and call sequence $\sigma$, \emph{$a$ holds the correct value of secret $b$} if $\initsecret_b\subseteq\initsecret[\sigma]_a^{\inter b}$ (inclusion, because $a$ may hold conflicting values), and \emph{$a$ knows the (correct) value of secret $b$} if $\initsecret,\sigma\models \Kv_a b$, which implies that $\initsecret_b=\initsecret[\sigma]_a^{\inter b}$ (Proposition~\ref{prop.kcorrectb} and its consequences, later). Knowing a secret implies the value is correct. A value is \emph{faulty} if it is not correct. 

There are a whole lot of epistemic or vaguely epistemic readings by now. Let us list them here all together in order to put it in proper perspective. We also attach an epistemic reading to combinations of certain values, as in our setting errors are rare.

\begin{itemize}
\item $b_a$ means that $a$ holds value $b$ of the secret of $b$; 
\item $\ov{b}_a$ means that $a$ holds value $\ov{b}$ of the secret of $b$.
\item $b_a \et \neg \ov{b}_a$ means that $a$ believes that the secret of $b$ is $b$;
\item $\neg b_a \et \ov{b}_a$ means that $a$ believes that the secret of $b$ is $\ov{b}$; 
\item $b_b \et b_a \et \neg \ov{b}_a$ means that $a$ correctly believes that the secret of $b$ is $b$;
\item $\ov{b}_b \et \neg b_a \et \ov{b}_a$ means that $a$ correctly believes that the secret of $b$ is $\ov{b}$; 
\item $\neg b_b \et b_a \et \neg \ov{b}_a$ means that $a$ incorrectly believes that the secret of $b$ is $b$;
\item $\neg \ov{b}_b \et \neg b_a \et \ov{b}_a$ means that $a$ incorrectly believes that the secret of $b$ is $\ov{b}$; 
\item $b_a \et \ov{b}_a$ means that $a$ has conflicting/inconsistent beliefs about the secret of $b$;
\item $\neg b_a \et \neg \ov{b}_a$ means that $a$ has no beliefs about the secret of $b$;
\item $\Kv_a b$ means that $a$ knows the secret of $b$ (that $a$ knows the value of the secret of $b$).
\end{itemize}
We recall that $\Kv_a b$ is defined as $K_a b_b \vel K_a \ov{b}_b$. 
We will show that $\Kv_a b \imp (b_b \et b_a \et \neg \ov{b}_a) \vel (\ov{b}_b \et \neg b_a \et \ov{b}_a)$. Knowledge implies correct belief. However, correct belief does not imply knowledge (of which the simplest example is that $\initsecreto,ab\models b_b \et b_a \et \neg \ov{b}_a$ whereas $\initsecreto,ab\not\models \Kv_a b$, see also Example~\ref{example.dadada}). In our setting, knowledge is  \emph{justified} correct belief. In the final section we summarily discuss the introduction of belief (quasi-)modalities for some of the above.

\paragraph*{The semantics are well-defined} \label{par:SemWellDef} On the one hand, observation relations $\sim_a$ for agent $a$ in the observation model are between gossip states $(\initsecret,\sigma)$ that are valued as secret distributions $\initsecret[\sigma]$ and such secret distributions $\initsecret[\sigma]$ depend on what agent $b$ knows and thus on $\models$ (whether $\initsecret,\sigma\models \Kv_b c$ for secrets $c$). On the other hand, the satisfaction relation $\models$ defines what agent $b$ knows as a function of the observation relation $\sim_b$, of which the definition depends on the holding $\initsecretp[\tau]_b$ of agent $b$ in secret distributions $\initsecretp[\tau]$. We should therefore show that the logical semantics are well-defined. 

Consider the following order $<$ on pairs of call sequences and formulas: \[ (\sigma,\phi) < (\tau,\psi) \ \text{ iff } \ (|\sigma| = |\tau| \text{ and } \phi \text{ is a strict subformula of } \psi) \text{ or } |\sigma| < |\tau|
\] Note that this is a well-founded partial order.

{\bf Satisfaction relation} With a clause $\initsecret,\sigma \models \phi$ we associate a pair $(\sigma,\phi)$. All definiens of the inductive clauses of the definition of $\models$ satisfy the order $<$: $(\sigma,\phi) < (\sigma,\neg\phi)$, $(\sigma,\phi) < (\sigma,\phi\et\psi)$, and $(\tau,\phi) < (\sigma,K_a\phi)$, where we observe that $|\tau|=|\sigma|$. For the basic clauses $\initsecret,\sigma \models b_a$ and $\initsecret,\sigma \models \ov{b}_a$ we need to use the inductive definition of $\initsecret[\sigma]$. Let us deal with $b_a$ where $\ov{b}_a$ is handled similarly. If $\sigma=\epsilon$, then we check whether $b=a$ and $\{a\}= \initsecret_a$ (a pair $(\epsilon,b_a)$ is at the bottom of the order). For the cases $\sigma = \tau.ac^\kappa$ we use that: $(\tau,b_a) < (\tau.ac, b_a),(\tau.a^dc, b_a)$ and $(\tau,b_c) < (\tau.ac, b_a), (\tau.a^dc, b_a)$ or even, when the precall is faulty about $b$, $(\tau,\ov{b}_c) < (\tau.ac^b, b_a)$, whereas if $\sigma = \tau.cd^\kappa$ we more straightforwardly have that $(\tau,b_a) < (\tau.cd^\kappa, b_a)$.

{\bf Observation relation} With a clause $(\initsecret,\sigma) \sim_a (\initsecretp,\tau)$ we associate pairs $(\sigma,\phi)$ and $(\tau,\phi)$ for any $\phi$. We thus get that $(\sigma, \phi), (\tau, \phi) < (\sigma.ab^\kappa,\phi), (\tau.ab^\kappa,\phi)$ because $|\tau|=|\sigma|< |\sigma.ab^\kappa| = |\tau.ab^\kappa|$. Similarly, $(\sigma, \phi), (\tau,\phi) < (\sigma.cd^\kappa,\phi), (\tau.ef^\kappa,\phi)$. This takes care of the $\sim_a$ part of the definiens. However, we also need to show that the identification of holdings in the definiens can be carried out. For that we use that $(\sigma, c_b) < (\sigma.ab^\kappa, c_b)$ and $(\sigma, \ov{c}_b) < (\sigma.ab^\kappa, {c}_b)$ for arbitrary secrets $c$ held by an agent $b$ in $\initsecret[\sigma]_b$, and we proceed similarly for $\initsecretp[\tau]_b$ and $\initsecretp[\tau]_b^{\pm{e}}$.

{\bf Secret distribution} With a clause $\initsecret[\sigma]_a$ (for any $a$) we associate a pair $(\sigma,\phi)$ for any $\phi$. We thus get that $(\sigma,\Kv_a d) < (\sigma.ab^\kappa, d_a), (\sigma.ab^\kappa, \ov{d}_a)$, because $|\sigma| < |\sigma.ab^\kappa|$, so that the test whether $\initsecret,\sigma\models \Kv_a d$, required to determine $\initsecret[\sigma.ab^\kappa]_a$, is well-defined. Finally we have that $(\sigma,d_a) < (\sigma.bc^\kappa, d_a)$ and $(\sigma,\ov{d}_a) < (\sigma.bc^\kappa, \ov{d}_a)$ for $b,c \neq a$.

\medskip

A {\bf Lean} implementation for this logical semantics is available at \url{https://m4lvin.github.io/Gossip-in-Lean/docs/Gossip/Error/Basic.html} and is also summarily described in the Appendix.

\subsection{Semantic results} \label{subsec:SemanticResults}

As the knowledge modality is interpreted on structures with equivalence relations, we get the usual validities and validity preservations for knowledge \cite{hvdetal.handbook:2015}:
\[\begin{array}{ll}
\models K_a \phi \imp \phi \ (\mathbf{T}) & \models K_a (\phi \imp \psi) \imp K_a \phi \imp K_a \psi \ (\mathbf{K}) \\
\models K_a \phi \imp K_a K_a \phi \ (\mathbf{4}) & \models \phi \text{ implies } \models K_a \phi \ (\mathbf{Nec}) \\
\models \neg K_a \phi \imp K_a \neg K_a \phi \ (\mathbf{5}) \qquad \ 
\end{array}\]

As a sanity check we show a number of elementary properties of the observation relation and the epistemic semantics.

First in line is that when two gossip states are indistinguishable for an agent, then the agent must hold the same values of secrets in both.

\begin{lemma} \label{lemma.simtolocal}
$(\initsecret,\sigma) \sim_a (\initsecretp,\tau)$ implies $\initsecret[\sigma]_a = \initsecretp[\tau]_a$.
\end{lemma}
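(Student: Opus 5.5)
Since $\sim_a$ is defined as the equivalence closure of the recursive clauses of Definition~\ref{def.observationrelation}, and equality of holdings $\initsecret[\sigma]_a = \initsecretp[\tau]_a$ is itself an equivalence relation on gossip states, it suffices to show that every generating clause relates only states with equal $a$-holdings; reflexivity, symmetry and transitivity are then inherited automatically. (Strictly, the equivalence closure is taken stage by stage in call sequence length, so I would run an induction on $|\sigma| = |\tau|$. The claim is that at each length the generated pairs preserve $a$'s holding, and so does their closure.)

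\textbf{Base case.} For $\sigma = \tau = \epsilon$, the clause requires $\initsecret_a = \initsecretp_a$, which is exactly $\initsecret[\epsilon]_a = \initsecretp[\epsilon]_a$.

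\textbf{Calls not involving $a$.} For the clauses $(\initsecret,\sigma.cd) \sim_a (\initsecretp,\tau.fg)$ and $(\initsecret,\sigma.c^ed) \sim_a (\initsecretp,\tau.fg)$ with $c,d,f,g \neq a$, the third line of Definition~\ref{def.semanticscall} gives $\initsecret[\sigma.cd^\kappa]_a = \initsecret[\sigma]_a$, and likewise for $\tau$. The induction hypothesis applied to $(\initsecret,\sigma)\sim_a(\initsecretp,\tau)$ finishes the case.

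\textbf{Calls between $a$ and $b$.} This is the main step. Take the clause $(\initsecret,\sigma.ab)\sim_a(\initsecretp,\tau.ab)$. By the induction hypothesis $\initsecret[\sigma]_a = \initsecretp[\tau]_a$, and by the clause $\initsecret[\sigma]_b = \initsecretp[\tau]_b$, so the unions coincide. It remains to check that the removed sets $*$ and $**$ coincide on both sides.
\begin{itemize}
\item The set $*$ is determined by which formulas $K_a d_d$ and $K_a \ov{d}_d$ hold at $(\initsecret,\sigma)$. These agree at $(\initsecretp,\tau)$ because $K_a$-formulas are constant on $\sim_a$-classes, and $\sim_a$ is an equivalence relation.
\item The set $**$ quantifies over all $(\initsecretp',\tau')\sim_a(\initsecret,\sigma)$ with $\initsecretp'[\tau']_b$ equal to the given $b$-holding. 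Both the $\sim_a$-class and that holding are the same for the two sides, so $**$ is literally the same set.
\end{itemize}
For the mixed clause with $\tau.a^eb$, the semantics of $a$'s holding after $a^eb$ is the same as after $ab$ (the error affects what $b$ receives), so the same argument applies. For the clause with $\tau.ab^e$, the update on the $\tau$ side uses $\initsecretp[\tau]_b^{\pm e}$, both in the union and in the definition of $**$. The clause supplies exactly $\initsecret[\sigma]_b = \initsecretp[\tau]_b^{\pm e}$, so again the unions and both discard sets match. The reverse directions ($ba$, $ba^e$, $b^ea$) are handled by the same clauses in Definition~\ref{def.semanticscall}.

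The only subtle point is the appeal to $K_a$-invariance for $*$ and $**$ at length $|\sigma|$. This is legitimate because those sets refer to the relation $\sim_a$ at shorter lengths, which is already an equivalence relation by construction. This relies on the well-foundedness argument given in the paragraph on well-definedness.
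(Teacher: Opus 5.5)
Your proof is correct and follows the paper's argument. The paper also inducts on call-sequence length and treats calls not involving $a$ as trivial. For calls between $a$ and $b$, the unions coincide by the induction hypothesis plus the clause's condition on $b$'s (possibly $\pm e$-swapped) holding, and $*$ and $**$ coincide because $a$'s $\sim_a$-class at the shorter length and the set received from $b$ are the same.

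One difference is worth noting. You check only the generating clauses and then inherit the equivalence closure, using that equality of $a$-holdings is itself an equivalence relation. This is slightly more rigorous than the paper's direct inversion of $\sim_a$: it automatically covers pairs related only through transitivity, such as $\sigma.a^eb$ versus $\tau.ab^{e'}$, which the paper leaves implicit.
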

\begin{proof}
The proof is by induction on the length of call sequence $\sigma$. 

If $\sigma=\epsilon$, then $\sigma=\tau = \epsilon$, and $(\initsecret,\epsilon) \sim_a (\initsecretp,\epsilon)$ implies that $\initsecret_a = \initsecretp_a$.

Let now $\sigma=\sigma'.\kappa$ where call $\kappa$ involves agents $a$ and $b$. We observe that $\tau$ must then have shape $\tau'.\kappa'$ where $\kappa'$ also involves $a$ and $b$ (and where the direction of the call is the same in $\kappa$ and $\kappa'$). In all such cases, by the definition of the observation relation it then follows from $(\initsecret,\sigma'.\kappa)\sim_a(\initsecretp,\tau'.\kappa')$ that $(\initsecret,\sigma')\sim_a(\initsecretp,\tau')$ so that by inductive assumption we already have $\initsecret[\sigma']_a = \initsecretp[\tau']_a$.

We now distinguish the different cases. If $\kappa = ab$ and $\kappa'$ is $ab$ or $a^eb$, then from $(\initsecret,\sigma'.\kappa)\sim_a(\initsecretp,\tau'.\kappa')$ we also obtain that $\initsecret[\sigma']_b = \initsecretp[\tau']_b$. From that and $\initsecret[\sigma']_a = \initsecretp[\tau']_a$ we then obtain that $\initsecret[\sigma']_a \union \initsecret[\sigma']_b = \initsecretp[\tau']_a \union \initsecretp[\tau']_b$ and therefore also, as $a$'s knowledge is the same in $(\initsecret,\sigma')$ and $(\initsecretp,\tau')$, that $\initsecret[\sigma']_a \union \initsecret[\sigma']_b{\setminus}* = \initsecretp[\tau']_a \union \initsecretp[\tau']_b{\setminus}*$. Furthermore, as $(\initsecret,\sigma') \sim_a (\initsecretp,\tau')$ and $\initsecret[\sigma']_b = \initsecretp[\tau']$, we also have that $(\initsecret[\sigma']_a \union \initsecret[\sigma']_b{\setminus}*){\setminus}{**} = (\initsecretp[\tau']_a \union \initsecretp[\tau']_b{\setminus}*){\setminus}{**}$, which means by definition that $\initsecret[\sigma'.\kappa]_a = \initsecretp[\tau'.\kappa]_a$. 

If $\kappa'$ is $ab^e$ we proceed almost as before, except that from $(\initsecret,\sigma'.\kappa)\sim_a(\initsecretp,\tau'.\kappa')$ we now obtain that $\initsecret[\sigma']_b = \initsecretp[\tau']_b^{\pm{e}}$, and then determine $a$'s novel holding of values in $(\initsecretp,\tau'.ab^e)$ as $(\initsecretp[\tau']_a \union \initsecretp[\tau']_b^{\pm{e}}{\setminus}*){\setminus}{**}$, wherein the value of $e$ is swapped.

All other cases for calls $\kappa$ involving $a$ and $b$ proceed similarly (using symmetric closure of the definition of the observation relation).

Let finally $\sigma=\sigma'.bc^\kappa$ where call $bc^\kappa$ involves agents $b$ and $c$ (in either order) different from $a$. Then $\tau$ must have shape $\tau=\tau'.de^{\kappa'}$ for some $d,e \neq a$. From $(\initsecret,\sigma'.bc^\kappa)\sim_a(\initsecretp,\tau'.de^{\kappa'})$ we then obtain that $(\initsecret,\sigma')\sim_a(\initsecretp,\tau')$ so that by inductive assumption we already have $\initsecret[\sigma']_a = \initsecretp[\tau']_a$ and therefore as well $\initsecret[\sigma'.bc^\kappa]_a = \initsecretp[\tau'.de^{\kappa'}]_a$, as required.
\end{proof}

As our structures are distributed systems where all propositional variables are local to agents, the truth value of an atom local to agent $a$ (`held by $a$') is known by that agent. This is also entirely as expected.

\begin{lemma}[Locality] \label{lemma.localll}
$\models b_a \imp K_a b_a$, $\models \neg b_a \imp K_a \neg b_a$, $\models \ov{b}_a \imp K_a \ov{b}_a$, and $\models \neg \ov{b}_a \imp K_a \neg \ov{b}_a$. 
\end{lemma}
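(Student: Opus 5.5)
This is a direct consequence of Lemma~\ref{lemma.simtolocal}. I will prove the first validity in detail; the other three follow by the same argument.

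Fix an arbitrary gossip state $(\initsecret,\sigma)$ with $\initsecret,\sigma\models b_a$. By Definition~\ref{def.semanticsformulas} this means $b\in\initsecret[\sigma]_a$. To show $\initsecret,\sigma\models K_a b_a$, take any $(\initsecretp,\tau)$ with $(\initsecretp,\tau)\sim_a(\initsecret,\sigma)$. Since $\sim_a$ is an equivalence relation, this is the same as $(\initsecret,\sigma)\sim_a(\initsecretp,\tau)$. Lemma~\ref{lemma.simtolocal} then gives $\initsecret[\sigma]_a=\initsecretp[\tau]_a$. Hence $b\in\initsecretp[\tau]_a$, that is, $\initsecretp,\tau\models b_a$. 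As $(\initsecretp,\tau)$ was arbitrary, $\initsecret,\sigma\models K_a b_a$. Since $(\initsecret,\sigma)$ was arbitrary, $\models b_a\imp K_a b_a$.

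For $\neg b_a\imp K_a\neg b_a$, use the same equality of holdings with $b\notin\initsecret[\sigma]_a$ in place of $b\in\initsecret[\sigma]_a$. The two remaining validities, for $\ov{b}_a$ and $\neg\ov{b}_a$, are obtained by replacing $b$ with $\ov{b}$ throughout.

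The only point needing care is applying Lemma~\ref{lemma.simtolocal} in the correct direction. This is settled by the symmetry of $\sim_a$, which holds because the observation relation is defined as an equivalence closure. Lemma~\ref{lemma.simtolocal} itself was proved for the generating clauses, so one might ask whether it extends to the full equivalence closure. It does: equality of holdings is itself reflexive, symmetric and transitive, so it is preserved under taking the closure. I expect no further obstacle.
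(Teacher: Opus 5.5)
Your proof is correct and is the paper's argument: the atoms $b_a$ and $\ov{b}_a$ depend only on $a$'s holding, so the semantics of $K_a$ together with Lemma~\ref{lemma.simtolocal} gives all four validities. Your closing remark about the equivalence closure is a sound extra check, but it is not needed, since Lemma~\ref{lemma.simtolocal} is already stated for $\sim_a$ itself.
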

\begin{proof}
According to the logical semantics variables $b_a$ and $\ov{b}_a$ are local to agent $a$, that is $\initsecret,\sigma\models b_a$ iff $b \in \initsecret[\sigma]_a$ and $\initsecret,\sigma\models \ov{b}_a$ iff $\ov{b} \in \initsecret[\sigma]_a$. The required then follows from that, the semantics of knowledge, and Lemma~\ref{lemma.simtolocal} that $(\initsecret,\sigma) \sim_a (\initsecretp,\tau)$ implies $\initsecret[\sigma]_a = \initsecretp[\tau]_a$.
\end{proof}

The next Lemma~\ref{lemma.stubborn} says that \emph{own secrets} (the value of secret $a$ held by agent $a$) are preserved. This is a consequence of the modelling assumption that all agents initially only hold their own secret. We call this \emph{stubbornness} as it implies that even when later confronted with another value for their own secret, agents will always refuse to accept that incorrect value. Note that we do not require or stipulate this property in our semantics, but that it can be shown, it is a consequence of the semantics as defined.
\begin{lemma}[Stubbornness] \label{lemma.stubborn}
(i) $\initsecret, \epsilon \models a_a$ implies $\initsecret, \sigma \models a_a$, and  
(ii) $\initsecret, \epsilon \models \ov{a}_a$ implies $\initsecret, \sigma \models \ov{a}_a$. 
\end{lemma}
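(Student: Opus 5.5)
Lemma~\ref{lemma.stubborn} is proved by induction on the length of $\sigma$, and more precisely by proving something stronger: for all $\sigma$, if $\initsecret_a=\{a\}$ then $a\in\initsecret[\sigma]_a$, and moreover $\initsecret,\sigma\models K_a a_a$. Case (ii) is the same argument with $a$ and $\ov{a}$ swapped. The knowledge clause is the key to the induction, because it is what makes the set $*$ block an incorrect value of the own secret. Note that $K_a a_a$ concerns the holding of $a$ itself, and $a_a$ is the atom $(a,1)$ held by $a$.

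\emph{Base case.} Suppose $\sigma=\epsilon$. By the first clause of Definition~\ref{def.observationrelation}, $(\initsecret,\epsilon)\sim_a(\initsecretp,\epsilon)$ means $\initsecretp_a=\initsecret_a=\{a\}$. Equivalence closure does not change this, since the condition is already an equivalence. Hence $a_a$ holds in every state $a$ considers possible, and so $K_a a_a$ holds.

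\emph{Inductive step.} For the knowledge part I will not reason call by call. Instead I use a general observation: if $(\initsecret,\sigma)\sim_a(\initsecretp,\tau)$, then $\initsecret_a=\initsecretp_a$. This follows by an easy induction on Definition~\ref{def.observationrelation}, since every clause reduces to the prefix and the base clause demands equal initial holdings. Given that, in every $(\initsecretp,\tau)$ indistinguishable from $(\initsecret,\sigma)$ we have $\initsecretp_a=\{a\}$. The induction hypothesis, applied to $(\initsecretp,\tau)$ with $|\tau|=|\sigma|$, then gives $a\in\initsecretp[\tau]_a$. So I will phrase the induction hypothesis as ``for every initial $\initsecretp$ with $\initsecretp_a=\{a\}$ and every $\tau$ of length $n$, $a\in\initsecretp[\tau]_a$''. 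Under that phrasing the knowledge claim at length $n$ follows immediately.

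\emph{Membership part.} Let $\sigma=\sigma'.\kappa$.
\begin{itemize}
\item If $\kappa$ does not involve $a$, then $\initsecret[\sigma]_a=\initsecret[\sigma']_a$ by Definition~\ref{def.semanticscall}, so membership is preserved.
\item If $\kappa$ involves $a$ and $b$, then $\initsecret[\sigma]_a=(\initsecret[\sigma']_a\union X{\setminus}*){\setminus}{**}$, where $X$ is $b$'s holding, possibly swapped. Since $a\in\initsecret[\sigma']_a$ by hypothesis, I only need to check that $a\notin{**}$.
\end{itemize}

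The main obstacle is the exclusion of $a$ from ${**}$. By definition, $a\in{**}$ requires $\initsecretp,\tau\models\ov{a}_a$ for all $(\initsecretp,\tau)\sim_a(\initsecret,\sigma')$ satisfying the matching condition on $b$'s holding. I plan to show that at least one such state exists, namely $(\initsecret,\sigma')$ itself, and that $\ov{a}_a$ fails there.

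For existence: in the cases $ab$, $ba$, $a^cb$, $ba^c$ the condition $\initsecret[\sigma']_b=\initsecret[\sigma']_b$ holds trivially. In the cases $ab^c$, $b^ca$ the condition involves a swap. I will handle this either by exhibiting a suitable state or, more simply, by noting that ${**}$ is defined for fixed $b$ and might be vacuous. If it is vacuous, then $a\in{**}$ could happen. To avoid that, I will argue that $(\initsecret,\sigma')$ is a witness, because the swapped condition is read relative to the actual call. If that reading fails, I would fall back on proving a stronger invariant, namely that $\ov{a}\notin\initsecret[\sigma']_a$ always.

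For the failure of $\ov{a}_a$: the stronger invariant also gives this. The reason is that $*$ contains $\ov{a}$ whenever $K_a a_a$ holds, which the induction hypothesis supplies. So $a$ never accepts $\ov{a}$, hence $\ov{a}\notin\initsecret[\sigma']_a$, and therefore $(\initsecret,\sigma')\not\models\ov{a}_a$. This gives $a\notin{**}$ and completes the induction.
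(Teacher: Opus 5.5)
Your skeleton matches the paper's: induction on $\sigma$, the case where $a$ is not involved is immediate, and otherwise one shows that the value $a$ survives the discard sets. You obtain $K_a a_a$ from a hypothesis quantified over all $(\initsecretp,\tau)$ of the same length with $\initsecretp_a=\{a\}$, rather than from Lemma~\ref{lemma.localll}, and that is fine.

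Your invariant $\ov{a}\notin\initsecret[\sigma]_a$ is actually needed. The paper's step ``$K_a a_a$, hence $a\notin{**}$'' only yields $a_a$ in all matching states. That does not exclude $\ov{a}_a$ holding in all of them as well. Your observation that $\ov{a}\in *$ as soon as $K_a a_a$ holds closes exactly this gap. It also rules out $a\in *$ if the removal of $*$ is read as acting on the whole union, since that would need $K_a\ov{a}_a$. It is the biconditional form in which the Lean version states the lemma. Build it into the induction hypothesis from the start, as a simultaneous claim ``$a\in\initsecretp[\tau]_a$ and $\ov{a}\notin\initsecretp[\tau]_a$'' for all $(\initsecretp,\tau)$ of length $n$ with $\initsecretp_a=\{a\}$, rather than deriving it at the end.

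The genuine problem is the faulty-call case $ab^c$, $b^ca$, where your ``fallback'' does not work. If no state satisfies the matching condition, then $a\in{**}$ holds vacuously, whatever $a$ holds. So $\ov{a}\notin\initsecret[\sigma']_a$ cannot substitute for the existence of a witness.

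Under the literal condition $\initsecret[\sigma']_b=\initsecretp[\tau]_b^{\pm c}$ the matching set really can be empty. Take $\initsecret=a|b$, $\sigma'=ab.ab$ and $\kappa=ab^b$. The condition asks for $\initsecretp[\tau]_b=\{a,\ov{b}\}$, but in no state that $a$ considers possible after $ab.ab$ does $b$ hold $\ov{b}$. So ${**}$ contains all four values, and $a$'s holding after $ab.ab.ab^b$ would be empty. This contradicts the lemma and also the claim in Example~\ref{example.dadada} that $(a|b)[ab.ab.ab^b]=ab|ab$.

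So the lemma only holds under the reading that the paper's examples require, and that matches its informal gloss of ${**}$ as the values $a$ knows to be incorrect once she has received $b$'s holding. On that reading, the matching states are all $(\initsecretp,\tau)\sim_a(\initsecret,\sigma')$ from which \emph{some} call between $a$ and $b$, correct or faulty, delivers to $a$ the set she actually received. This set always contains $(\initsecret,\sigma')$ with the actual call, and the paper's proof relies on this tacitly. State this reading of Definition~\ref{def.semanticscall} explicitly and drop the fallback. Then your argument is complete, and more careful than the paper's.
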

\begin{proof}
We show the first, where the second follows similarly. The proof is by induction on the length of call sequences $\sigma$. Assume $\initsecret, \epsilon \models a_a$. For $\sigma=\epsilon$ it is trivial.
Let now $\sigma=\tau.\kappa$. By induction we may assume that $\initsecret, \epsilon \models a_a$ implies $\initsecret, \sigma \models a_a$. Therefore $\initsecret, \sigma \models a_a$. We now have to show that for an arbitrary next call $\kappa$, $\initsecret, \sigma.\kappa \models a_a$. Consider Definition~\ref{def.semanticscall} of the call semantics. If agent $a$ is not involved in $\kappa$, $\initsecret[\sigma.\kappa]_a = \initsecret[\sigma]_a$. Therefore, as $a \in \initsecret[\sigma]_a$, also $a \in \initsecret[\sigma.\kappa]_a$. If agent $a$ is involved in $\kappa$, we take the union of the holdings of agent $a$ and of the other agent involved in that call, let us say $b$, with the exception of the values withheld in the $*$ and the $**$ sets. So, if we can prove that secret value $a$ is never a member of $*$ or $**$, it must be in that union, so that again $a \in \initsecret[\sigma.\kappa]_a$. Concerning $*$, the secret value withheld is the value that is different from the value that is known by agent $a$. We now observe that from $\initsecret, \sigma \models a_a$ and Locality Lemma~\ref{lemma.localll} it follows that $\initsecret, \sigma \models K_a a_a$. Therefore $a \notin *$. Concerning $**$, the secret value withheld is the value that is different from the value such that for all $(\initsecretp,\tau)$ such that $(\initsecretp,\tau) \sim_a (\initsecret,\sigma)$ and another requirement on $b$, gossip state $(\initsecretp,\tau)$ makes true that value. As $\initsecret, \sigma \models K_a a_a$, therefore $\initsecretp, \tau \models a_a$. So $a \notin **$. As $a \notin *$ and $a \notin **$, again we conclude that $a \in \initsecret[\sigma.\kappa]_a$.
\end{proof}
As the result holds for value $a$ and for value $\ov{a}$, and as an agent initially only holds a single value for its own secret, therefore an agent always only holds a single value for its own secret. A different way to describe this result is to say that for all initial secret distributions $\initsecret$ and call sequences $\sigma$, $\initsecret_a =  \initsecret[\sigma]_a^{\inter a}$.

\begin{lemma}[Knowledge of secrets is preserved] \label{lemma.preservationofknowledge}
(i) $\initsecret,\sigma\models K_a b_b$ and $\sigma \sqsubseteq \tau$ implies $\initsecret,\tau\models K_a b_b$, and
(ii) $\initsecret,\sigma\models K_a \ov{b}_b$ and $\sigma \sqsubseteq \tau$ implies $\initsecret,\tau\models K_a \ov{b}_b$.
\end{lemma}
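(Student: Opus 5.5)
It suffices to treat a single call extension: if $\initsecret,\sigma\models K_a b_b$ then $\initsecret,\sigma.\kappa\models K_a b_b$ for every call $\kappa$ such that $\sigma.\kappa$ is a call sequence. The general claim then follows by induction on $|\tau|-|\sigma|$. Part (ii) is identical with $\ov{b}$ in place of $b$, so I only discuss (i).

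The key idea is to work backwards along the observation relation. I would first prove an auxiliary lemma:
\begin{quote}
if $(\initsecret,\sigma.\kappa)\sim_a(\initsecretp,\tau')$, then $\tau'=\tau.\kappa'$ for some call $\kappa'$ and $(\initsecret,\sigma)\sim_a(\initsecretp,\tau)$.
\end{quote}
For the generating clauses of Definition~\ref{def.observationrelation} this is immediate, since every clause relating extended states has $(\initsecret,\sigma)\sim_a(\initsecretp,\tau)$ among its conditions. The only care needed is the equivalence closure. A chain of generating steps between length-$(|\sigma|+1)$ states projects, step by step, to a chain between their length-$|\sigma|$ prefixes. Each projected step is itself a $\sim_a$ step, because $\sim_a$ is already an equivalence relation at the shorter length, so the projection again lies in $\sim_a$.

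With the auxiliary lemma, the main argument runs as follows. Take any $(\initsecretp,\tau.\kappa')\sim_a(\initsecret,\sigma.\kappa)$. By the lemma, $(\initsecretp,\tau)\sim_a(\initsecret,\sigma)$, so the hypothesis $\initsecret,\sigma\models K_a b_b$ gives $\initsecretp,\tau\models b_b$, that is, $b\in\initsecretp[\tau]_b$. The atom $b_b$ concerns only agent $b$'s value for its own secret. By Stubbornness (Lemma~\ref{lemma.stubborn}) and the remark following it, $\initsecretp[\tau]_b^{\inter b}=\initsecretp_b=\initsecretp[\tau.\kappa']_b^{\inter b}$. Hence $b\in\initsecretp[\tau.\kappa']_b$, i.e.\ $\initsecretp,\tau.\kappa'\models b_b$. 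Since $(\initsecretp,\tau.\kappa')$ was an arbitrary $\sim_a$-successor, this establishes $K_a b_b$ at $(\initsecret,\sigma.\kappa)$.

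I expect the main obstacle to be the auxiliary lemma, specifically handling the equivalence closure together with the implicit "at most one faulty call" restriction. One has to check that the projected intermediate states of a chain are genuine gossip states, which holds because prefixes of call sequences are call sequences. One also has to check that $\sim_a$ at length $|\sigma|$ is defined independently of the longer lengths, so that the projection argument is not circular; this holds since the clauses are recursive by length. Everything else reduces directly to Stubbornness.
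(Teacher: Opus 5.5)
Your proof is correct and follows the paper's own argument: induction on the length of the extension, projecting an arbitrary $\sim_a$-related state of the extended sequence back to a $\sim_a$-related state of the shorter one, and then using Stubbornness to carry $b_b$ forward. Two parts of your write-up are more careful than the paper's, without changing the route. You treat the equivalence closure explicitly. You also handle calls not involving $a$ uniformly via the projection lemma, whereas the paper's non-involved case compares states of different lengths and omits the appeal to Stubbornness.
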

\begin{proof}
We show the first, where the proof of the second is similar. The proof is by induction on the length of $\tau{\setminus}\sigma$. If $\tau{\setminus}\sigma=\epsilon$ then $\sigma=\tau$ and it is trivial. Let us now show this for $\tau.\kappa$, on the inductive assumption it holds for $\tau$ with $\sigma\sqsubseteq\tau$. Therefore assume arbitrary $(\initsecret',\tau'')$ such that $(\initsecret',\tau'') \sim_a (\initsecret,\tau.\kappa)$. If agent $a$ is not involved in $\kappa$, from the definition of $\sim_a$ it follows that we already had $(\initsecret',\tau'') \sim_a (\initsecret,\tau)$. From that and inductive assumption $\initsecret,\tau\models K_a b_b$ we then get $\initsecret',\tau''\models b_b$, and therefore $\initsecret,\tau.\kappa \models K_a b_b$. If agent $a$ is involved in $\kappa$, then $\tau''$ must have shape $\tau'.\kappa'$. We then have $(\initsecret',\tau'.\kappa') \sim_a (\initsecret,\tau.\kappa)$, and from the definition of $\sim_a$ it then follows that we already must have had that $(\initsecret',\tau') \sim_a (\initsecret,\tau)$. From that and inductive assumption $\initsecret,\tau\models K_a b_b$ it now follows that for all such $(\initsecret',\tau')$ we have that $\initsecret',\tau' \models b_b$. From that and Stubbornness Lemma~\ref{lemma.stubborn} it then follows that $\initsecret',\tau'.\kappa' \models b_b$. As $(\initsecret',\tau'.\kappa')$ was arbitrary, therefore $\initsecret,\tau.\kappa \models K_a b_b$.
\end{proof}
As a corrollary of Lemma~\ref{lemma.preservationofknowledge} we obviously have that:
\begin{corollary} \label{corollary.preservationofknowledge}
$\initsecret,\sigma\models \Kv_a b$ and $\sigma \sqsubseteq \tau$ implies $\initsecret,\tau\models \Kv_a b$.
\end{corollary}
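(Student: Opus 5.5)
The corollary follows directly from Lemma~\ref{lemma.preservationofknowledge} by unfolding the definition of $\Kv_a b$ as the disjunction $K_a b_b \vel K_a \ov{b}_b$ and doing a case split on which disjunct holds at $(\initsecret,\sigma)$. No new induction is needed, since all the work on call sequences was already done in the lemma.

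Assume $\initsecret,\sigma\models \Kv_a b$ and $\sigma \sqsubseteq \tau$. By the semantics of disjunction (defined from $\neg$ and $\et$ in Definition~\ref{def.semanticsformulas}), either $\initsecret,\sigma\models K_a b_b$ or $\initsecret,\sigma\models K_a \ov{b}_b$.

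In the first case, part (i) of Lemma~\ref{lemma.preservationofknowledge} gives $\initsecret,\tau\models K_a b_b$. In the second case, part (ii) gives $\initsecret,\tau\models K_a \ov{b}_b$. In either case the corresponding disjunct holds at $(\initsecret,\tau)$, so $\initsecret,\tau\models \Kv_a b$.

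The only point that might need care is that the lemma was stated for the same initial distribution $\initsecret$ and an extension $\tau$ of $\sigma$. This is exactly the shape of the corollary, so there is no real obstacle.
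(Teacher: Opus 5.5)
Your proof is correct and takes the same approach as the paper, which states this corollary as an immediate consequence of Lemma~\ref{lemma.preservationofknowledge}, with no separate proof. Unfolding $\Kv_a b$ into $K_a b_b \vel K_a \ov{b}_b$ and applying part (i) or (ii) of that lemma to the disjunct that holds is exactly the intended argument.
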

This seems to describe Lemma~\ref{lemma.preservationofknowledge} more succinctly and intuitively. But it is slightly weaker, as the formulation does not rule out that $\initsecret,\sigma\models K_a b_b$ whereas $\initsecret,\sigma.\kappa\models K_a \ov{b}_b$. 

Although knowledge of secrets is preserved, belief of secrets may not be preserved: if $K_a b_b$ is true now it remains true forever, whereas if $K_a b_a$ is true now it may be false after error correction such that $K_a \neg b_a$ is true later (and such that $K_a b_a$ is then false, as this is inconsistent with $K_a \neg b_a$).

\medskip

In the next Proposition~\ref{prop.kcorrectb} we show that knowledge implies correct belief. In the proof of item $(i)$ of this proposition we need a lemma, that we therefore present first. This Lemma~\ref{wieditleestisgek} requires additional terminology.

For arbitrary initial secret distributions $\initsecret$ we define $\initsecret^{\pm{b}}$ as the initial secret distribution that is like $\initsecret$ except that agent $b$ holds the other value of its own secret, that is, using our already defined notation for swapping values in secret holdings, $(\initsecret^{\pm{b}})_b = \initsecret_b^{\pm{b}}$ ($= \initsecret[\epsilon]_b^{\pm{b}}$), and for all $d \neq b$, $(\initsecret^{\pm{b}})_d = \initsecret_d$. Furthermore, for arbitrary call sequences $\sigma$ we define $\cor_b(\sigma)$ as the call sequence that is either $\sigma$ or wherein, if $\sigma$ contains a faulty call with a transmission error for (the value of) secret $b$, this faulty call is replaced by a correct call between the same agents: $\cor_b(\epsilon):=\epsilon$, $\cor_b(\sigma.ac) := \cor_b(\sigma).ac$, $\cor_b(\sigma.a^dc) := \cor_b(\sigma).a^dc$ and $\cor_b(\sigma.ac^d) := \cor_b(\sigma).ac^d$ (where $d \neq b$), $\cor_b(\sigma.a^bc) = \cor_b(\sigma.ac^b) := \cor_b(\sigma).ac$.\footnote{For example, $\cor_b(ab.a^bc.cd) = ab.ac.cd$, wherein the transmission error for secret $b$ is corrected; $\cor_b(ab.ac^c{\!\!}.cd) = ab.ac^c{\!\!}.cd$, although there was a transmission error, it was for another secret than $b$; $\cor_b(ab.ac.cd) = ab.ac.cd$, as there was no transmission error.} 

\begin{lemma} \label{wieditleestisgek}
For all initial secret distributions $\initsecret$, call sequences $\sigma$, and agents $a,b \in A$: 
If $\initsecret_b = \{b\}$, and $\initsecret[\sigma]_a^{\inter b}= \emptyset$ or $\initsecret[\sigma]_a^{\inter b}= \{\ov{b}\}$, then $(\initsecret, \sigma) \sim_a (\initsecret^{\pm{b}}, \cor_b(\sigma))$.\footnote{Obviously we also have, with the roles of $b$ and $\ov{b}$ swapped: If $\initsecret_b = \{\ov{b}\}$, and $\initsecret[\sigma]_a^{\inter b}= \emptyset$ or $\initsecret[\sigma]_a^{\inter b}= \{b\}$, then $(\initsecret, \sigma) \sim_a (\initsecret^{\pm{b}}, \cor_b(\sigma))$. By analogy that is needed in item $(ii)$ of Proposition~\ref{prop.kcorrectb}.}
\end{lemma}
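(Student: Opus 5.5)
Induction on the length of $\sigma$ looks like the natural route. We will prove the claim together with an auxiliary invariant. Write $\initsecret' := \initsecret^{\pm b}$ and $\sigma' := \cor_b(\sigma)$. For every agent $x$ we claim
\[
\initsecret'[\sigma']_x = \initsecret[\sigma]_x^{\pm b} \quad\text{whenever } \initsecret[\sigma]_x^{\inter b}\ne\{b,\ov{b}\},
\]
while in the conflicting case both values of $b$ are present in both holdings and the two holdings agree elsewhere. So the mirrored run differs from the original only by swapping the value of secret $b$. In the mirrored run, $b$'s true secret is $\ov{b}$; the faulty call for $b$ (if any) is replaced by a correct one. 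Transmitted values of $b$ that were correct in $\sigma$ are therefore swapped in $\sigma'$, and the single faulty $\ov{b}$ in $\sigma$ becomes the correct value $\ov{b}$ in $\sigma'$. The hypothesis on $a$ says $a$ holds no value of $b$ or only $\ov{b}$. In the mirror, that is no value of $b$ or only the (now correct) value $\ov{b}$, so $a$'s holdings coincide: $\{\ov{b}\}^{\pm b}=\{b\}$ becomes $\{\ov{b}\}$ again? Care is needed here. I will phrase the invariant as follows: $\initsecret'[\sigma']_x$ is obtained from $\initsecret[\sigma]_x$ by swapping only the values of $b$ that arrived via correct transmission, and keeping the faulty one. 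For $a$ under the hypothesis this yields equality, since $a$ holds no value of $b$ or only the faulty value $\ov{b}$.

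The base case $\sigma=\epsilon$ is immediate. If $a\ne b$, then $\initsecret_a=\initsecret'_a$. If $a=b$, the hypothesis fails because $\initsecret_b=\{b\}$, by Stubbornness Lemma~\ref{lemma.stubborn}.

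For the step $\sigma=\rho.\kappa$, note first that the hypothesis on $a$ is inherited by $\rho$. Values of $b$ are only ever added, except via the $*$ and $**$ discards. A discard of $b$ would require $a$ to know $\ov{b}_b$, which is impossible since $\initsecret_b=\{b\}$ and knowledge is factive ($\mathbf{T}$). So by induction $(\initsecret,\rho)\sim_a(\initsecret',\cor_b(\rho))$.

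If $a$ is not involved in $\kappa$, the last clauses of Definition~\ref{def.observationrelation} finish the step; this needs $\cor_b$ to map a faulty call to a faulty or correct call, both of which are allowed by those clauses.

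If $a$ is involved in $\kappa$ with partner $c$, we need $c$'s holding to match in the appropriate sense of the matching clause. If $\kappa$ is correct, we need $\initsecret[\rho]_c=\initsecret'[\cor_b(\rho)]_c$. The hypothesis on $a$ after the call forces $c$'s $b$-holding to be $\emptyset$ or the faulty $\{\ov{b}\}$, so the invariant for $c$ yields equality. If $\kappa=ac^b$ (the error on $b$ flowing to $a$), then $\cor_b$ turns it into $ac$, and we use the clause relating $ab$ to $ab^e$ with $e=b$ (by symmetry), which requires exactly $c$'s holding to match up to $\pm b$. If $\kappa$ is faulty on another secret $e$, the same clause applies with $c$'s holdings equal.

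The main obstacle is that the auxiliary invariant must hold for all agents $x$, not just $a$, and must be proved simultaneously. This is needed because the call semantics use $*$ and $**$, i.e.\ knowledge of $x$, so we also need that $x$'s knowledge about $b$ is compatible across the two runs. I would first try to show that no agent ever discards a value of $b$ in either run unless it has a conflicting $b$-value and then learns $b$'s value. Then $*$ and $**$ act symmetrically under the swap, by using the induction hypothesis on the $\sim_x$-classes, which are mirrored by the map $(\initsecretp,\tau)\mapsto(\initsecretp^{\pm b},\cor_b(\tau))$ restricted to states where $x$'s $b$-view is unaffected.
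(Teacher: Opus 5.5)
Your inductive skeleton matches the paper's: induction on the length of $\sigma$, the hypothesis passing to $\rho$ because the correct value $b$ is never discarded, and an immediate case when $a$ is not involved. In the involved cases everything hinges on the partner $c$ having the right holding at $\rho$, and that is where the gap lies.

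You derive that holding from an auxiliary invariant about all agents. As first written, this invariant is false. Starting from $a|b|c$, after $cb^b$ agent $c$ holds $\{\ov{b},c\}$ in both runs, not swapped values. After $cb^b.cb$ agent $c$ has a conflict on $b$, whereas in the mirrored run $cb.cb$ from $a|\ov{b}|c$ it has none, since after $\cor_b$ every transmitted value of $b$ is $\ov{b}$. Your repaired version ("swap only the values that arrived by correct transmission") is not a statement about holdings. You also explicitly leave its proof open as the main obstacle, including the behaviour of $*$, $**$ and the $\sim_x$-classes under the mirror map. So the proposal never establishes $\initsecret[\rho]_c=\initsecret^{\pm b}[\cor_b(\rho)]_c$, and the lemma is not proved.

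The missing idea is that no invariant for arbitrary agents is needed. The lemma is quantified over all agents, so induct on $|\sigma|$ for all agents at once and apply the induction hypothesis to $c$. In every involved case except a call faulty on $b$ whose error reaches $a$, you already showed that $\initsecret[\rho]_c^{\inter b}$ is $\emptyset$ or $\{\ov{b}\}$. This also covers $a^bc$ and $ca^b$, which you omitted; $\cor_b$ turns them into correct calls without changing what $a$ receives. The induction hypothesis for $c$ then gives $(\initsecret,\rho)\sim_c(\initsecret^{\pm b},\cor_b(\rho))$. Lemma~\ref{lemma.simtolocal} yields $\initsecret[\rho]_c=\initsecret^{\pm b}[\cor_b(\rho)]_c$, which is exactly what the observation clause requires.

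In the remaining case $\kappa=ac^b$ (or $c^ba$), $\rho$ contains no faulty call, so $\cor_b(\rho)=\rho$. You then only need $\initsecret[\rho]_c^{\pm b}=\initsecret^{\pm b}[\rho]_c$ for error-free $\rho$, and this is elementary. In an error-free run every held value is correct, while $*$ and $**$ only remove incorrect values; for a correct call the actual state lies in the set over which $**$ quantifies. Hence holdings are plain unions and commute with swapping $b$. This is the paper's argument, and it avoids the simultaneous mirror invariant entirely.
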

\begin{proof}
The proof is by induction on the length of $\sigma$ and by further case distinction. The lemma's formulation and stubbornness implies that $a \neq b$.

{\bf Base case} $\bm{\epsilon}$ \\ We must have that $\initsecret[\sigma]_a^{\inter b}= \emptyset$. Also, $(\initsecret, \epsilon) \sim_a (\initsecret^{\pm{b}}, \epsilon)$ follows from the basic clause of $\sim_a$, because $\initsecret_a = \initsecret^{\pm{b}}_a = \{a\}$ or $\initsecret_a = \initsecret^{\pm{b}}_a = \{\ov{a}\}$: both singletons only contain the secret of $a$. Therefore, as $\cor_b(\epsilon) = \epsilon$, $(\initsecret, \epsilon) \sim_a (\initsecret^{\pm{b}}, \cor_b(\epsilon))$.

{\bf Inductive case} $\bm{\sigma.\kappa}$ \\ If $\kappa$ does not involve agent $a$, then $\initsecret[\sigma.\kappa]_a^{\inter b} = \initsecret[\sigma]_a^{\inter b}$, so that either way ($\emptyset$ or $\{\ov{b}\}$) by induction we obtain $(\initsecret, \sigma) \sim_a (\initsecret^{\pm{b}}, \cor_b(\sigma))$, and therefore as well, as $\kappa$ does not involve agent $a$, $(\initsecret, \sigma.\kappa) \sim_a (\initsecret^{\pm{b}}, \cor_b(\sigma).\cor_b(\kappa))$. Now it is easy to see that for arbitrary $\sigma$ and $\tau$, $\cor_b(\sigma.\tau) = \cor_b(\sigma).\cor_b(\tau)$. Therefore $(\initsecret, \sigma.\kappa) \sim_a (\initsecret^{\pm{b}}, \cor_b(\sigma.\kappa))$. 

Let now $\kappa$ involve agent $a$ and an arbitrary other agent $c$.

If $\initsecret[\sigma.\kappa]_a^{\inter b}= \emptyset$ then already $\initsecret[\sigma]_a^{\inter b}= \emptyset$ or $\initsecret[\sigma]_a^{\inter b}= \{\ov{b}\}$, and $\initsecret[\sigma]_c^{\inter b}= \emptyset$ or $\initsecret[\sigma]_c^{\inter b}= \{\ov{b}\}$ (we can rule out that either set contains $b$ as correct values are never discarded in the call semantics). By induction for agents $a$ and $c$ we then obtain that $(\initsecret, \sigma) \sim_a (\initsecret^{\pm{b}}, \cor_b(\sigma))$ respectively $(\initsecret, \sigma) \sim_c (\initsecret^{\pm{b}}, \cor_b(\sigma))$, and from the latter we obtain $\initsecret[\sigma]_c = \initsecret^{\pm{b}}[\cor_b(\sigma)]_c$. Now if $\kappa$ is a correct call, from $(\initsecret, \sigma) \sim_a (\initsecret^{\pm{b}}, \cor_b(\sigma))$ and $\initsecret[\sigma]_c = \initsecret^{\pm{b}}[\cor_b(\sigma)]_c$ we obtain $(\initsecret, \sigma.\kappa) \sim_a (\initsecret^{\pm{b}}, \cor_b(\sigma).\kappa)$ and therefore as in this case $\cor_b(\sigma).\kappa = \cor_b(\sigma.\kappa)$, $(\initsecret, \sigma.\kappa) \sim_a (\initsecret^{\pm{b}}, \cor_b(\sigma.\kappa))$. Whereas if $\kappa$ is a faulty call, it must be for another secret $d$ than $b$ (if $\kappa$ is faulty, $\sigma$ does not contain a faulty call, so we can rule out that $a$ or $c$ holds $\ov{b}$, and therefore neither $a$ nor $c$ has information about $b$), from $(\initsecret, \sigma) \sim_a (\initsecret^{\pm{b}}, \cor_b(\sigma))$ and $\initsecret[\sigma]_c^{\pm d} = \initsecret^{\pm{b}}[\cor_b(\sigma)]_c^{\pm d}$ we obtain $(\initsecret, \sigma.\kappa) \sim_a (\initsecret^{\pm{b}}, \cor_b(\sigma).\kappa)$, and as again (the error is in $b \neq d$) $\cor_b(\sigma).\kappa = \cor_b(\sigma.\kappa)$ we again obtain $(\initsecret, \sigma.\kappa) \sim_a (\initsecret^{\pm{b}}, \cor_b(\sigma.\kappa))$.

If $\initsecret[\sigma.\kappa]_a^{\inter b}= \{\ov{b}\}$ there are more subcases to consider. 

In case $\initsecret[\sigma]_a^{\inter b}= \emptyset$ and $\initsecret[\sigma]_c^{\inter b}= \{\ov{b}\}$, by induction for agents $a$ and $c$ we obtain that $(\initsecret, \sigma) \sim_a (\initsecret^{\pm{b}}, \cor_b(\sigma))$ respectively $(\initsecret, \sigma) \sim_c (\initsecret^{\pm{b}}, \cor_b(\sigma))$, from the latter we obtain $\initsecret[\sigma]_c = \initsecret^{\pm{b}}[\cor_b(\sigma)]_c$, so that $(\initsecret, \sigma.\kappa) \sim_a (\initsecret^{\pm{b}}, \cor_b(\sigma).\kappa)$ and therefore also (call $\kappa$ must have been correct as agent $c$ already held a faulty value of a secret before the call, so that again $\cor_b(\sigma).\kappa = \cor_b(\sigma.\kappa)$), $(\initsecret, \sigma.\kappa) \sim_a (\initsecret^{\pm{b}}, \cor_b(\sigma.\kappa))$. 

In the other case, $\initsecret[\sigma]_a^{\inter b}= \emptyset$, $\initsecret[\sigma]_c^{\inter b}= \{b\}$ and call $\kappa$ is faulty such that $\initsecret[\sigma]_c^{\pm{b}}$ (which contains $\ov{b}$) is the set of secrets passed on to agent $a$, and we proceed differently. First, from $\initsecret[\sigma]_a^{\inter b}= \emptyset$ and induction for $a$ it follows that $(\initsecret, \sigma) \sim_a (\initsecret^{\pm{b}}, \cor_b(\sigma))$. Second, as $\kappa$ is faulty, there are no faulty calls in $\sigma$, so that $\cor_b(\sigma) = \sigma$ and $\cor_b(\sigma.\kappa) = \sigma.\cor_b(\kappa)$. Furthermore, $\initsecret[\sigma]_c^{\pm b} = \initsecret^{\pm b}[\sigma]_c$. From $(\initsecret, \sigma) \sim_a (\initsecret^{\pm{b}}, \cor_b(\sigma))$ and $\cor_b(\sigma) = \sigma$ we get $(\initsecret, \sigma) \sim_a (\initsecret^{\pm{b}}, \sigma)$, from that and $\initsecret[\sigma]_c^{\pm b} = \initsecret^{\pm b}[\sigma]_c$ we get $(\initsecret,\sigma.\kappa) \sim_a (\initsecret^{\pm b}, \sigma.\cor_b(\kappa))$, and with the above $\cor_b(\sigma.\kappa) = \sigma.\cor_b(\kappa)$ therefore also $(\initsecret,\sigma.\kappa) \sim_a (\initsecret^{\pm b}, \cor_b(\sigma.\kappa))$ as requested.

In case $\initsecret[\sigma]_a^{\inter b}= \{\ov{b}\}$, call $\kappa$ must have been a correct call (as the faulty call, of which there is not more than one, must be in $\sigma$) so that either $\initsecret[\sigma]_c^{\inter b}= \emptyset$ or $\initsecret[\sigma]_c^{\inter b}= \{\ov{b}\}$. We now proceed as in the previous cases to again obtain $(\initsecret, \sigma.\kappa) \sim_a (\initsecret^{\pm{b}}, \cor_b(\sigma.\kappa))$.
\end{proof}

\begin{proposition}[Knowledge implies correct belief] \label{prop.kcorrectb} \label{cor.knowledgecorrectbelief}
(i) $\models K_a b_b \imp (b_b \et b_a \et \neg \ov{b}_a)$ and (ii) $\models K_a \ov{b}_b \imp (\ov{b}_b \et \neg b_a \et \ov{b}_a)$. 
\end{proposition}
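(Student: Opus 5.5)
We prove (i); (ii) is symmetric, using the swapped version of Lemma~\ref{wieditleestisgek} stated in its footnote. Assume $\initsecret,\sigma\models K_a b_b$. The conclusion has three conjuncts, and we establish them one at a time.

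\emph{First conjunct.} Factivity ($\mathbf{T}$) gives $\initsecret,\sigma\models b_b$ immediately. By Stubbornness (Lemma~\ref{lemma.stubborn}), together with the remark that an agent always holds exactly one value of its own secret, this yields $\initsecret_b=\{b\}$. Hence $\initsecret[\sigma]_b^{\inter b}=\{b\}$ as well.

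\emph{Second conjunct.} We argue by contradiction. Suppose $b\notin\initsecret[\sigma]_a$. Then $\initsecret[\sigma]_a^{\inter b}$ is either $\emptyset$ or $\{\ov{b}\}$. Since also $\initsecret_b=\{b\}$, Lemma~\ref{wieditleestisgek} applies and gives
\[(\initsecret,\sigma)\sim_a(\initsecret^{\pm b},\cor_b(\sigma)).\]
In $\initsecret^{\pm b}$ agent $b$ initially holds $\ov{b}$. By Stubbornness, $\initsecret^{\pm b}[\cor_b(\sigma)]_b^{\inter b}=\{\ov{b}\}$, so $\initsecret^{\pm b},\cor_b(\sigma)\not\models b_b$. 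This contradicts $K_a b_b$. Therefore $b\in\initsecret[\sigma]_a$.

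One subtlety remains: $\cor_b(\sigma)$ must be a legitimate call sequence. It is, because $\cor_b$ only removes a faulty call, so the result still has at most one faulty call.

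\emph{Third conjunct.} We show $\ov{b}\notin\initsecret[\sigma]_a$, and I expect this to be the main obstacle. Lemma~\ref{wieditleestisgek} does not cover the conflicting case $\initsecret[\sigma]_a^{\inter b}=\{b,\ov{b}\}$, so we need a separate argument. The plan is an induction on $\sigma$ proving that whenever $\initsecret,\sigma\models K_a b_b$, agent $a$ does not hold $\ov{b}$.

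\begin{itemize}
\item If the last call does not involve $a$, then by the definition of $\sim_a$, $K_a b_b$ already held before that call, and $a$'s holding is unchanged. The induction hypothesis applies directly.
\item If the last call is $\kappa$ with partner $c$, split by whether $K_a b_b$ held before $\kappa$.
\begin{itemize}
\item If it did, the induction hypothesis shows $\ov{b}$ was not held by $a$ before $\kappa$. Moreover $\ov{b}$ lies in the set $*$, so it is refused even if $c$ transmits it.
\item If it did not, then $K_a b_b$ holds after $\kappa$. So in every indistinguishable $(\initsecretp,\tau)$ with the same holding of $c$ (suitably swapped for faulty calls), $b_b$ is true. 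This is exactly the condition placing $\ov{b}$ into $**$, so $\ov{b}$ is discarded from the union.
\end{itemize}
\end{itemize}

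The delicate point is matching the quantification in the definition of $**$ with the $\sim_a$-successors of $(\initsecret,\sigma.\kappa)$. We will check, clause by clause in Definition~\ref{def.observationrelation}, that the states $(\initsecretp,\tau.\kappa')\sim_a(\initsecret,\sigma.\kappa)$ correspond exactly to pairs $(\initsecretp,\tau)\sim_a(\initsecret,\sigma)$ with the required equality of $c$'s holding. We also use Stubbornness to transfer $b_b$ from $(\initsecretp,\tau.\kappa')$ back to $(\initsecretp,\tau)$, since calls do not change the value of $b$'s own secret.

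Combining the three conjuncts yields $b_b\et b_a\et\neg\ov{b}_a$, which proves (i).
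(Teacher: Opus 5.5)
Your proof is correct, but it is organized differently from the paper's. The paper runs a single induction on $\sigma$ for the whole conjunction $b_b \et b_a \et \neg\ov{b}_a$. Its only nontrivial case is a call $ac^\kappa$ after which $K_a b_b$ holds but before which it did not. There it establishes $b_a$ by first checking whether the value $b$ lies in the union of the holdings of $a$ and $c$, arguing that it survives $*$ and $**$, and invokes Lemma~\ref{wieditleestisgek} at $\sigma.ac^\kappa$ only for the residual cases.

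You instead use the fact that Lemma~\ref{wieditleestisgek} holds for arbitrary $\sigma$. So $b_a$ follows at once by contraposition at $\sigma$ itself, with no induction and no analysis of the union. This is genuinely more economical. The paper's case analysis, in return, yields the operational remarks stated after its proof, such as when a freshly received value $b$ already gives knowledge. For $\neg\ov{b}_a$ your argument coincides with the paper's: $*$ refusal if $K_a b_b$ held before the call, $**$ discarding otherwise.

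Three small points:
\begin{itemize}
\item The base case $\sigma=\epsilon$ of your induction is missing. It is trivial: $S_a\subseteq\{a,\ov{a}\}$, and when $a=b$ the truth of $b_b$ forces $S_a=\{a\}$.
\item For the $**$ step you only need one inclusion: every $(T,\tau)$ in the range of the $**$ quantifier extends by a \emph{correct} call to a $\sim_a$-successor of $(S,\sigma.\kappa)$. The correspondence is not exact, so do not try to prove that. For instance, $(a|b,ab)\sim_a(a|\ov{b},ab^b)$ (Example~\ref{example.dadada}), yet $(a|\ov{b},\epsilon)$ is outside the $**$ range for the actual call $ab$.
\item The split on whether $K_a b_b$ held before the call is unnecessary. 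The $**$ argument removes $\ov{b}$ after any call involving $a$ after which $K_a b_b$ holds. The induction is therefore needed only to skip calls not involving $a$.
\end{itemize}
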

\begin{proof}
We show the first, where the second is shown similarly. We show this by induction on the length of call sequences. 

For $\sigma=\epsilon$ it follows ex falso for $b \neq a$ and otherwise when $b=a$ from the  trivial observation that $b_b=b_a=a_a$, the instantiation $K_a a_a \imp a_a$ of the {\bf T} axiom, and the fact that agents initially only hold a unique value of their own secret, so that $\ov{a}_a$ is false.

For $\sigma=\tau.\kappa$ we distinguish the case where $\kappa$ does not involve agent $a$ from the case where $\kappa$ involves $a$.

For $\sigma=\tau.bc^\kappa$ where $b,c \neq a$, we assume that $\initsecret,\sigma.bc^\kappa\models K_a b_b$. We now have that $\initsecret,\sigma.bc^\kappa\models K_a b$, iff (semantics of formulas and observation relation) $\initsecret,\sigma\models K_a b$, which implies (induction) $\initsecret,\sigma\models b_b \et b_a \et \neg \ov{b}_a$, which implies (by stubbornness for $b_b$ and by the call semantics for non-involved agents) $\initsecret,\sigma.bc^\kappa \models b_b \et b_a \et \neg \ov{b}_a$.

For $\sigma=\tau.ac^\kappa$, assume $\initsecret,\sigma.ac^\kappa\models K_a b_b$. We distinguish two subcases: $\initsecret,\sigma\models K_a b_b$ and $\initsecret,\sigma\not\models K_a b_b$.

If $\initsecret,\sigma\models K_a b_b$, by induction we can conclude that $\initsecret,\sigma\models b_b \et b_a \et \neg \ov{b}_a$. Again from stubbornness and again from the call semantics, but now for involved agents (value $b$ for $a$ is preserved in the union, whereas value $\ov{b}$ either remains absent, or in case contributed by agent $c$ is discarded because in the $*$ set), we conclude that 
 $\initsecret,\sigma.ac^\kappa\models b_b \et b_a \et \neg \ov{b}_a$. So in this case $*$ removal may be involved.

If $\initsecret,\sigma\models \neg K_a b_b$, this is the harder case, and the case of most interest in the proof. 

 We show the three conjuncts separately, where the (hardest) second comes last.
\begin{itemize}

\item $\bm{\initsecret,\sigma.ac^\kappa\models b_b.}$ \quad From $\initsecret,\sigma.ac^\kappa\models K_a b_b$ and the validity of $K_a b_b \imp b_b$ (factivity {\bf T}) it follows that $\initsecret,\sigma.ac^\kappa\models b_b$.

\item $\bm{\initsecret,\sigma.ac^\kappa\models \neg\ov{b}_a.}$ \quad From $\initsecret,\sigma.ac^\kappa\models K_a b_b$ it follows that $\initsecret,\sigma.ac^\kappa\models \neg \ov{b}_a$ by the following line of reasoning. By definition, $\initsecret,\sigma.ac^\kappa\models K_a b_b$ means that (where $ac^{\kappa'}$ is `another call involving $a$ and $c$'): for all $(\initsecret, \sigma.ac^\kappa) \sim_a (\initsecretp, \tau.ac^{\kappa'})$, $\initsecretp,\tau.ac^{\kappa'}\models b_b$. By definition of the observation relation, $(\initsecret, \sigma.ac^\kappa) \sim_a (\initsecretp, \tau.ac^{\kappa'})$ means that $(\initsecret,\sigma)\sim_a (\initsecretp,\tau)$ and $\initsecret[\sigma]_c = \initsecretp[\tau]_c$, where, seemingly to complicate matters, one or both of these secret distributions may involve a swap $\initsecretp[\tau]_c^{\pm b}$. However, this is not a real complication, because we now satisfy the condition for $\ov{b}$ removal from agent $a$'s holding in the semantics of call. Which shows that, whether actually removed or not, $\initsecret,\sigma.ac^\kappa\models \neg \ov{b}_a$. So in this case $**$ removal may be involved.\footnote{We recall that $**$ implicitly only removes an incorrect value, in this case $\ov{b}$, if the correct value, in this case $b$, is known to be held by $b$, but that we cannot formalize this as truth of $K_a b_b$ as that would make the logical semantics circular. But in fact the knowledge is already there and `preserved' after the ** self-correction, and vice versa; the self-correction only concerns the truth of $b_a$ and $\ov{b}_a$, not of $b_b$.}

\item $\bm{\initsecret,\sigma.ac^\kappa\models b_a.}$ \quad It remains to show that $\initsecret,\sigma.ac^\kappa\models K_a b_b$ implies $\initsecret,\sigma.ac^\kappa\models b_a$. Now if $\initsecret,\sigma\models b_a$ or $\initsecret,\sigma\models b_c$ (and the call is correct), by the semantics of calls we immediately have that $\initsecret,\sigma.ac^\kappa\models b_a$, because value $b$ is then in the union $\initsecret[\sigma]_a \union \initsecret[\sigma]_c$ of the holdings of agents $a$ and $c$ (the $*$ deletion cannot occur because of assumption $\initsecret,\sigma\models \neg K_a b_b$ in this case of the proof; if a $**$ deletion occurs this involves value $\ov{b}$ because of proof assumption $\initsecret,\sigma.ac^\kappa\models K_a b_b$; see previous subcase).

We therefore only have the following remaining cases to consider: in $(\initsecret,\sigma)$, agent $a$ has no information on $b$ ($b_a$ and $\ov{b}_a$ are both false) or agent $a$ holds $\ov{b}$ ($\ov{b}_a$ is true), and agent $c$ has no information on $b$ or agent $c$ holds (or contributes, in a faulty call) $\ov{b}$. In all such cases either $\initsecret[\sigma.ac^\kappa]_a = \emptyset$ or  $\initsecret[\sigma.ac^\kappa]_a = \{\ov{b}\}$. Applying Lemma~\ref{wieditleestisgek} it then follows that $(\initsecret, \sigma.ac^\kappa) \sim_a (\initsecret^{\pm{b}}, \cor_b(\sigma.ac^{\kappa'}))$. As $(\initsecret^{\pm{b}}, \sigma.ac^{\kappa'}) \not \models b_b$, therefore $(\initsecret, \sigma.ac^\kappa) \not \models K_a b_b$. Therefore in all these cases it cannot be that $\initsecret,\sigma.ac^\kappa\models K_a b_b$, which ends the proof.
\end{itemize}
\end{proof}

Two observations: $(i)$ In all four cases where $a$ and $c$ have no information on $b$ or incorrect information on $b$ we have that $\initsecret,\sigma \not\models K_a b_b$ as well as $\initsecret,\sigma.ac^\kappa \not\models K_a b_b$, so $a$'s ignorance of secret $b$ then persists. $(ii)$ We can only have that $\initsecret,\sigma \models \neg K_a b_b \et \neg b_a \et \neg \ov{b}_a \et b_c$ and  
$\initsecret,\sigma.ac^\kappa \models K_a b_b$ when agent $a$ already self-corrected on a secret $d \neq b$ in the past (during $\sigma$), and therefore knows that any value she receives for a secret about which she is uninformed must be the correct value.  This is the case of the proof above where it says ``if $\initsecret,\sigma\models b_a$ or $\initsecret,\sigma\models b_c$'' (namely when $\initsecret,\sigma\not\models b_a$).

A succinct way to express Proposition~\ref{prop.kcorrectb} is that
$\initsecret,\sigma\models \Kv_a b$ implies $\initsecret[\sigma]_a^{\inter b} = \initsecret_b$. 

As a corollary of Proposition~\ref{prop.kcorrectb} we have that {\em knowledge is justified correct belief}: 
\begin{corollary}[Knowledge is justified correct belief] \label{cor.kcorrectb2}\label{cor.knowledgecorrectbelief2}
$\models K_a b_b \eq K_a (b_b \et b_a \et \neg \ov{b}_a)$ and $\models K_a \ov{b}_b \eq K_a (\ov{b}_b \et \neg b_a \et \ov{b}_a)$. 
\end{corollary}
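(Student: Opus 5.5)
Both equivalences follow from Proposition~\ref{prop.kcorrectb} with the normal modal principles listed at the start of Section~\ref{subsec:SemanticResults}, namely $\mathbf{K}$, $\mathbf{Nec}$ and $\mathbf{4}$. We treat the first equivalence; the second is identical with $b$ and $\ov{b}$ swapped, using item $(ii)$ instead of item $(i)$.

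The right-to-left direction is the easy one. The formula $(b_b \et b_a \et \neg\ov{b}_a) \imp b_b$ is a propositional tautology, so by $\mathbf{Nec}$ we get $\models K_a((b_b \et b_a \et \neg\ov{b}_a) \imp b_b)$. Applying $\mathbf{K}$ then gives $\models K_a(b_b \et b_a \et \neg \ov{b}_a) \imp K_a b_b$. One can also argue semantically: if the conjunction holds at every $\sim_a$-accessible gossip state, then so does $b_b$.

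For left to right, suppose $\initsecret,\sigma \models K_a b_b$. By $\mathbf{4}$ (transitivity of the equivalence relation $\sim_a$), every $(\initsecretp,\tau) \sim_a (\initsecret,\sigma)$ satisfies $\initsecretp,\tau \models K_a b_b$. Proposition~\ref{prop.kcorrectb}$(i)$ is a validity, so it holds at $(\initsecretp,\tau)$ and yields $\initsecretp,\tau \models b_b \et b_a \et \neg \ov{b}_a$. Since $(\initsecretp,\tau)$ was arbitrary, $\initsecret,\sigma \models K_a(b_b \et b_a \et \neg\ov{b}_a)$. Syntactically, we apply $\mathbf{Nec}$ to Proposition~\ref{prop.kcorrectb}$(i)$ to get $K_a(K_a b_b \imp (b_b \et b_a \et \neg \ov{b}_a))$, distribute with $\mathbf{K}$, and chain with $K_a b_b \imp K_a K_a b_b$ from $\mathbf{4}$.

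There is no real obstacle here, since all the work is done in Proposition~\ref{prop.kcorrectb}. The only point needing care is that the proposition must be applied at every indistinguishable state and not just at the actual state $(\initsecret,\sigma)$. This is why we need positive introspection ($\mathbf{4}$) and the fact that the proposition is a validity, not merely true at the actual state. As an alternative to $\mathbf{4}$, we could use Locality Lemma~\ref{lemma.localll} for the conjuncts $b_a$ and $\neg\ov{b}_a$ and $\mathbf{T}$ for $b_b$, but the $\mathbf{4}$ route is more uniform.
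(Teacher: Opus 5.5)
Your proof is correct and follows essentially the same route as the paper: right to left by monotonicity of $K_a$, and left to right by using positive introspection to get $K_a b_b$ at every $\sim_a$-accessible state and then applying Proposition~\ref{prop.kcorrectb} at each of them. Your alternative via Locality Lemma~\ref{lemma.localll} also works, although for the $b_b$ conjunct you need no $\mathbf{T}$, since $K_a b_b$ is already the hypothesis.
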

\begin{proof}
Although a corollary indeed, let us give the simple proof, where we consider the first. The direction from right to left is obvious. For the direction from left to right:
Given $(\initsecret,\sigma)$, assume $\initsecret,\sigma \models K_a b_b$. Let $(\initsecret,\sigma)\sim_a (\initsecretp,\tau)$. By the properties of knowledge we then not only have that $\initsecretp,\tau \models b_b$ but also that $\initsecretp,\tau \models K_a b_b$. For all those $\tau$ we can apply Proposition~\ref{prop.kcorrectb} and obtain $\initsecretp,\tau \models b_b \et b_a \et \neg \ov{b}_a$. As $(\initsecretp,\tau)$ was arbitrary we therefore have $\initsecret,\sigma \models K_a (b_b \et b_a \et \neg \ov{b}_a)$. Now using assumption $\initsecret,\sigma \models K_a b_b$ and that $(\initsecret,\sigma)$ was also arbitrary we get the required $\models K_a b_b \imp K_a (b_b \et b_a \et \neg \ov{b}_a)$. \end{proof}

\subsection{Examples}
We explain the semantics with some elementary examples.

\begin{example}  \label{example.dadada}
Consider two agents $a,b$ and the initial secret distribution $\initsecreto = a|b$.
\begin{itemize}

\item 
First consider a single call $ab$, so that $(a|b)[ab] = ab|ab$. Then agent $a$ holds the correct value of secret $b$: $a|b, ab \models b_b \et b_a$. However $a$ considers it possible that she holds the incorrect value of $b$: $a|b, ab \models \M_a (\ov{b}_b \et b_a)$, because $(a|b, ab) \sim_a (a|\ov{b}, ab^b)$ and $a|\ov{b}, ab^b \models \ov{b}_b \et b_a$. 

This item illustrates that true belief does not imply justified true belief, that is, knowledge: $a|b, ab \models b_b \et b_a \et \neg \ov{b}_b$ whereas $a|b, ab \not\models \Kv_a b$ ($a|b, ab \not\models K_a (b_b \et b_a \et \neg \ov{b}_b$).

\item
Then consider call sequence $ab.ab^b$. We now have that $(a|b,ab.ab^b) \sim_a (a|\ov{b},ab^b{\!\!}.ab)$, and no other gossip states are considered possible. Furthermore $(a|b)[ab.ab^b] = a\un{b}|ab$ and $(a|\ov{b})[ab^b{\!\!}.ab] = a\un{b}|a\ov{b}$. Either way, $(a|b)[ab.ab^b]_a = (a|\ov{b})[ab^b{\!\!}.ab]_a = a\un{b}$: agent $a$ has a conflict for the secret of agent $b$. In order to resolve the conflict, she needs to call $b$ again.

\item Therefore, now consider call sequence $ab.ab^b{\!\!}.ab$. Intuitively, agent $a$ now has three independent sources of information on the secret of $b$, of which the majority has value $b$. This is sufficient for her to learn the correct value $b$ of $b$'s secret. Because formally, the last observation further restricts the set of possible call sequences. From the two call sequences that $a$ considered possible before, only the first can be extended such that the first and third received value for $b$ correspond, we cannot extend the second into $ab^b{\!\!}.ab.ab^b$ as this contains more than one faulty call. The only gossip state she therefore considers possible is $(a|b,ab.ab^b{\!\!}.ab)$. At the third call agent $a$ can therefore restrict the set of initial secret distributions that she considers possible from $\{a|b,a|\ov{b}\}$ to the singleton $\{a|b\}$, such that she now knows that the correct value of $b$'s secret is $b$ and not $\ov{b}$. (Agent $a$  knows / has justified true belief of secret $b$.) 

The incorrect value $\ov{b}$ held by $a$ in $(a|b)[ab.ab^b] = (a\un{b}|ab)$ no longer appears in her holding in $(a|b)[ab.ab^b{\!\!}.ab] = (ab,ab)$ as a consequence of the deletion of that value according to the semantics of calls. In this case we have that, for agent $a$, ${**} = \{\ov{b}\}$ (and $* = \emptyset$).

Furthermore, also $b$ will not incorporate the incorrect value $\ov{b}$ of his own secret that he receives from $a$ in the third call. This is because knowledge $\Kv_b b$ of his own secret is preserved after all calls and therefore in particular after the the first two calls. In this case we have that, for agent $b$, ${*} = \{\ov{b}\}$ (and ${**} = \emptyset$).

\item As a different extension of the first call $ab$, consider $ab.ab$. Even though $(a|b,ab) \sim_a (a|\ov{b},ab^b)$, we have that $(a|b,ab.ab) \not\sim_a (a|\ov{b},ab^b{\!\!}.ab)$, because agent $a$ has a conflict in one and not in the other. Also, agent $a$ does not consider $(a|\ov{b},ab^b{\!\!}.ab^b)$, because $ab^b{\!\!}.ab^b$ has more than one faulty call. We have that $a|b, ab.ab\models K_a b_b$: after a majority of two independent observations of value $b$, $a$ again knows that this must be the correct value. We can also get knowledge of secrets without first having a conflict to resolve.

\item Finally, consider call sequence $ab.ab.ab^b$ extending the previous $ab.ab$. Receiving an incorrect value $\ov{b}$ for the secret of $b$ does not cause agent $a$ to have a conflict for the secret of $b$ as she already knows the (correct) value $b$ for that secret. As the value $\ov{b}$ is known to be incorrect it is not added to the holding of secret values of agent $a$ (the set ${*} = \{\ov{b}\}$), so that $(a|b)[ab.ab] = (a|b)[ab.ab.ab^b] = ab|ab$. (This is just as for agent $b$ after $ab.ab^b{\!\!}.ab$ in the first item.) 
\end{itemize}
\end{example}

\begin{example}
Let there now be three agents $a,b,c$.
\begin{itemize}
\item Consider initial gossip state $a|b|c$ and call sequence $ac.ab$. Agent $a$ considers the following gossip states possible: $(a|b|c, ac.ab)$, $(a|b|\ov{c}, ac^c{\!\!}.ab)$, $(a|\ov{b}|c, ac.ab^b)$, as well as well as other gossip states wherein a transmission error is made by herself. They all result in agent $a$ holding $\{a,b,c\}$. Agent $a$ does not know she holds the correct values of all three secrets. From $(a|b|c, ac.ab) \sim_a (a|b|\ov{c}, ac^c{\!\!}.ab)$ we obtain that $ac.ab \models \M_a \ov{c}_c$. 
\item Consider the extension $ac.ab.ab^b$. We have that $(a|b|c)[ac.ab.ab^b]= a\un{b}c|abc|ac$. Agent $a$ now still considers possible the gossip states $(a|b|c, ac.ab.ab^b)$ and $(a|\ov{b}|c,ac.ab^b{\!\!}.ab)$. She no longer considers possible that she holds an incorrect value for $c$. She knows that at most one transmission error occurs, and knows that the error involved the secret of $b$, she therefore now knows the secret of $c$: $a|b|c, ac.ab.ab^b \models K_a c_c$, implying $a|b|c, ac.ab.ab^b \models \Kv_a c$. 
\end{itemize}
\end{example}

\begin{example}
An agent can get to know the secret of another agent without ever calling that agent. Consider the call sequence $ab.bc.ad.de.ce$ and the usual initial secret distribution $\initsecreto$ (that is, $a|b|c|d|e$). We show that after this call sequence agent $e$ knows the secret of $a$, without ever having called $a$. Let us first show schematically how the secret distribution develops and after that justify some details:

\medskip

\noindent 
$
a|b|c|d|e \ \stackrel{ab.bc.ad}\imp \ abd|abc|abc|abd|e \ \stackrel{de}\imp \ abd|abc|abc|abde|abde \ \stackrel{ce}\imp \ abd|abc|abcde|abde|abcde$

\medskip

\noindent After call $de$ agent $e$ holds all secrets except $c$. At this stage agent $e$ considers it possible that she incorrectly believes $a$: for example, we have that $(a|b|c|d|e,ab.bc.ad.de) \sim_e (\ov{a}|b|c|d|e,ab.bc.ad.d^ae)$ (and that $(a|b|c|d|e,ab.bc.ad.de) \sim_e (\ov{a}|b|c|d|e,ab.bc.a^ad.de)$; note that an error in the first call $a^ab$ will not reach $d$ and therefore not $e$). After call $ce$ agent $e$ learns that $c$ did not hold a value for secret $d$ but was informed about $a$. As the values of $a$ received from $d$ and from $c$ do not conflict, agent $e$ can therefore rule out gossip state $(\ov{a}|b|c|d|e,ab.bc.ad.d^ae)$. For the same reason that the observed value of $a$ did not conflict in the final two calls $de$ and $ce$, she can also rule out that the final call $ce$ was an incorrect call $c^ae$, or that the second call $bc$ was an incorrect call $b^ac$. Finally, agent $c$ can rule out that the first call was $a^ab$, as that would have made $b$ pass on this incorrect value $\ov{a}$ to $c$ in call $bc$, whereas in call $ad$ agent $d$ would still have received value $a$ from $a$, so that also in that case $e$ would have observed different values for $a$ in calls $de$ and $ce$. Agent $e$ now knows the secret of $a$.

A fortiori, an agent can get to know the secret of another agent including self-correcting the value for the secret of that agent without ever calling that agent. A simple variation on the previous would be the call sequence $ab.bc.ad.d^ae.de.ce$. After $d^ae$, agent $e$ holds an incorrect value $\ov{a}$ of the secret of agent $a$. After calling $d$ again she receives correct value $a$ and thus now has a conflict for the secret of $a$. In call $ce$ agent $e$ learns, as before, that agent $c$ does not hold a value for secret $d$. The value of $a$ agent $e$ receives in that call is therefore an independent observation, which again creates a majority of $a$ over $\ov{a}$. 

Yet another example, for four agents $a,b,c,d$, is $(a|b|c|d,ab.a^ac.ad.cd.cb)$. We get the following transitions:

\medskip

\noindent 
$
a|b|c|d \ \stackrel{ab.a^ac}\imp \ abc|ab|\ov{a}bc|d \ \stackrel{ad}\imp \ abcd|ab|\ov{a}bc|abcd  \stackrel{cd}\imp  abcd|ab|\un{a}bcd|\un{a}bcd  \stackrel{cb}\imp  abcd|\un{a}bcd|abcd|\un{a}bcd$

\medskip

After $ab.a^ac$ agent $c$ knows that the first call was involving $a$ and $b$, however she cannot distinguish $(a|b|c|d,ab.a^ac)$ from $(\ov{a}|b|c|d,ab.ac)$. After the subsequent third and fourth calls $ad.cd$, agent $c$ (after the second call involving her) still cannot distinguish $(a|b|c|d,ab.a^ac.ad.cd)$ from $(\ov{a}|b|c|d,ab.ac.ad.cd^a)$. However, the final call $cb$ rules out secret distribution $\ov{a}|b|c|d$, as that would otherwise have resulted in agent $c$ learning that $a$ holds $\ov{a}$ instead of $a$. So, the alternative evolution of secret distributions would then have been:

\medskip

\noindent 
$
\dots \ \stackrel{ab.ac.ad}\imp \ \ov{a}bcd|\ov{a}b|\ov{a}bc|\ov{a}bcd \ \stackrel{cd^a}\imp \ \ov{a}bcd|\ov{a}b|\un{a}bcd|\ov{a}bcd \ \stackrel{cb}\imp \ \ov{a}bcd|\un{a}bcd|\ov{a}bcd|\ov{a}bcd$

\medskip

\noindent Clearly, agent $c$ again knows the secret of $a$. But of course she can distinguish a call sequence where she knows that its value is $a$ from a call sequence where she knows that its value is $\ov{a}$.
\end{example}

\subsection{Gossip protocol}

\paragraph*{Expert and super expert}
An agent who holds all secrets without conflict is an {\em expert}, an agent who knows that all agents hold all secrets without conflict is a {\em super expert}. Similarly, an agent who holds all correct secrets without conflict is a {\em correct expert} and an agent who knows that all agents hold all correct secrets without conflict is a {\em correct super expert}. Call sequences satisfying that all are (correct) experts or (correct) super experts are called \emph{(correct) successful} respectively \emph{(correct) supersuccessful}. In gossip without errors we only consider correct experts and correct super experts, and all success is correct success.

\paragraph*{Gossip protocol} 
A {\em gossip protocol} is an algorithm encoding the intuition: \begin{quote} {\em Until the {\bf termination condition} holds, select a pair $ab$ of agents that satisfy a {\bf call condition}, and execute call $ab$}. \end{quote} The \emph{gossip problem} is whether a gossip protocol terminates and for which condition. 

The termination condition is also known as the {\em epistemic goal}. The usual termination condition is that all agents are experts. We also consider the termination condition that all agents are super experts. Apart from that we consider the termination conditions that all are correct experts or all are correct super experts. 

The call condition for a call from $a$ to $b$ (whether correct or faulty) in general requires $a$ and $b$ to be \emph{neighbours} (given a possibly restricted network), and for agent $a$ to know that the call condition holds. However, here we only consider that all agents are neighbours and we only consider the gossip protocol $\ANY$ with trivial call condition $\top$ (`true'), which is known by agent $a$ --- another triviality, as $K_a \top$ is valid.   

There are also more distributed, equivalent, ways to describe gossip protocols than above \cite{AptW18,DitmarschEPRS17,hvdetal.lucky:2024}. Instead of \emph{termination} it is sufficient to require \emph{stabilization}, such that any permitted call can continue to be executed forever, even when all agents are (super) experts. The effect of this is removing the `until' parts in the description above. Given an infinite call sequence, in order to have success of some kind, stabilization (termination) is only required if the infinite sequence is \emph{fair}: at any stage all permitted calls will occur again later \cite{DitmarschGR23,AptW18}. Stabilization is more in line with assumptions in distributed computing.

Our setting for gossip with errors seems a very simple one, because calls and protocols are not in the logical language, the call condition is the trivial formula $\top$, and the network is complete (all are neighbours). However, the presence of errors comes with novel logical complications, as self-correcting makes true propositional variables $b_a$ or $\ov{b}_a$ false again, unlike in error-free gossip. 

Let us now formally introduce the terminology to accommodate the novel distinction between correct and incorrect termination (note the disjunctions turn out exclusive).
%
%
%
\[ \begin{array}{llll}
\Exp_a & := & \Et_{b \in A} ((b_a \et \neg \ov{b}_a) \vel (\neg b_a \et \ov{b}_a)) \ \ & \text{$a$ is an expert} \\
\Exp_A & := & \Et_{a \in A} \Exp_a & \text{all agents are experts} \\
\cExp_a & := & \Et_{b \in A} ((b_b \et b_a \et \neg \ov{b}_a) \vel (\ov{b}_b \et \neg b_a \et \ov{b}_a)) \ & \text{$a$ is a correct expert} \\
\cExp_A & := & \Et_{a \in A} \cExp_a & \text{all agents are correct experts}
\end{array}\]
Furthermore, $K_a\Exp_A$ means that {\em agent $a$ is a super expert} and $K_a\cExp_A$ means that {\em agent $a$ is a correct super expert}. Given all that, 
termination condition $E_A\Exp_A$ requires that \emph{all are super experts} and $E_A\cExp_A$ that \emph{all are correct super experts}.
The gossip protocol is now called (where this should hold for arbitrary initial secret distributions $\initsecret$, $\sigma^\omega$ informally represents an infinite call sequence, and $\tau$ is a call sequence, that is, finite):
\begin{itemize}
\item \emph{successful} if all fair $\sigma^\omega$ have a prefix $\tau$ such that $\initsecret,\tau \models \Exp_A$;
\item \emph{supersuccessful} if all fair $\sigma^\omega$ have a prefix $\tau$ such that $\initsecret,\tau \models E_A\Exp_A$.
\item \emph{correct successful} if all fair $\sigma^\omega$ have a prefix $\tau$ such that $\initsecret,\tau \models \cExp_A$;
\item \emph{correct supersuccessful} if all fair $\sigma^\omega$ have a prefix $\tau$ such that $\initsecret,\tau \models E_A\cExp_A$;
\item \emph{first-correct successful} if $\models \Exp_A \imp \cExp_A$;
\item \emph{first-correct supersuccessful} if $\models E_A\Exp_A \imp E_A\cExp_A$.
\end{itemize}
The last four are novel categories of termination. Concerning the final two, it not only matters whether all become correct experts but also whether this happens first. Note that $\Exp_A$ may be true, but then become false again when an agent after a subsequent call has a conflict for the value of some secret; and subsequently may again become true, as well as, eventually, $\cExp_A$. In view of that it may be considered remarkable that $E_A\cExp_A$ is the only stable termination goal (Proposition~\ref{prop.stable}), as it is equivalent to $\Et_{a,b} \Kv_a b$ in our semantics. Preservation of truth is then a direct consequence of Lemma~\ref{lemma.preservationofknowledge}. We continue with such results for successful and supersuccessful termination. 

\subsection{Results for successful and supersuccessful termination} 
We will show that eventually everybody becomes an expert and a correct expert, but that it cannot be guaranteed that everybody becomes a correct expert before becoming an expert.

\begin{proposition} \label{expandcexp} $(1.)$ $\models \Exp_a \eq K_a \Exp_a$  and $(2.)$ $\models \neg\Exp_a \eq K_a \neg\Exp_a$ but $(3.)$ $\not\models \cExp_a \eq K_a \cExp_a$.
\end{proposition}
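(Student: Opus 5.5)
Items $(1.)$ and $(2.)$ follow from the Locality Lemma~\ref{lemma.localll}, because $\Exp_a$ is a Boolean combination of atoms $b_a$ and $\ov{b}_a$ that are all local to agent $a$. The plan is to check the non-trivial direction of $(1.)$ directly, derive $(2.)$ in the same way, and then give a concrete countermodel for $(3.)$.

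For $(1.)$, the right-to-left direction is an instance of $\mathbf{T}$. For left to right, suppose $\initsecret,\sigma\models\Exp_a$ and take any $(\initsecretp,\tau)\sim_a(\initsecret,\sigma)$. By Lemma~\ref{lemma.simtolocal}, $\initsecret[\sigma]_a=\initsecretp[\tau]_a$. Since each conjunct of $\Exp_a$ is evaluated only on the holding of $a$, it has the same truth value at $(\initsecretp,\tau)$ as at $(\initsecret,\sigma)$, so $\initsecretp,\tau\models\Exp_a$, and therefore $\initsecret,\sigma\models K_a\Exp_a$. (One could instead apply Lemma~\ref{lemma.localll} atom by atom and push $K_a$ through the connectives, but disjunctions do not distribute over $K_a$, so comparing holdings via Lemma~\ref{lemma.simtolocal} is the cleaner route.) Item $(2.)$ is the same argument applied to $\neg\Exp_a$, which is also a Boolean combination of $a$-local atoms; again its right-to-left direction is $\mathbf{T}$.

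For $(3.)$, a single counterexample is enough; I expect the only real work to be choosing and verifying it. Take the one from Example~\ref{example.dadada}: two agents, $\initsecreto=a|b$, and $\sigma=ab$. Then $(a|b)[ab]=ab|ab$, so $a$ holds correct values $a$ and $b$ with no conflict, and $a|b,ab\models\cExp_a$. However $(a|b,ab)\sim_a(a|\ov{b},ab^b)$, and $(a|\ov{b})[ab^b]_a=\{a,b\}$ while $\ov{b}_b$ is true there. So at that state $a$ holds the incorrect value $b$ and $\cExp_a$ fails. Hence $a|b,ab\not\models K_a\cExp_a$, and the left-to-right direction of the biconditional is refuted.

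To finish, I will confirm that the indistinguishability follows from the third clause of Definition~\ref{def.observationrelation}, namely $(\initsecret,\epsilon)\sim_a(\initsecretp,\epsilon)$ and $\initsecret_b=\initsecretp_b^{\pm b}$. The reason $(3.)$ fails is that $\cExp_a$ contains the non-local atoms $b_b$ and $\ov{b}_b$.
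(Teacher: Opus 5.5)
Your proof is correct. The differences from the paper lie in how you justify $(1.)$–$(2.)$ and in which counterexample you use for $(3.)$.

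\textbf{Items $(1.)$ and $(2.)$.} You argue semantically via Lemma~\ref{lemma.simtolocal}: $\sim_a$-indistinguishable states give $a$ the same holding. So every Boolean combination $\phi$ of $a$-local atoms is constant on $a$'s equivalence class, which gives $\models\phi\eq K_a\phi$ for all such $\phi$ at once. The paper argues syntactically from Lemma~\ref{lemma.localll}. For each $b$ it takes whichever of $b_a\et\neg\ov{b}_a$ or $\neg b_a\et\ov{b}_a$ is true, gets $K_a$ of it by locality, and weakens to $K_a$ of the disjunction. For $(2.)$ it does the same with the witnessing $b$. Since Lemma~\ref{lemma.localll} is itself proved from Lemma~\ref{lemma.simtolocal}, the underlying fact is the same; your version is simply more uniform.

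Your parenthetical against the atom-by-atom route is slightly off, though. That route never needs $K_a$ to distribute over $\vel$. It only needs the valid direction $K_a\phi\vel K_a\psi\imp K_a(\phi\vel\psi)$, applied after a case split on which disjunct holds.

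\textbf{Item $(3.)$.} Your countermodel $(a|b,ab)\sim_a(a|\ov{b},ab^b)$ is simpler than the paper's four-agent $(a|b|c|d,ab.cd.ac.bd)\sim_a(a|b|\ov{c}|d,ab.cd.ac^c{\!\!}.bd)$. Your checks are right. In computing $(a|\ov{b})[ab^b]_a=\{a,b\}$, $b\notin{*}$ because $a$ does not know $b$'s secret before the call. Also $b\notin{**}$ because the pre-call states that ${**}$ ranges over all satisfy $b_b$.

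Both examples rest on the same phenomenon: the call through which $a$ obtained a value may have been faulty in exactly that secret, while its owner stubbornly keeps the other value. The paper's instance has $n=4$, with some secrets reaching $a$ indirectly. Yours relies on $n=2$, so that a single call makes $a$ an expert. It generalizes to any $n$ by letting $a$ call every other agent in turn and taking the last call to be possibly faulty.
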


\begin{proof} The direction $K_a\phi \imp \phi$ of all the above is a property of knowledge. For the other direction:
\begin{enumerate}
\item This follows almost directly from Lemma~\ref{lemma.localll} that the value of all local atoms is known by the agent holding them. If $\Exp_a$, then for all $b \in A$, $(b_a \et \neg \ov{b}_a) \vel (\neg b_a \et \ov{b}_a)$ is true. From that, with Lemma~\ref{lemma.localll}, also follows $(K_a b_a \et K_a\neg \ov{b}_a) \vel (K_a\neg b_a \et K_a\ov{b}_a)$. With the properties of knowledge we then also have that $K_a(b_a \et \neg \ov{b}_a) \vel K_a(\neg b_a \et \ov{b}_a)$, and by weakening both known formulas we then obtain $K_a((b_a \et \neg \ov{b}_a) \vel (\neg b_a \et \ov{b}_a))$ (in disjunction with itself, so we omitted that).
\item If $\neg \Exp_a$, there must be a secret that $a$ does not hold or for which she has a conflict, that is, $\neg b_a \et \neg \ov{b}_a$ or $b_a \et \ov{b}_a$. Using Lemma~\ref{lemma.localll} of locality again, we then have $K_a\neg b_a \et K_a\neg \ov{b}_a$ or $K_a b_a \et K_a \ov{b}_a$, and therefore $K_a(\neg b_a \et \neg \ov{b}_a)$ or $K_a (b_a \et \ov{b}_a)$. Therefore $K_a \neg \Exp_a$.
\item However, $\not\models \cExp_a \eq K_a \cExp_a$. Given four agents $a,b,c,d$, a very simple counterexample is that $a|b|c|d, ab.cd.ac.bd \models \cExp_a$ whereas $a|b|c|d, ab.cd.ac.bd \not\models K_a \cExp_a$ because $(a|b|c|d, ab.cd.ac.bd) \sim_a (a|b|\ov{c}|d, ab.cd.ac^c{\!\!}.bd)$, and $(a|b|\ov{c}|d)[ab.cd.ac^c{\!\!}.bd]_a$ is also $abcd$ ($\{a,b,c,d\}$) but in that case that is incorrect, so that $a|b|\ov{c}|d, ab.cd.ac^c{\!\!}.bd \not\models \cExp_a$. It is easy to see that also $b,c,d$ do not know that knowledge of all secrets is correct after $ab.cd.ac.bd$. A more involved counterexample is Example~\ref{ONE} in the next subsection.
\end{enumerate}
\end{proof}

\begin{proposition} \label{anyresults}
The gossip protocol $\ANY$ is: (1.) successful, (2.) supersuccessful, (3.) correct successful, and (4.) correct supersuccessful. However, it is (5.) not first-correct successful and (6.) not first-correct supersuccessful.
\end{proposition}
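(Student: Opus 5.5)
The strategy is to reduce the four positive claims to the strongest one, (4.) correct supersuccess, and then give explicit counterexamples for (5.) and (6.). For the implications I first note the validities that hold for all $(\initsecret,\tau)$. By Proposition~\ref{prop.kcorrectb}, $\Et_{a,b}\Kv_a b$ implies $\cExp_A$; by definition $\cExp_A$ implies $\Exp_A$. Moreover, $\Et_{a,b}\Kv_a b$ is known by every agent: each $\Kv_a b$ is $K_a b_b \vel K_a\ov{b}_b$, and by Proposition~\ref{prop.kcorrectb} applied to the other agents we obtain $K_c \cExp_A$. The plan is to prove that every fair infinite sequence has a prefix $\tau$ with $\initsecret,\tau\models \Et_{a,b}\Kv_a b$. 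Since this entails $E_A\cExp_A$, and $E_A\cExp_A$ entails $E_A\Exp_A$, $\cExp_A$ and $\Exp_A$ by \textbf{T} and monotonicity of $K_a$, items (1.)–(4.) all follow from the same prefix.

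To get $\Kv_a b$ on a fair sequence, I will use a majority argument generalising Example~\ref{example.dadada}. Fairness guarantees that the call $ab$ occurs infinitely often. Since at most one call in the whole sequence is faulty, after some finite stage all calls are correct. Beyond that stage, the call $ab$ occurs at least three more times; consider the last three of these. In each of them $b$ transmits the value of its own secret, which by Stubbornness Lemma~\ref{lemma.stubborn} is always $\initsecret_b$. I then argue that every $(\initsecretp,\tau')\sim_a(\initsecret,\tau)$ has $\initsecretp_b=\initsecret_b$. By the observation relation, in three calls $ab^{\kappa'}$ of $\tau'$ at the same positions, $a$ received from $b$ holdings whose $b$-component equals $\initsecret_b$, up to at most one swap. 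Stubbornness in $(\initsecretp,\tau')$ says $b$ always holds $\initsecretp_b$, so at most one of these three received values can differ from $\initsecretp_b$. It follows that $\initsecretp_b=\initsecret_b$ by majority, which gives $K_a b_b$ or $K_a\ov{b}_b$.

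I expect the main technical obstacle to be the precise use of $\sim_a$ in this step: it is the equivalence closure of clauses, and the received holdings have $*$/$**$ filtering applied. I will handle this by proving by induction, as in Lemma~\ref{lemma.simtolocal}, that $\sim_a$-related states have the same call pattern involving $a$, with received holdings equal up to a single swap in the related sequence. Since there are finitely many pairs, taking the maximum of the resulting stages, together with Corollary~\ref{corollary.preservationofknowledge}, yields a single prefix on which $\Et_{a,b}\Kv_a b$ holds.

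For (5.) and (6.) I will give a counterexample on the model $(a|b,\sigma)$ with two agents. Take $\sigma=ab^b$: then $(a|b)[ab^b]=a\ov{b}|ab$, since $b$ keeps $b$ by stubbornness and $*$ does not apply, so $\Exp_A$ holds while $\cExp_A$ fails, refuting $\Exp_A\imp\cExp_A$. For (6.) I need a state where $E_A\Exp_A$ holds but a belief is wrong. Using Proposition~\ref{expandcexp}(1.), $K_a\Exp_A$ requires $a$ to know that $b$ is an expert. After $ab^b$, $b$ holds $ab$, which is an expert holding in every $\sim_a$-alternative (whether the error was in either direction or not), and symmetrically for $b$; hence $E_A\Exp_A\et\neg\cExp_A$ holds, refuting $E_A\Exp_A\imp E_A\cExp_A$. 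I would check the $\sim_a$ class explicitly, which is small.
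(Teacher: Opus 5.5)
Your items (1) and (3) go through, but one step fails, and with it your derivation of (2) and (4). The step is: ``$\bigwedge_{a,b}\Kv_a b$ is known by every agent \dots\ by Proposition~\ref{prop.kcorrectb} applied to the other agents we obtain $K_c\cExp_A$.'' Proposition~\ref{prop.kcorrectb} is pointwise. It turns $\Kv_x y$ at a state into correct belief of $x$ at that same state. For $K_c\cExp_A$ you need $x$'s holdings to be correct at every $\sim_c$-alternative, essentially $K_c\Kv_x y$ for $x\neq c$. But $\Kv_x y$ is introspective only for $x$ itself.

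Concretely, take $a|b|c$ and $\sigma=ab.ab.bc.bc.ac.ac$. Each pair has made two direct calls, so $\Kv_x y$ holds for all $x,y$. Yet $(a|b|c,\sigma)\sim_a(a|b|c,ab.ab.bc.bc^c.ac.ac)$, since in both $a$ receives $abc$ from $c$ in calls 5 and 6. In the alternative, $b$ has seen $c$'s value only once, so $\ov{c}\notin{*}$, because $(a|b|\ov{c},ab.ab.bc^c)\sim_b(a|b|c,ab.ab.bc)$. So $b$ picks up the faulty value $\ov{c}$, giving the conflict $\un{c}$. Hence $\neg K_a\cExp_A$, and even $\neg K_a\Exp_A$. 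This is the phenomenon of Example~\ref{FIVE}: an agent cannot exclude an unobserved faulty call to someone who, in some alternative, does not yet know a secret. Supersuccess therefore needs second-order information (that $c$ knows that $x$ knows $y$). Majority counting over $c$'s own calls does not provide it.

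The paper does not route (2) through $\bigwedge\Kv_a b$. It picks $a,b$ with $\M_a\neg\Exp_b$, repairs $b$ by a call $bc$, then lets $a$ call $b$, and iterates. For (4) it cites Proposition~\ref{prop.stable}, whose first step in fact identifies $E_A\cExp_A$ with $\bigwedge_a K_a\cExp_a$. Under that weaker reading your argument would give (4). But (2), i.e.\ $\bigwedge_a K_a\Exp_A$, concerns other agents' holdings on any reading, so citing that proposition does not close your gap.

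Elsewhere your route improves on the paper's:
\begin{itemize}
\item For (1) and (3), your majority argument runs along an arbitrary fair run, and two later correct occurrences of $ab$ already suffice. This matches the fairness-based definition more directly than the paper's ``some extension exists'' repairs.
\item For (5) and (6), your witness $(a|b,ab^b)$ is correct and far simpler than Examples~\ref{ONE} and~\ref{FIVE}. After any single call between two agents each holds exactly one value per secret, so $E_A\Exp_A$ holds at every length-one state, while $a$'s value $\ov{b}$ falsifies $\cExp_A$.
\end{itemize}
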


\begin{proof} We recall that the notions of successful and supersuccessful are defined with respect to fairly scheduled infinite call sequences, and that in the protocol $\ANY$ any call $ab$ remains arbitrarily often permitted (instead of termination we assume stabilization).

Without loss of generality, assume initial secret distribution $\initsecreto$. For items (1.) to (4.) we show that for any gossip state $(\initsecreto,\sigma)$ not satisfying the termination condition, call sequence $\sigma$ can be extended to a $\tau$ such that $(\initsecreto,\tau)$ satisfies the termination condition.
\begin{enumerate}
\item $\ANY$ is successful: Given $\initsecreto,\sigma\not\models\Exp_A$, there must be $a \neq b$ with $\initsecreto[\sigma]_a^{\inter b} = \emptyset$ or $\initsecreto[\sigma]_a^{\inter b} = \{b,\ov{b}\}$. In the first case, after $\sigma$, call $ab$ is permitted so that $\initsecreto[\sigma.ab]_a^{\inter b} = \{b\}$  and therefore $\initsecreto,\sigma.ab\models (b_a \et \neg \ov{b}_a) \vel (\neg b_a \et \ov{b}_a)$, as required. In the second case call $ab$ is also permitted and results in error correction so that also $\initsecreto[\sigma.ab]_a^{\inter b} = \{b\}$. In case there are still such $a,b$ with $\initsecreto[\sigma]_a^{\inter b} = \emptyset$ or $\initsecreto[\sigma]_a^{\inter b} = \{b,\ov{b}\}$ we repeat the procedure.

\item $\ANY$ is supersuccessful: Given $\initsecreto,\sigma\not\models E_A\Exp_A$, there must be $a,b \in A$ with $\initsecreto,\sigma\models\M_a \neg \Exp_b$. Therefore, there is $(\initsecretp,\tau)\sim_a(\initsecreto,\sigma)$ with $\initsecretp,\tau \not\models \Exp_b$, and a $c \in A$ (possibly $c=a$) with $\initsecretp,\tau\models \neg c_b \et \neg \ov{c}_b$ or $\initsecretp,\tau\models c_b \et \ov{c}_b$. After $\tau$, call $bc$ is permitted, so that, as for the first item, $\initsecretp,\tau.bc\models (c_b \et \neg \ov{c}_b) \vel (\neg c_b \et \ov{c}_b)$. Call $bc$ is also permitted after $\sigma$, as well as a subsequent call $ab$ (note that these calls $bc$ and $ab$ may cause a conflict in agent $a$ for some secret $d \neq c$, which would requires a further extension of the call sequence with a single call $ad$ when iterating the procedure). We now have that $\initsecreto,\sigma.bc.ab\models K_a (c_b \et \neg \ov{c}_b)$ or that $\initsecreto,\sigma.bc.ab\models K_a (\neg c_b \et \ov{c}_b)$, and therefore $\initsecreto,\sigma.bc.ab\models K_a ((c_b \et \neg \ov{c}_b) \vel (\neg c_b \et \ov{c}_b))$ as required. We repeat the procedure until $a$ knows that $b$ is an expert, and until there are no $a,b \in A$ with $\initsecreto,\sigma\models\M_a \neg \Exp_b$.

\item $\ANY$ is correct successful: without loss of generality, suppose there are $a,b$ such that $b_b \et b_a \et \neg\ov{b}_a$ is false. Let us assume that $b_a \et \neg\ov{b}_a$ are true, as we have already shown $\ANY$ to be successful, so that $b_b$ must be false. If so, as call $ab$ is permitted, after this call agent $a$ knows secret $b$ so that with Proposition~\ref{cor.knowledgecorrectbelief} we obtain $b_b \et b_a \et \neg\ov{b}_a$. We repeat the procedure until this holds for all pairs $c,d$ of agents in $A$. Note that for $c \neq b$ we may well have that $c_c \et c_a \et \neg\ov{c}_a$ is true but $\Kv_a c$ is false, which does not require any further calls from agent $a$ (unlike in the next item, where knowledge of secrets is required). 

\item $\ANY$ is correct supersuccessful: We may assume the previous item, so we can obtain $b_b \et b_a \et \neg\ov{b}_a$ for all agents $a$ and $b$. In case agent $a$ does not know this, as call $ab$ is permitted, it is  sufficient to have $a$ call $b$ in order to confirm that the value $b$ she holds is correct. Also note that this termination goal is stable: see Proposition~\ref{prop.stable}, below.
\item $\ANY$ is not first-correct successful: see Example~\ref{ONE}, below.
\item $\ANY$ is not first-correct supersuccessful: see Example~\ref{FIVE}, below.
\end{enumerate}
\end{proof}

Correct super success is the only stable property of a gossip protocol in this setting with errors. 
\begin{proposition} \label{prop.stable}
If $\initsecret,\sigma\models E_A \cExp_A$ and $\sigma \sqsubseteq \tau$, then $\initsecret,\tau\models E_A \cExp_A$
\end{proposition}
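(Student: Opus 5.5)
The plan is to reduce $E_A\cExp_A$ to a conjunction of knowledge-of-secret formulas, and then invoke Corollary~\ref{corollary.preservationofknowledge}. Concretely, I will show
\[ \models E_A\cExp_A \eq \Et_{a,b\in A} \Kv_a b, \]
as the paper already remarked. Given this equivalence, the proposition follows at once. From $\initsecret,\sigma\models E_A\cExp_A$ we get $\initsecret,\sigma\models\Kv_a b$ for all $a,b$. By Corollary~\ref{corollary.preservationofknowledge} (or, more sharply, Lemma~\ref{lemma.preservationofknowledge}), $\initsecret,\tau\models\Kv_a b$ for all $a,b$ whenever $\sigma\sqsubseteq\tau$. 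Applying the equivalence again yields $\initsecret,\tau\models E_A\cExp_A$.

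For the direction from left to right, fix $a,b$. Since $K_a\cExp_A$ implies $K_a\cExp_b$, and $\cExp_b$ contains the conjunct for secret $b$, we get
\[ K_a\bigl((b_b\et b_b\et\neg\ov{b}_b)\vel(\ov{b}_b\et\neg b_b\et\ov{b}_b)\bigr). \]
This alone does not give $K_a b_b\vel K_a\ov{b}_b$, so I will use $\cExp_a$ and locality. By \textbf{T}, $\cExp_a$ holds, so in the actual state either $b_a\et\neg\ov{b}_a$ or $\neg b_a\et\ov{b}_a$. Suppose the first. By Lemma~\ref{lemma.localll}, $a$ knows $b_a\et\neg\ov{b}_a$. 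Moreover $K_a\cExp_a$ gives $K_a\bigl((b_b\et b_a\et\neg\ov{b}_a)\vel(\ov{b}_b\et\neg b_a\et\ov{b}_a)\bigr)$. In every state $a$ considers possible, $b_a$ holds, so the second disjunct fails there and $b_b$ must hold. Hence $K_a b_b$. The other case symmetrically gives $K_a\ov{b}_b$. So $\Kv_a b$ holds.

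For the direction from right to left, assume $\Kv_c d$ for all $c,d$. Fix $a$; I must show $K_a\cExp_c$ for every $c$. It suffices that $K_a\Kv_c d$ holds for all $c,d$, because Proposition~\ref{prop.kcorrectb} turns each $\Kv_c d$ into the corresponding conjunct of $\cExp_c$ at every state, and \textbf{K} lifts this under $K_a$. The main obstacle is obtaining $K_a\Kv_c d$ from the mere truth of $\Kv_c d$, since $a$ need not know what $c$ knows. I would try to argue this via the relationship between knowledge and holdings:
\begin{itemize}
\item At the actual state, $a$ knows $d$'s value, say $K_a d_d$.
\item By Proposition~\ref{prop.kcorrectb}, every $c$ holds exactly the correct value of $d$.
\item It remains to derive $K_a\cExp_c$ directly, that is, $K_a(d_d\et d_c\et\neg\ov{d}_c)$.
\end{itemize}
Here I expect a genuine gap: $a$ knows $d_d$ but may not know $c$'s holding. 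If this fails, I would fall back to proving stability by a direct induction on $\tau$ instead of through the equivalence. Assume $E_A\cExp_A$ at $\sigma$ and take a next call $\kappa$. Every agent already holds every correct value with knowledge-level justification, and the left-to-right argument gives $\Kv_c d$ for all $c,d$. Then, by the $*$-set discussion in Proposition~\ref{prop.kcorrectb}, no call can introduce a wrong value: incorrect values are refused. For any state $a$ considers possible after $\kappa$, its restriction is a state $a$ considered possible before $\kappa$, where $\cExp_A$ and every $\Kv_c d$ held, by \textbf{4}-style reasoning applied to the whole conjunction. The refusal argument then shows $\cExp_A$ persists in that successor state. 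This inductive, per-possible-world version of the argument is the step I would check most carefully.
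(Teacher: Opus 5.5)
You take the same route as the paper, and your left-to-right half is correct. The gap you flag in the right-to-left half is real, and the paper's proof does not close it either. The paper also asserts $E_A\cExp_A\eq\Et_{a,b}\Kv_a b$ and then applies Lemma~\ref{lemma.preservationofknowledge}. But the conjuncts it starts from, $K_a((b_b\et b_a\et\neg\ov{b}_a)\vel(\ov{b}_b\et\neg b_a\et\ov{b}_a))$ for $a,b\in A$, only make up $\Et_a K_a\cExp_a$. The conjuncts $K_a\cExp_c$ with $c\neq a$ are silently dropped.

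In fact the converse fails. Take $\initsecreto=a|b|c|d$ and $\rho=ab.ac.ad.bc.bd.cd$. At $(\initsecreto,\rho.\rho)$ every pair has called twice, so $\Et_{x,y}\Kv_x y$ holds. But $a$ cannot distinguish this state from $(\initsecreto,ab.ac.ad.bd.bc.bc.ab.ac.ad.bd.bd.cd^d)$. In both she receives $b$, $c$, $d$ and then $abcd$ three times, and she is not involved in the other calls. In that history $c$ did not witness $bd$, so she cannot exclude that $b$ got $d$'s value from $a$. Then all her observations of $d$ may go back to the single call $ad$, in a world where $\ov{d}$ is the true value and that call was $ad^d$. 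So she does not know $d$'s secret before the final $cd^d$. The received $\ov{d}$ is therefore not refused by $*$, and afterwards she is not a correct expert. Hence $K_a\cExp_c$, and so $E_A\cExp_A$, fails at $(\initsecreto,\rho.\rho)$. As far as I can check, this also contradicts the claim made for this very sequence in Example~\ref{FIVE}.

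Your fallback induction breaks at the step you marked. Positive introspection applied to $K_a\cExp_A$ gives $K_a\cExp_A$ at each $a$-accessible state $(\initsecretp,\tau)$, hence $\Kv_a d$ there. It gives nothing about $\Kv_c d$ for $c\neq a$: that would need $K_a\Kv_c d$ at the actual state, a higher-order fact that $E_A\cExp_A$ does not contain. Suppose $\tau$ is error-free and $c$ holds the correct value of $d$ without knowing it. Then the alternative next call in which $c$ receives the flipped value is not blocked by $*$, and $\cExp_c$ fails in a state $a$ considers possible. This is exactly the mechanism in the example above.

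Your refusal argument does go through under the stronger hypothesis $\Et_a K_a\Et_{c,d}\Kv_c d$. That hypothesis is stable by the argument of Lemma~\ref{lemma.preservationofknowledge}, and it implies $E_A\cExp_A$ by Proposition~\ref{prop.kcorrectb}. Your first route likewise proves stability of $\Et_{a,b}\Kv_a b$, equivalently of $\Et_a K_a\cExp_a$. To get the proposition as stated, you would need a further argument: that $E_A\cExp_A$ itself forces, in every error-free history an agent considers possible, that all other agents already know all secrets. Neither your proposal nor the paper's proof supplies one.
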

\begin{proof}
We observe that $E_A \cExp_A$ is equivalent to the conjunction of $K_a ((b_b \et b_a \et \neg \ov{b}_a) \vel (\ov{b}_b \et \neg b_a \et \ov{b}_a))$ for all agents $a,b \in A$. Such a conjunct is equivalent to $K_a (b_b \et b_a \et \neg \ov{b}_a) \vel K_a (\ov{b}_b \et \neg b_a \et \ov{b}_a)$ (because agents know their local propositions, Lemma~\ref{lemma.localll}), and applying Corollary~\ref{cor.knowledgecorrectbelief2}  therefore equivalent to $K_a b_a \vel K_a \ov{b}_a$ in our semantics, in other words, equivalent to $\Kv_a b$. We obtained that $E_A \Exp_A$ is equivalent to $\Et_{a,b \in A} \Kv_a b$. Now applying Lemma~\ref{lemma.preservationofknowledge}, we immediately obtain that $\initsecret,\sigma\models E_A \Exp_A$ and $\sigma \sqsubseteq \tau$ imply $\initsecret,\tau\models E_A \Exp_A$.
\end{proof}

\subsection{Examples of successful and supersuccessful termination} 

We give examples of successful and supersuccessful termination, including the phenomenon of {\em lucky calls} where an agent may learn that another agent is an expert without calling them, that comes with novel variations in this setting with errors.

\begin{example}\label{ONE}
First, success. Assume $\initsecreto = a|b|c|d$. Consider call sequence $ab^b{\!\!}.bc.bd.cd.ab.ab$. In the first call, $a$ receives a faulty value $\ov{b}$ of $b$'s secret. Then, $b,c,d$ exchange all secrets between them. Then $a$ calls $b$ again and now has a conflict for $b$, and finally makes another call wherein she self-corrects and now knows the secret of $b$. A certain road towards resolving a conflict about $b$ is calling $b$, even when there might be other roads achieving the same goal. Also note that agent $b$ does not get a conflict about his own secret in that final call $ab$: the value $\ov{b}$ he obtains from agent $a$ he knows to be incorrect and therefore discards. Schematically the execution is:

\medskip

\noindent $
a|b|c|d \quad \stackrel{ab^b}\imp \quad a\ov{b}|ab|c|d \quad \stackrel{bc.bd.cd}\imp \quad a\ov{b}|abcd|abcd|abcd \quad \stackrel{ab}\imp \quad \\ a\un{b}cd|abcd|abcd|abcd \quad \stackrel{ab}\imp \quad abcd|abcd|abcd|abcd $

\medskip

\noindent We have that $a|b|c|d, ab^b{\!\!}.bc.bd.cd.ab.ab \models \cExp_A$.

Now consider $ab^b{\!\!}.ac.cd.da.ab.ab$ wherein agent $a$ disseminates an incorrect value for $b$ to the other agents. Then, as before, $a$ learns in a call with $b$ that the value of $b$ may be faulty and confirms the correct value $b$ in another call $ab$ (again obtaining the required majority of two independently obtained values $b$ over one $\ov{b}$). But agents $c$ and $d$ still incorrectly believe that $\ov{b}$ is the secret of $b$. Schematically:

\medskip

\noindent $
a|b|c|d \quad \stackrel{ab^b}\imp \quad a\ov{b}|ab|c|d \quad \stackrel{ac}\imp \quad a\ov{b}c|ab|a\ov{b}c|d \quad \stackrel{cd.da}\imp \quad a\ov{b}cd|ab|a\ov{b}cd|a\ov{b}cd \quad \stackrel{ab}\imp \quad \\ a\un{b}cd|abcd|a\ov{b}cd|a\ov{b}cd  \quad \stackrel{ab}\imp \quad abcd|abcd|a\ov{b}cd|a\ov{b}cd $

\medskip

\noindent We have that $a|b|c|d, ab^b{\!\!}.ac.cd.da.ab.ab \models \Exp_A \et \neg \cExp_A$. 

Finally, consider $ab^b{\!\!}.bc.ad.cd.cb.db.ad.ab$. The first conflict now appears in call $cd$. Both $c$ and $d$ need to verify $b$'s secret. Calls $cb$ and $db$ may be in either order. In subsequent call $ad$, agent $a$ learns a conflicting value for $b$, but not agent $d$, who already knows the correct value of $b$ after the prior call $db$. Agent $a$ now calls $b$ to learn the correct value. We end up with $a|b|c|d, ab^b{\!\!}.bc.ad.cd.cb.db.ad.ab \models \cExp_A$.

\medskip

\noindent $
a|b|c|d \quad \stackrel{ab^b}\imp \quad a\ov{b}|ab|c|d \quad \stackrel{bc}\imp \quad a\ov{b}|abc|abc|d \quad \stackrel{ad}\imp \quad a\ov{b}d|abc|abc|a\ov{b}d \quad \stackrel{cd}\imp \quad \\ a\ov{b}d|abc|a\un{b}cd|a\un{b}cd  \quad \stackrel{cb.db}\imp \quad a\ov{b}d|abcd|abcd|abcd  \quad \stackrel{ad}\imp \quad \\ a\un{b}cd|abcd|abcd|abcd \quad \stackrel{ab}\imp \quad abcd|abcd|abcd|abcd $
\end{example}

\begin{example}\label{FIVE} 
Now, super success. First, consider the error-free eight call sequence \[ \sigma = ab.cd.ac.bd.ab.ad.bc.cd \] and the initial secret distribution $\initsecreto=a|b|c|d$. In error-free gossip this call sequence is supersuccessful and eight calls are optimal \cite{DitmarschGR23,hvdetal.lucky:2024}. In our semantics for gossip with errors the call sequence is successful and also correct successful as all agents hold the correct value: \[ \initsecreto,\sigma \models \Exp_A \et \cExp_A \] But it is not supersuccessful and therefore also not correct supersuccessful: \[ \initsecreto,\sigma \models \neg E_A\Exp_A \et \neg E_A\cExp_A \] 
For example, consider agent $a$. Agent $a$ considers it possible that the last call was $cd^d$, after which $c$ has a conflict for secret $d$ and is no longer an expert. From $(\initsecreto,\sigma) \sim_a (\initsecreto,ab.cd.ac.bd.ab.ad.bc.cd^d)$ it follows that $\initsecreto,\sigma \models \M_a \neg \Exp_c$ and therefore $\initsecreto,\sigma \not\models E_A \Exp_A$, and a fortiori also $\initsecreto,\sigma \not\models E_A \cExp_A$. 

In our setting the expert goal $\Exp_A$ has become `unstable', for example, extending $\sigma$ with the call $cd^d$ above also makes $c$ have a conflict, so that $\initsecreto,\sigma.cd^d \not\models \Exp_A$. 

It may further be of interest to observe that we still have that \[ \initsecreto,\sigma \models K_a \cExp_a \] Agent $a$ is a correct super expert (but not yet $b,c,d$). First, note that after call sequence $ab.cd.ac.bd.ab.ad$, agent $a$ only considers that call sequence possible (in call $ac$ she learns that the second call was $cd$, and in second call $ab$ she learns that the previous call was $bd$). Agent $a$ is a correct superexpert because she has for each other agent two independent observations of the correct value of their secret: $ab$ and $ab$ for the secret of $b$ (obvious), $ac$ and the subsequent $ab$ for the secret of $c$,\footnote{An error for $c$ cannot have been in call $ac$ or call $ab$, as the observed values would then be different, where it is important that $c$ did not call $b$ after call $ac$ and before call $ab$; if the error had been in call $cd$ then $b$ would have passed on the incorrect value to $a$ in second call $ab$ and $a$ would have observed a conflict for $c$ in that call; if the error had been in call $bd$, $a$ would similarly have had a conflict for $c$ in subsequent $ab$.} and $ac$ and $ad$ for the secret of $d$.\footnote{An error for $d$ cannot have been in calls $ac$ or $ad$, as the observed values of $d$ were the same. The error cannot have been in call $cd$, as call $bd$ would then have been correct so that $a$ would have observed conflict in $ad$. It cannot have been in $bd$, as the conflict would then already have appeared in the next $ab$. It cannot have been in $ab$ either, for the same reason.}

Next, consider \[ \tau = ab.ac.ad^d{\!\!}.ab.ac.bc.bd.bd.cd.cd \] After the second call $ac$, agent $a$ incorrectly believes that the secrets are $abc\ov{d}$. After call $bc$ agents $b$ and $c$ still incorrectly believe that the secrets are $abc\ov{d}$, in the first call $bd$ agent $b$ obtains a conflicting value for secret $d$ which is corrected in the second call $bd$, and similarly for $c$ in the subsequent $cd.cd$. We have that \[ I, \tau \models E_A\Exp_A \et \neg E_A\cExp_A \] Not only does $a$ still hold an incorrect secret of agent $d$, but $a$ also believes that $b$ and $c$ hold that incorrect secret of $d$.

Finally, a correct supersuccessful call sequence can be obtained by all agents calling each other twice (and therefore having two independent observations for all secrets, establishing knowledge), thus in $2\binom{n}{2}$ calls ($12$ calls for $n=4$) in a schedule not containing errors: let $\rho = ab.ac.ad.bc.bd.cd$ then we have \[ I, \rho.\rho \models E_A\Exp_A \et E_A\cExp_A \] Any faulty call occurring in such a call sequence seems likely to lead to lengthier sequences, but as long as scheduling of calls is fair, all calls will occur again at some stage and thus any conflict will be resolved. It is unclear if call sequences with faulty calls always take longer to reach correct super success, because agent $a$ correcting a conflict for secret $b$ thus learns that all other values $c$ she holds must be correct, which speeds up the process again.
\end{example}

We close with another example explaining the phenomenon of lucky calls and how this interacts with incorrect values for secrets. A \emph{lucky} call is a call wherein an agent learns that another agent is an expert without calling that agent \cite{hvdetal.lucky:2024}. So it is then not necessary to call that agent in order to find out whether the agent knows all secrets.

\begin{example} \label{example.lucky}
Consider the call sequence $ac.ad.ac.bc.bc.ac$ without transmission errors, and initial secret distribution $a|b|c|d$. After the first three calls, agent $a$ knows that agents $a,c,d$ know the secrets $a,c,d$. In the call $bc$, agents $b$ and $c$ become experts.\footnote{We added another call $bc$ to keep agent $a$ in the dark about the identity of the callers in $bc.bc$. Such an extra call is not needed in the asynchronous version on which the example is based.} In the call $ac$, agent $a$ also becomes an expert and, as $c$ already was an expert in that call, it is easy to see that \emph{agent $a$ also learns that $b$ must be an expert}. Agent $a$ is lucky. However, $a$ does not know if $b$ became an expert by calling $c$ or by calling $d$: she cannot distinguish the actual call sequence from the call sequence $ac.ad.ac.bd.cd.ac$. Agent $a$ therefore also does not know whether $d$ is an expert, as $d$ is not one in the actual call sequence, but is an expert in the other call sequence, that $a$ considers possible.

Now consider the call sequence $ac.ad.ac.b^bd.cd.ac$ with a transmission error made in the call $b^bd$ so that $d$ passes on the incorrect value of $b$ to $c$ and after that $c$ tot $a$. Then after final call $ac$ agents $a$, $c$ and $d$ all incorrectly believe that $b$'s secret is $\ov{b}$ (we have that, respectively, $\ov{b}_a \et \neg b_a$, $\ov{b}_c \et \neg b_c$ and $\ov{b}_d \et \neg b_d$ hold). Agent $a$ now incorrectly believes that $b$ is an expert holding secrets $a\ov{b}cd$, in the sense that $a$ considers possible (for example) gossip state $(a|\ov{b}|c|d, ac.ad.ac.bc.bc.ac)$ and where $a|\ov{b}|c|d, ac.ad.ac.bc.bc.ac \models a_b\et \ov{b}_b \et c_b \et d_b$, whereas actually $b$ is the only correct expert: $a|b|c|d, ac.ad.ac.b^bd.cd.ac \models a_b\et b_b \et c_b \et d_b$!
%
\end{example}

\section{Weaker and stronger call semantics} \label{section.weakerandstronger}

In this section we consider alternative call semantics. First, instead of agents exchanging all the secrets they know, there are more refined message semantics than that, wherein the agents only send or only receive these secrets, or only some but not all of the secrets, or  only one secret per message. In gossip, merely sending is known as `push', merely receiving is known as `pull', and the exchange of secrets of our call semantics is known as `push-pull' \cite{hedetniemietal:1988}. These are standard variations. We will not consider those variations but stick to agents exchanging all the secrets they hold. This comes closer to exchanging `all you know' in order to speed up information dissemination as much as possible, as in full information protocols \cite{MosesT88} and in resolving distributed knowledge \cite{AgotnesW17}. Still, from the perspective of exchanging all that is known, one could say that we hold somewhat of a middle ground, as we use some but not all the information available from the history of prior calls. We use some information, because the agents reason over call sequences containing at most one error: this implicitly means that agents can `count' the number of observations of the value of a secret (namely how often this occurs in a call sequence they consider possible), and can `remember' having corrected a conflicting value for a secret (namely if they only consider call sequences possible wherein they had to self-correct): quotes that are indeed intended to scare, as the logical language does not have such primitives: neither is explicit. We now explore some other options to use the history of calls in the semantics.

We first consider an alternative message semantics wherein agents only store, for each other agent, the information exchanged in the last call with that agent. This is a less expressive semantics $\models^\last$. Such protocols with bounded memory have been considered in distributed computing in \cite{DH08,abs-1803-03042}. Second, we consider a message semantics wherein agents store the entire history of calls and secret holdings of all other agents. This is more expressive semantics $\models^\full$. Those message semantics are known in distributed computing as {\em full-information protocols} \cite{MosesT88}.

\subsection{Call semantics only storing the last call} 

Consider an alternative semantics for gossip wherein: \begin{quote} Each agent $a$ stores for all agents $b \neq a$, for the last call involving $a$ and $b$, the pair $(X,Y)$ consisting of the set $X$ of values of secrets contributed by $a$ to that call, and the set $Y$ of secret values received by $a$ from $b$ in that call. \end{quote} Note it says {\em received by $a$} and not {\em sent by $b$}, as a transmission error may have occurred. In case no call between $a$ and $b$ took place yet, agent $a$ stores its initial secret value $a$ or $\ov{a}$ for herself, and $\emptyset$ for $b$: this is the pair $(\{a\},\emptyset)$ or $(\{\ov{a}\},\emptyset)$. An initial secret distribution $\initsecret$ and a call sequence $\sigma$ thus determine an $n$-tuple that we denote $\last(\initsecret,\sigma)$, with for each agent $a$ a projection $\last(\initsecret,\sigma)_a$. From $\last(\initsecret,\sigma)_a$ we can determine agent $a$'s holding of values denoted $\initsecret[\sigma]_a^\last$, where we do not specify how this is exactly determined. It is not entirely trivial: although each agent stores the last call involving any other agent, the agent does not store the order of these $n-1$ calls. It is {\bf not} known which of these was the actual last call, in which case we could have simply taken the union $X \union Y$ of the pair $(X,Y)$ associated with that call. And there is the issue of errors.

This setup defines a logical semantics denoted $\models^\last$ to distinguish it from our $\models$ semantics. We define $(\initsecret,\sigma) \sim^\last_a (\initsecretp,\tau)$ iff $\initsecret_a = \initsecretp_a$ and $\last(\sigma)_a = \last(\tau)_a$. With that, the semantics for knowledge becomes: \begin{itemize} \item $\initsecret,\sigma \models^\last K_a \phi$ iff $\initsecretp,\tau \models \phi$ for all $\initsecretp$ and $\tau$ such that $(\initsecret,\sigma) \sim^\last_a (\initsecretp,\tau)$. \end{itemize} The other inductive clauses of $\models^\last$, for negation and conjunction, are then as for the satisfaction relation $\models$. What atomic propositions should be associated with this storage of last calls is unclear, and a bit up for grabs. It is at least clear that the set of atomic propositions should now be different. We need more atoms to describe the local state of an agent, not merely $b_a$ for `agent $a$ holds secret value $b$' but instead of that, or additional to that, $b^c_a$ for `agent $a$ received secret value $b$ from $c$' in the last call with $c$; and with, let us say, an initial atom $a^a_a$ as well (and all that for $\ov{b}$ and $\ov{a}$ as well). Agent $a$ would then have a conflict for the secret of $b$ if she holds conflicting values $\ov{b}^c$ as well as $b^d$ for the secret of $b$. And we would need a mechanism to express self-correction in the language. We will not go into all that, or at least not go into all that even further, but instead: (i) give some results for the error-free case of the last-call semantics, and (ii) give examples demonstrating the different results to be expected for the single-error case of the last-call semantics.

\paragraph*{Error-free last-call semantics}
Even for error-free gossip there are differences between the last-call semantics $\models^\last$ and the standard semantics $\models$. Assume a $\models^\last$ semantics that makes atom $b_a$ true if agent $a$ received secret $b$ in the last call from some agent $c$. So, along the line above, $b_a$ is true iff there is a $c$ such that $b^c_a$ is true. We recall that the initial secret distribution $\initsecreto$ is a stand-in for the (unique) initial secret distribution in error-free gossip.

The following can be shown. First, success corresponds for the last-call and for the standard semantics. This is not surprising. Second, an agent may be a super expert for the standard semantics but not for the last call semantics: this is because an agent may be \emph{lucky} and learn that another agent is an expert without calling that agent \cite{hvdetal.lucky:2024}. This is more remarkable. Third, a last-call super expert is also a standard super expert. Fourth, super success for the last-call semantics implies super success for the standard semantics. 
\begin{proposition}\label{prop.luckrunsout} \
\begin{enumerate}
\item $\initsecreto,\sigma\models \Exp_A$ iff $\initsecreto,\sigma\models^\last \Exp_A$
\item $\initsecreto,\sigma\models K_a\Exp_A$ does not imply $\initsecreto,\sigma\models^\last K_a \Exp_A$
\item $\initsecreto,\sigma\models^\last K_a\Exp_A$ implies $\initsecreto,\sigma\models K_a \Exp_A$
\item $\initsecreto,\sigma\models^\last E_A \Exp_A$ implies $\initsecreto,\sigma\models E_A\Exp_A$
\end{enumerate}
\end{proposition}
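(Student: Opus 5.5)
We work in the error-free setting, so every call sequence $\sigma$ contains only correct calls and the only initial distribution is $\initsecreto$; the relation $\sim_a$ reduces to the error-free clauses ($\sigma.ab \sim_a \tau.ab$ iff $\sigma\sim_a\tau$ and $\sigma_b=\tau_b$, and calls not involving $a$ are indistinguishable for $a$). The basic tool for all four items is a \emph{last-call lemma}: after $\sigma$, the holding $\sigma_a$ is the union, over all $b\neq a$, of $X\union Y$ for the stored pair $(X,Y)$ of the last call between $a$ and $b$, together with $\{a\}$. We prove this by induction on $\sigma$. The key monotonicity fact is that, without errors, holdings only grow, so $a$'s holding right after her most recent call (with some $b$) already contains everything she held before. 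That holding equals $X\union Y$ for the pair stored with $b$, and every other stored pair is contained in it. Hence $\initsecreto[\sigma]^\last_a$ can be taken to be this union, and it coincides with $\sigma_a$.

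Item 1 then follows directly. $\Exp_A$ depends only on the valuation, and the atoms $b_a$ under $\models^\last$ are true iff $b\in\bigcup_c(X_c\union Y_c)$, which by the lemma equals $\sigma_a$. Item 2 is proved by a counterexample, using the lucky call of Example~\ref{example.lucky} with $\sigma = ac.ad.ac.bc.bc.ac$. Under $\models$, agent $a$ knows that $b$ is an expert, and we expect her to know that $c$ is one as well. To match the statement ($K_a\Exp_A$) we may instead need a sequence in which $a$ learns, via a lucky call, that every agent is an expert; we would look for such a sequence by appending calls in which $a$ confirms the remaining agents directly, while keeping some agent confirmed only indirectly. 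Under $\models^\last$, agent $a$ stores only last-call pairs, for instance $(\{a,c,d\},\{a,b,c,d\})$ with $c$, $(\{a\},\{d\})$ with $d$, and $(\{a\},\emptyset)$ with $b$. We then exhibit a $\tau$ with the same stored pairs, obtained by reordering or altering the calls among other agents, in which $b$ lacks some secret. The stored data do not record the intermediate call $ac$ that witnessed $c$'s earlier state, so the order information behind the lucky inference is lost.

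For item 3 we show that $\sim_a \subseteq \sim^\last_a$ on error-free sequences: if $\sigma\sim_a\tau$ then $\last(\sigma)_a=\last(\tau)_a$. This holds by induction, because $\sim_a$ matches $a$'s calls position by position together with the partner's holding, and $a$'s own holding is matched by Lemma~\ref{lemma.simtolocal}. Consequently $K_a\phi$ under $\models^\last$ quantifies over a superset of the worlds quantified over by $\models$. Since $\Exp_A$ is evaluated identically under both semantics (item 1), $\models^\last K_a\Exp_A$ implies $\models K_a\Exp_A$. Item 4 is the conjunction of item 3 over all $a\in A$.

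The main obstacle is item 2: constructing a concrete indistinguishable $\tau$ under $\sim^\last_a$, and making sure the counterexample really gives $K_a\Exp_A$ rather than only $K_a\Exp_b$ under $\models$. We will try a small extension of Example~\ref{example.lucky} and check both conditions by hand.
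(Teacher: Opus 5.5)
Your items 1 and 4 are fine and essentially follow the paper. The paper reads $b_a$ off the received sets $Y$ only, but your union with the $X$'s coincides with that: every secret other than her own that $a$ holds was received from some $c$, and $c$ re-sends it in the last call with $a$.

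For item 3 you take a different route from the paper, and yours is better. The paper argues via last-call pairs with $X\cup Y=A$, which only handles partners whose expertise is visible in $a$'s stored pair. Your inclusion $\sim_a\subseteq\sim^\last_a$ uses that $a$'s calls have the same positions and partners, the partner holdings agree, and Lemma~\ref{lemma.simtolocal} gives $a$'s own holding. Together with item 1 this covers every case. The paper's restriction is a real gap: as shown next, $\models^\last K_a\Exp_b$ can hold even though $a$ never called $b$.

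\textbf{The gap is item 2.} It is an existence claim, you give no witness, and the obvious candidate fails. After $ac.ad.ac.bc.bc.ac$ the pair stored for $d$ is $(\{a,c\},\{d\})$, not $(\{a\},\{d\})$. Also $K_a\Exp_A$ is false there, since $d$ is not an expert.

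After the minimal extension by $ad$ (the paper's extension, justified only by the remark that $a$ never called $b$), the stored data are $(\{a\},\emptyset)$ for $b$, $(\{a,c,d\},A)$ for $c$, and $(A,\{a,c,d\})$ for $d$. These force $b$ to be an expert in every $\tau$ with the same data:
\begin{enumerate}
\item Agent $a$ lacks $b$'s secret at her last call with $c$ but has it at her last call with $d$. So the last $ad$ comes later, and $d$ lacks $b$'s secret until then.
\item Hence before the last $ac$ neither $a$ nor $d$ holds $b$'s secret, so $c$ obtained it in a call $bc$.
\item After the last such $bc$ preceding the last $ac$, agent $c$ can make no call before that $ac$: a call with $a$ or with $d$ would hand them $b$'s secret too early.
\item So $b$ already holds what $c$ reports in the last $ac$, namely $A$.
\end{enumerate}
Thus $\models^\last K_a\Exp_A$ holds here too. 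The order information behind the lucky inference is not lost, contrary to your heuristic, and checking by hand will not produce your $\tau$.

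\textbf{Repair.} You need to overwrite the informative pair for $c$. Take $\sigma'=ac.ad.ac.bc.bc.ac.ad.ac$.
\begin{enumerate}
\item $\initsecreto,\sigma'\models K_a\Exp_A$ holds. It holds after the seventh call and persists: in error-free gossip $\sigma.\kappa\sim_a\tau.\kappa'$ implies $\sigma\sim_a\tau$, and holdings only grow.
\item The last-call data of $a$ are now $(\{a\},\emptyset)$ for $b$, $(A,A)$ for $c$, and $(A,\{a,c,d\})$ for $d$.
\item The sequence $\tau=ac.ac.ac.ad.bc.ac.ad.ac$ has the same length and yields exactly the same data. After $\tau$, agent $b$ holds only $\{a,b,c\}$.
\end{enumerate}
Hence $(\initsecreto,\sigma')\sim^\last_a(\initsecreto,\tau)$, and by item 1, $\initsecreto,\sigma'\not\models^\last K_a\Exp_A$.
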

\begin{proof} \
\begin{enumerate}
\item $(\Imp)$: If $a$ holds secret $b$, then (for $b \neq a$) $a$ must have received secret $b$ from some agent $c$, not necessarily for the first time in the last call with $c$, but then $c$ would still have sent $b$ to $a$ in the last call between $a$ and $c$. This holds for all $a$ and $b$. \\ $(\Pmi)$: If $\initsecreto,\sigma\models^\last \Exp_A$ then some agent $c$ sent $b$ to $a$ in the last call between $a$ and $c$. Therefore $a$ holds the secret $b$. 
\item Consider the call sequence $ac.ad.ac.bc.bc.ac$ without transmission errors from Example~\ref{example.lucky}, wherein agent $a$ is lucky about agent $b$ in final call $ac$. At this stage agent $a$ knows that $a,b,c$ are experts. Now extend this call sequence with call $ad$. Then $a$ is a super expert: $K_a\Exp_A$ is now true. But agent $a$ never called $b$ and therefore stores $(\{a\},\emptyset)$ for the last call involving $b$. Therefore, although $\initsecreto,\sigma\models^\last \Exp_b$, we still have $\initsecreto,\sigma\not\models^\last K_a \Exp_b$ and therefore $\initsecreto,\sigma\not\models^\last K_a \Exp_A$.
\item 
Let pair $(X,Y)$ be associated with the last call between $a$ and $c$. If $X \union Y = A$, then $a$ knows that that $c$ is an expert after the call, and knows that in either semantics, that is, $\initsecreto,\sigma\models^\last K_a\Exp_c$ and $\initsecreto,\sigma\models K_a \Exp_c$. From assumption $\initsecreto,\sigma\models^\last K_a\Exp_A$ it follows that $\initsecreto,\sigma\models^\last K_a\Exp_c$ holds for all $c \in A$. Therefore also $\initsecreto,\sigma\models K_a \Exp_A$. 
\item The fourth item is a consequence of the third.
\end{enumerate}
\end{proof}
Concerning the fourth item, we conjecture that also:  
\[ \initsecreto,\sigma\models E_A\Exp_A \text{ implies } \initsecreto,\sigma\models^\last E_A \Exp_A \] With an asynchronous call semantics, even when an agent $a$ is lucky about $b$, in order to become super experts, agent $b$ still has to call $a$ in order to learn that $a$ is an expert. So in the end, {\em all agents have to be involved in a call to each other after which they are both expert} to reach the super expert goal \cite[Lemma 34]{hvdetal.lucky:2024}. If it were shown that this requirement also holds for a synchronous semantics, then the implication would be established. 

\paragraph*{Last-call semantics with errors}
However, let us now consider some scenarios involving the last-call semantics $\models^\last$ and faulty calls. A modal logical issue with the last-call semantics is what `knowing the secret' now means, as the accessibility relations of the last-call semantics, and thus the notion of knowledge, are completely determined by the local states of the agents. In the $\models^\last$ semantics `$a$ knows secret $b$' therefore necessarily corresponds to a local state value of agent $a$, whereas in our semantics `$a$ knows secret $b$' corresponds to $a$ knowing a local state value of agent $b$. Obvious candidates for `$a$ knows secret $b$' in the last-call semantics are: (i) the value of $b$ received by $a$ in the last call with $b$, or (ii) the same values of $b$ received by $a$ from all agents $c$. And all that in their last call with $a$. Unfortunately, for both interpretations, where we again write $\Kv_a b$ for `$a$ knows secret $b$', and where $\initsecret$ is an initial secret distribution:
\begin{observation} \label{observation.one}
$\initsecret,\sigma\models^\last \Kv_a b$ does not imply $\initsecret,\sigma\models \Kv_a b$.
\end{observation}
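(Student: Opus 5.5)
Observation~\ref{observation.one} is a non-implication, so the plan is to give one counterexample for each of the two readings of $\Kv_a b$ under $\models^\last$. In each case I will exhibit an initial distribution $\initsecret$ and a call sequence $\sigma$ such that $a$'s stored last-call information fixes a value for $b$. I then show $\initsecret,\sigma\not\models \Kv_a b$ by producing a gossip state $(\initsecretp,\tau)\sim_a(\initsecret,\sigma)$ in the standard observation relation where the secret of $b$ has the other value.

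For reading (i), the value $a$ received in its last call with $b$, the simplest witness suffices. Take two agents, $\initsecret=a|b$ and $\sigma=ab$. In the stored pair $(\{a\},\{a,b\})$ the value received from $b$ is $b$, so $\initsecret,\sigma\models^\last \Kv_a b$. Yet Example~\ref{example.dadada} already gives $(a|b,ab)\sim_a(a|\ov{b},ab^b)$, and in that state $\ov{b}_b$ holds, so $\initsecret,\sigma\not\models K_a b_b$. Likewise $\initsecret,\sigma\not\models K_a\ov{b}_b$, because the actual state satisfies $b_b$ and $K_a\ov{b}_b\imp\ov{b}_b$ by \textbf{T}. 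Hence $\Kv_a b$ fails.

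For reading (ii), where all last calls of $a$ with any $c$ deliver the same value of $b$, the same witness still works if it counts as vacuous/unanimous: in $ab$ only $b$ supplied a value. A more convincing witness uses the forgetting of earlier calls. Take $\sigma=ab.ab^b.ab$ from Example~\ref{example.dadada}, but exploit the last-call storage differently. I would use three agents and $\sigma=ab^b.bc.ab$, or, if that causes trouble, simply $ab$ again. The intended point is that the last-call record of $a$ shows only unanimous values $b$. The standard semantics still keeps alive the alternative $(a|\ov{b}|c,\,ab.bc.ab^b)$ (or its analogue) in which the single error is in the last call and $b$'s true value is $\ov{b}$. That alternative produces the same holding for $b$ as seen by $a$ in the final call, so it is $\sim_a$-related by Definition~\ref{def.observationrelation}.

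The main obstacle is checking the standard indistinguishability carefully, clause by clause in Definition~\ref{def.observationrelation}, including the swapped-holding condition $\initsecret[\sigma]_b=\initsecretp[\tau]_b^{\pm b}$. This is routine for $\sigma=ab$ and was already done in Example~\ref{example.dadada}, so I would lead with that witness and present the others only as illustrations.
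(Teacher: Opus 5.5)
Your treatment of reading (i) is correct and is the paper's own argument made concrete. After $ab$ from $a|b$, the value $a$ received from $b$ is $b$. Meanwhile $(a|b,ab)\sim_a(a|\ov{b},ab^b)$ rules out $K_a b_b$, and factivity rules out $K_a\ov{b}_b$. One minor slip: the stored pair is $(\{a\},\{b\})$, not $(\{a\},\{a,b\})$, because $Y$ is what $b$ sent, namely its holding before the call. This does not affect the argument.

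The gap is reading (ii).

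\emph{Why your two-agent witness does not settle (ii).} For $n=2$, reading (ii) is not vacuous. It is literally reading (i), since $b$ is the only other agent. So the $ab$ witness says nothing about (ii) beyond (i). For $n\ge 3$ it satisfies (ii) only if agents who never spoke to $a$ count as agreeing, which again collapses (ii) into (i).

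\emph{Why the three-agent candidate fails.} With $\sigma=ab^b.bc.ab$, agent $a$ never calls $c$. Her stored pair for $c$ is therefore $(\{a\},\emptyset)$, and she has received no value of $b$ from $c$. Your standard-side claim $(a|b|c,ab^b.bc.ab)\sim_a(a|\ov{b}|c,ab.bc.ab^b)$ is in fact correct. It is the last-call hypothesis that fails.

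\emph{The missing idea.} A genuine witness for (ii) needs every other agent to report a value of $b$ in its last call with $a$. It must also explain why that unanimity is still not justified. The reason is echo: the other agents can be returning a value that $a$ herself passed to them, so their reports are not independent observations. The paper uses $a|b|c|d$ with $b^ba.ac.ad.ca.da$. Agent $a$ receives the faulty $\ov{b}$, spreads it to $c$ and $d$, and both hand it back in their last calls with her. So all three others unanimously report $\ov{b}$, which is wrong. Standard $\Kv_a b$ then fails immediately: $K_a\ov{b}_b$ fails by factivity, and $K_a b_b$ fails by Proposition~\ref{prop.kcorrectb}, since $a$ holds only $\ov{b}$.

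\emph{A fix in your own style.} The same echo idea gives a correct-but-unjustified witness: $a|b|c$ with $ab.ac.ca$. Both last calls deliver $b$, yet $(a|b|c,ab.ac.ca)\sim_a(a|\ov{b}|c,ab^b.ac.ca)$. The first-call clause holds with the swap, since $\{b\}=\{\ov{b}\}^{\pm b}$. Moreover, $c$'s holdings before $ac$ and before $ca$ coincide in the two states.
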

For reading (i) it is obvious, as in the single call from $b$ to $a$ there could be a transmission error for secret $b$. But for reading (ii) the implication also fails. Consider four agents, secret distribution $\initsecreto = a|b|c|d$, and call sequence $b^ba.ac.ad.ca.da$. Then $a$ received $\ov{b}$ from all three other agents in their last call. But $\ov{b}$ is still the incorrect value.

In the last-call semantics we can still imagine $a$ eventually getting to know the correct value of a secret by holding conflicting values $b$ and $\ov{b}$ and receiving $b$ from another agent (and self-correcting on a majority of two $b$'s), or even by already holding a value $b$ and receiving $b$ from another agent (a majority of two $b$'s without need to self-correct). But even after $a$ receives value $b$ from agent $b$ a thousand times, when in the next call with $b$ she after all receives $\ov{b}$ then she will have a conflict, as she only compares this to the value $b$ received in the final of those one thousand calls. Therefore:
\begin{observation} \label{observation.qwerty}
$\initsecret,\sigma\models^\last \Kv_a b$ and $\sigma\sqsubseteq\tau$ do not imply $\initsecret,\sigma\models^\last \Kv_a b$.
\end{observation}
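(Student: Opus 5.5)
The plan is to refute the implication with one concrete counterexample that works for both readings (i) and (ii) of $\Kv_a b$ in the last-call semantics. I read the conclusion of the observation as concerning $\tau$, that is, as the claim that $\initsecret,\tau\models^\last \Kv_a b$ may fail. The mechanism is the one described just before the statement. Under $\models^\last$, agent $a$ only compares a newly received value of $b$ with the values stored from her \emph{last} calls. A single faulty call with $b$ therefore overwrites all earlier confirmations and produces a conflict.

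Concretely, I take three agents, the initial secret distribution $\initsecreto = a|b|c$, and the error-free call sequence $\sigma = bc.ab.ac$.
\begin{itemize}
\item In $bc$, agent $c$ receives value $b$.
\item In $ab$, agent $a$ stores the pair $(\{a\},\{a,b\})$ for $b$, with $\{a\}$ sent and $\{a,b\}$ received.
\item In $ac$, agent $a$ stores $(\{a,b\},\{a,b,c\})$ for $c$.
\end{itemize}
So after $\sigma$, agent $a$ received value $b$ in her last call with $b$, which gives reading (i). She also received the same value $b$ from every agent that sent her a value of $b$, namely both $b$ and $c$, which gives reading (ii). Hence $\initsecreto,\sigma\models^\last \Kv_a b$ on either reading. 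Moreover, this value is the correct one.

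Next I extend to $\tau = \sigma.ab^b$, which is a call sequence because $\sigma$ is error-free. Now $a$'s stored pair for $b$ is replaced by one whose received component contains $\ov{b}$, while her stored pair for $c$ still contains $b$.
\begin{itemize}
\item For reading (ii), the values of $b$ received in the last calls disagree, so $\Kv_a b$ fails outright.
\item For reading (i), $a$ now holds value $b$ (via $c$) and value $\ov{b}$ (via $b$), which is exactly the conflict case described before the statement. Since any sensible notion of knowing a value excludes holding conflicting values, $\Kv_a b$ fails here too. Even without appealing to the conflict, the value $a$ would ``know'' switches from the correct $b$ to the incorrect $\ov{b}$, so the earlier knowledge of the value $b$ is not preserved.
\end{itemize}

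The main obstacle is that $\models^\last$ was deliberately left underspecified: the atoms and the way $\initsecret[\sigma]^\last_a$ is computed from $\last(\initsecret,\sigma)_a$ were not fixed. The argument therefore has to rest only on the stored pairs themselves. For reading (i) there is a further subtlety, because a bare ``value received in the last call with $b$'' is always present, so failure of $\Kv_a b$ depends on treating a conflict as excluding knowledge. I will state that assumption explicitly. If it is resisted, I will fall back on the weaker but still sufficient point that the known value changes, in the sense that $K_a b_b$ is not preserved.

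The example can also be made to resemble the ``thousand calls'' remark by repeating $ab$ many times before the final $ab^b$. The last-call storage is unaffected by such repetitions, so the same argument goes through.
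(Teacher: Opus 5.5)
Your proposal is correct and is essentially the paper's argument: after earlier confirmations of value $b$, one faulty call $ab^b$ overwrites $a$'s stored last call with $b$ and leaves her with a conflict. Like the paper, you need that such a conflict excludes $\Kv_a b$, at least for reading (i); the only real difference is that the paper uses repeated calls with $b$ alone, whereas your extra agent $c$ makes reading (ii) fail by outright disagreement. There are two minor slips: since $bc$ comes first, the stored pairs after $\sigma$ are $(\{a\},\{b,c\})$ for $b$ and $(\{a,b,c\},\{b,c\})$ for $c$ (harmless, since value $b$ is still received from both), and your fallback is not sufficient, because a switch of the known value from $b$ to $\ov{b}$ is compatible with $\Kv_a b$ being preserved, as remarked after Corollary~\ref{corollary.preservationofknowledge}.
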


In the standard semantics, there is a difference between correct belief, that is not knowledge of secrets, and justified correct belief, that counts as knowledge of secrets. But not in the last-call semantics. Now, correct belief is by definition known correct belief, as it is a feature of the local state.

Such negative results suggest that it would be very hard to get correct termination. But strangely enough this is not the case. Because (see the contrast with Observation~\ref{observation.one}):
\begin{observation}
$\initsecret,\sigma\models^\last \Et_{a,b \in A} \Kv_a b$ implies $\initsecret,\sigma\models \Et_{a,b \in A} \Kv_a b$.
\end{observation}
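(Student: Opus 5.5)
Since the last-call semantics is only sketched, the plan is to fix one interpretation and argue for it. We read $\Kv_a b$ in $\models^\last$ in the stronger way (ii): every agent $c \neq a$ has had at least one call with $a$, and in its last call with $a$ each such $c$ delivered the same single value of secret $b$. This value must also agree with what $a$ holds for $b$ via its stored pair when $a=b$. Reading (i) requires $b$ itself to deliver the value, but under the conjunction over all $a,b$ the stronger reading is the natural one. We then show directly that the premise forces the absence of any faulty call whose effect survives. From this we derive $\Kv_a b$ in the standard semantics by an indistinguishability argument.

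The first step is to establish that after $\sigma$ no agent holds an incorrect value in its last-call record. Suppose the unique faulty call in $\sigma$ concerns secret $d$, say in a call between $e$ and $f$ where $f$ receives $\ov{d}$ instead of the value held by $e$.
\begin{itemize}
\item If this call is the last call between $e$ and $f$, then in $f$'s record the value of $d$ received from $e$ is wrong.
\item By the premise, all agents $c$ must send $f$ the same value of $d$ in their last calls. In particular, when $f=d$, agent $d$ knows its own value by stubbornness (Lemma~\ref{lemma.stubborn}, transferred to the stored initial pair), and we have a conflict.
\item When $f \neq d$, agent $d$ itself sent $f$ some value of $d$ in its last call with $f$. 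Since only one faulty call exists, that value is correct, which contradicts uniformity.
\item If the faulty call is not the last call between $e$ and $f$, the wrong value may still have propagated. Every propagated value reaching some agent $g$ conflicts with the correct value that $d$ sends directly to $g$ in their last call, and that direct call is error-free.
\end{itemize}
So uniformity forces all recorded values of each secret $b$ to be the correct one. Moreover, every pair $a,b$ has had a correct direct call in which $b$ sent its correct value.

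The second step, which I expect to be the main obstacle, is the transfer to the standard semantics. We need $\initsecret,\sigma\models K_a b_b$ (or $K_a\ov{b}_b$), that is, every $(\initsecretp,\tau)\sim_a(\initsecret,\sigma)$ has $\initsecretp_b=\initsecret_b$. Here I would use that the standard relation $\sim_a$ refines the information in $a$'s last-call record. The reason is that $a$ observes the full holdings in each of its calls, by Definition~\ref{def.observationrelation}, and in particular the last ones.

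Take any such $(\initsecretp,\tau)$. In $\tau$, agent $a$ receives from each $c$ in their last call the same value $v$ of $b$ as in $\sigma$, and $\tau$ contains at most one faulty call. If $\initsecretp_b$ differed from $v$, then every one of the $n-1$ deliveries of $v$ to $a$ would have to be traced back to a single faulty call. In particular, the direct last call from $b$ to $a$ would itself have to be faulty about $b$. The delivery from a third agent $c$ would then have to be faulty too, or else propagate an uncorrupted value, which is impossible with only one error. This requires $n\ge3$. For $n=2$ I would handle the case separately: there $a$'s standard observation history across the two agents' repeated calls still identifies the value once both directions of $\Kv$ are satisfied, via the majority reasoning of Example~\ref{example.dadada}. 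The delicate point is the propagation argument, because a value delivered by $c$ could originate from the faulty call before the call between $b$ and $a$ in $\tau$. I would try to formalize "origin" of a delivered value by induction on $\tau$, in the style of Lemma~\ref{wieditleestisgek} with $\cor_b$.

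Having $\initsecretp_b=\initsecret_b$ in all $a$-accessible states yields $\Kv_a b$ for all $a,b$, as required.
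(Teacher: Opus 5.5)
Your plan follows the paper's one-line idea: uniform last-call records leave no surviving error, hence knowledge. As written, though, it has two genuine gaps.

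\textbf{The case $n=2$ and step 1.} Under the reading you fix, the statement is false for $n=2$, so the ``majority'' treatment you sketch for that case cannot work. Take $\sigma=ab.a^ab$. In the last call $b$ receives $\{\ov{a},b\}$ and $a$ receives $\{a,b\}$. With no third agent, every record is trivially uniform and agrees with its owner's own secret, so all four $\Kv$'s hold in your sense. Yet $(a|b,\,ab.a^ab)\sim_b(\ov{a}|b,\,a^ab.ab)$, so $a|b,\sigma\not\models\Kv_b a$; in fact $b$ holds conflicting values for $a$. This is exactly the case your third bullet of step 1 mishandles. When $e=d$, the value ``$d$ itself sent $f$ in its last call with $f$'' is the corrupted one, so you need a third agent there. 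You must either assume $n\ge 3$, or strengthen the reading so that the values an agent itself contributes (the $X$ of its stored pairs) must also agree. The latter is what the paper's ``all agents agree on all values'' presupposes.

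\textbf{Step 2 is not carried out.} Your dichotomy for a third agent's delivery of $v$ (either faulty, or propagating an uncorrupted value) is incomplete. In $\tau=\dots ab^b\dots ac\dots ac$, agent $c$ simply echoes the corrupted value that $a$ received in the faulty last $a$--$b$ call. You flag this but leave it open. The missing idea is to use the other half of the last $a$--$b$ exchange, which your reading already controls.
\begin{itemize}
\item By $b$'s own-secret clause, $a$ sent exactly $\initsecret_b$ in that call. A swapped transmission would require $a$ to hold the wrong value beforehand, which is a second fault. So $a$ held exactly $\initsecret_b$ before the call.
\item By Lemma~\ref{lemma.simtolocal} applied to the prefixes, $a$ holds the same in any $(\initsecretp,\tau)\sim_a(\initsecret,\sigma)$.
\item If $\initsecretp_b\neq\initsecret_b$, this held value can only stem from a faulty call about $b$ earlier than the last $a$--$b$ call.
\item Receiving exactly $\initsecret_b$ from the stubborn $b$ (Lemma~\ref{lemma.stubborn}) in that last call would then need a second fault.
\end{itemize}
This gives $\Kv_a b$ without tracing origins through third agents. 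A third agent is then needed only to show that $a$'s record from $b$ is correct. Had $a$ received the wrong value there, it keeps the correct value it already held. So the wrong value can reach any $c\neq a,b$ only together with the correct one, and $c$ can never deliver it alone, as uniformity at $a$ would demand.
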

In other words:
\begin{observation}
$\initsecret,\sigma\models^\last E_A\cExp_A$ implies $\initsecret,\sigma\models E_A\cExp_A$.
\end{observation}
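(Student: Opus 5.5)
The observation is stated as a restatement of the previous one, so I would reduce it to that observation and then prove that one.

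\textbf{Step 1: the reduction.} By the proof of Proposition~\ref{prop.stable}, in the standard semantics $E_A\cExp_A$ is equivalent to $\Et_{a,b\in A}\Kv_a b$. That argument only uses Lemma~\ref{lemma.localll} and Corollary~\ref{cor.knowledgecorrectbelief2}. I would make the same identification on the $\models^\last$ side. Knowledge there is defined via $\sim^\last_a$, which is an equivalence relation determined by $a$'s local state, so locality of $a$'s own atoms holds trivially. The intended reading (ii) of $\Kv_a b$ under $\models^\last$ is that all agents $c$ delivered the same value of $b$ to $a$ in their last call. I would check that under this reading $E_A\cExp_A$ in $\models^\last$ amounts to $\Et_{a,b}\Kv_a b$ in $\models^\last$. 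Given that, the observation follows from the previous one.

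\textbf{Step 2: the previous observation.} Assume every agent $a$, in its stored last calls with every $c\neq a$, received one and the same value for each secret $b$. Take any $(T,\tau)\sim_a(S,\sigma)$ in the standard relation. I would show, for every $a$ and $b$, that the received value is correct and that $a$ knows it in the standard sense.

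The key counting fact is this. The hypothesis holds for \emph{all} pairs $a,b$ simultaneously, so every agent has had a last call with every other agent, and these stored values agree. A single faulty call, say in a call between $c$ and $d$ about secret $b$, alters only the value stored by one endpoint from that call. It can also propagate through later calls. I would argue that if the common value were $\ov{b}$ while $S_b=\{b\}$, then agent $b$'s own stored last calls contradict it. By Lemma~\ref{lemma.stubborn}, $b$ always holds its correct own value, and it sends that value in every call. Hence every agent $a$'s last call with $b$ delivers the correct value, unless that particular call is the faulty one. The value received from $b$ by some other agent $a'$, and then re-forwarded, gives a second independent witness.

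For standard knowledge, I would then show that no $(T,\tau)\sim_a(S,\sigma)$ with $T_b\neq S_b$ exists. Such a gossip state would need $b$'s direct transmission to $a$ to be faulty, while all the other agents' values also agree with it. That requires a second faulty call, and call sequences contain at most one faulty call.

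\textbf{Main obstacle.} The hardest step is this last one: turning ``all agents' last-call values agree'' into impossibility of an indistinguishable standard history. The standard relation $\sim_a$ tracks the full holdings of $a$'s call partners, not just last calls, so I must check that these holdings already encode the conflict. I would first try the case $n=3$ explicitly. In general I would use Lemma~\ref{wieditleestisgek} contrapositively, showing that $S[\sigma]_a^{\inter b}$ is a singleton consistent with $b$'s own report, and that $\cor_b$ cannot produce a matching state.
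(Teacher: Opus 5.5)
Your Step~1 is the paper's own move (its ``in other words'', using $E_A\cExp_A\equiv\Et_{a,b}\Kv_a b$ from the proof of Proposition~\ref{prop.stable}). The paper backs the underlying observation only with the remark that global agreement means every conflict was resolved or never arose. The gap is in the step you flag yourself: passing from agreement of last-call values to \emph{standard} knowledge $\Kv_a b$.

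Your reason is false. You claim that any $(\initsecretp,\tau)\sim_a(\initsecret,\sigma)$ with $\initsecretp_b\neq\initsecret_b$ would need a second faulty call. But only $a$'s observation history is fixed in $\tau$. The unique error can be $b$'s last transmission to $a$; $a$ then spreads the wrong value herself, and the others echo it back with no further fault. For instance $(a|b|c,\,ab.ac.ac)\sim_a(a|\ov{b}|c,\,ab^b{\!\!}.ac.ac)$, and in both states $c$ returns value $b$ to $a$. Your hypothesis fails there, but the example shows the error count alone cannot do the work.

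Your fallback does not repair this. Lemma~\ref{wieditleestisgek} produces indistinguishable states, so it is a tool for proving \emph{ignorance}. Showing that the single candidate $(\initsecret^{\pm b},\cor_b(\sigma))$ is not $\sim_a$-related to $(\initsecret,\sigma)$ leaves untouched every alternative whose calls not involving $a$ differ. Knowledge must exclude all of them.

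What excludes them is the hypothesis itself, carried into $\tau$ through $a$'s received sets and her holdings (Lemma~\ref{lemma.simtolocal}).
\begin{itemize}
\item By Lemma~\ref{lemma.stubborn}, the error in $\tau$ must be $a$'s last call with $b$. So the $\initsecret$-value of $b$ exists nowhere in $\tau$ before that call, and $a$'s last calls with third agents $e$ (where she receives that value) come after it.
\item In $\sigma$ each such $e$ must receive exactly the correct value from $a$, since it is $e$'s own common value. Hence in $\tau$, $a$ can never have received the $\initsecretp$-correct value, because correct values are never discarded.
\item But $b$'s set in that last call already contained third agents' values. So in $\tau$, $b$ must earlier have called some $e\neq a$.
\item That $e$ keeps the $\initsecretp$-correct value and hands it to $a$ in their last call. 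This gives a received set different from the one in $\sigma$, a contradiction.
\end{itemize}

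The same care is needed in your first claim. The hypothesis is per agent, so $b$'s stored calls do not by themselves contradict a wrong common value at $a$. You must route the argument through a third agent, which is your ``second witness''. Indeed, for $n=2$ the sequence $ab.ab^b$ satisfies your hypothesis while $a$ received $\ov{b}$.

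There is also a shortcut. Read $E_A$ under $\models^\last$ modally, as the paper defines $K_a$ there via $\sim^\last_a$, and take the stored pairs to be computed from the standard holdings. Then $a$'s last-call record is determined by her standard observation history, so $\sim_a\subseteq\sim^\last_a$, and the statement follows at once with no counting.
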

This is because when all agents agree on all values, also locally for all values they received from other agents, any conflicts must have been resolved or have never occurred. The last-call semantics is the precisely minimal and adequate one to obtain such correct super success, which is exactly why we wished to present it as an alternative. This is relevant because it may well be that super success is the strongest goal that can be obtained for synchronous gossip (this has only been proved for asynchronous gossip \cite{hvdetal.lucky:2024}). 

The result does not hold in the other direction that $\initsecret,\sigma\models E_A\cExp_A$ implies $\initsecret,\sigma\models^\last E_A\cExp_A$. Super success is unstable in the last-call semantics, because knowledge $\Kv_a b$ is unstable (Observation~\ref{observation.qwerty}).

Another plus of the last-call semantics is that it is {\em first-correct supersuccessful}, unlike the standard semantics (Proposition~\ref{anyresults}).
\begin{observation}
$\models^\last E_A\Exp_A \imp E_A\cExp_A$.
\end{observation}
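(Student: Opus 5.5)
The strategy is to reduce $\models^\last E_A\Exp_A$ to a condition on local last-call records that already forces every held value to be correct, and then use the previous observation that $\models^\last E_A\cExp_A$ is the conjunction of $\Kv_a b$ over all pairs. Fix an initial distribution $\initsecret$ and a call sequence $\sigma$ with $\initsecret,\sigma\models^\last E_A\Exp_A$. By the clause for $K_a$ in $\models^\last$, agent $a$ knows $\Exp_A$ only if $\Exp_A$ holds in every $(\initsecretp,\tau)\sim^\last_a(\initsecret,\sigma)$. Since $\sim^\last_a$ depends only on $\initsecret_a$ and $\last(\initsecret,\sigma)_a$, this must be decided by the pairs $(X_c,Y_c)$ that $a$ stores for each $c\neq a$.

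\textbf{Step 1: every pair has been in contact with full information.} For each $c\neq a$ I show that the stored pair $(X_c,Y_c)$ must satisfy $X_c\union Y_c$ covering all secrets, and that $Y_c$ is conflict-free. Suppose $a$ never called $c$, or the pair is not total. Then I would build an alternative $\tau$ that is $\sim^\last_a$-indistinguishable from $\sigma$ and in which $c$ lacks some secret or has a conflict. The tool is the standard observation that $a$ does not see calls not involving her. So I can, for instance, delete or redirect calls among the other agents, or insert a single faulty call $cd^e$ after $a$'s last call with $c$. This yields $\models^\last \neg K_a\Exp_c$. The analogous construction as in Proposition~\ref{prop.luckrunsout}(2) shows that luck is unavailable in $\models^\last$.

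\textbf{Step 2: the received values all agree.} I then argue that $Y_c$ contains, for every secret $b$, the same single value for all $c$. If two stored pairs disagree on $b$, then $a$ has a conflict, so $\neg\Exp_a$, and hence $\neg K_a\Exp_A$. Thus all agents agree with everybody on all values, in the sense of the preceding observation that ``all agents agree on all values, also locally for all values they received from other agents.'' From this I obtain $\Et_{a,b}\Kv_a b$ in the $\models^\last$ sense, and by that earlier observation also $E_A\cExp_A$.

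\textbf{Step 3: ruling out a shared wrong value.} This is the main obstacle. It remains to exclude the case where everyone consistently holds a wrong value $\ov{b}$, as in the sequence $b^ba.ac.ad.ca.da$ of Observation~\ref{observation.one}. Here the crucial extra ingredient is that $a=b$ is also required to be a super expert, together with Stubbornness (Lemma~\ref{lemma.stubborn}). Agent $b$ always holds the correct value of its own secret. Its last-call record with each $c$ must therefore show $c$ sending exactly $b$'s correct value, or else $b$ sees a conflict with its own value and is not an expert. In a conflict-free agreement this is impossible. The remaining care needed is the single faulty call. If the error occurred in the last call between $b$ and $c$ in the direction $c\to b$, then $b$ still records a differing value, which is a conflict. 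If it occurred in the direction $b\to c$, then $c$ records $\ov{b}$ while others record $b$, and some agent eventually stores both values in its last calls. I expect I will need a case analysis on where the unique faulty call lies relative to each pair's last call to complete this step.
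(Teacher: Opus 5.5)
Your outline follows the paper, whose entire justification is that agents cannot agree on all values for all agents unless these are correct, ``otherwise there must have been a conflict''. That sentence is exactly your Step~3, and there you only announce a case analysis; the mechanism you propose for its key subcase fails.

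The cases do reduce to one. Suppose $c\neq b$ ends up holding $\ov{b}$ without conflict. Step~1 gives a last call between $b$ and $c$, in which the stubborn owner $b$ sends the value $b$. Unless precisely this transmission flips secret $b$, $c$ stores $b$ next to $\ov{b}$ and has a conflict. So everything hinges on ``the last $bc$ call is faulty in direction $b\to c$'', and there your claim that some agent eventually stores both values is false as stated.

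Take two agents and $\sigma=b^bc$ from $b|c$. Agent $b$ stores $(\{b\},\{c\})$, agent $c$ stores $(\{c\},\{\ov{b}\})$, and every sequence matching either record is a single call. Hence $E_A\Exp_A$ holds under $\models^\last$ while $c$ holds $\ov{b}$, and no agent ever stores both $b$ and $\ov{b}$. So Steps~1--2 plus stubbornness do not force correctness, and you need $n\geq 3$; the observation as stated already fails for $n=2$.

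The missing idea is a third agent $d\neq b,c$. By Step~1, $d$ has last calls with $b$ and with $c$, both carrying total pairs.
\begin{itemize}
\item If the $cd$ call follows the fault, $d$ receives $\ov{b}$ from $c$ and $b$ from $b$, which is a conflict.
\item If it precedes the fault, the correct value $b$ occurs in $c$'s pair with $d$ and clashes with the $\ov{b}$ in $c$'s pair with $b$.
\end{itemize}
That second clash is detected only if values an agent itself contributed (the $X$-components) count towards conflicts. Step~1 only yields totality of $X_c\cup Y_c$, not of $Y_c$, so your Step~2 claim that each $Y_c$ carries a value for every secret does not follow.

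Two further points. First, the last sentence of Step~2 leans on the preceding observation. That observation is supported in the paper only by the same agreement-forces-correctness remark, and it concludes in $\models$ rather than $\models^\last$, so it cannot stand in for Step~3.

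Second, the device in Step~1 of inserting a faulty call after $a$'s last call with $c$ does not use non-totality at all. Since $\sim^\last_a$ ignores calls not involving $a$ and imposes no length constraint, for $n\geq 3$ you can append $dc.d^cc$ for some $d\neq a,c$, so that $c$ receives $\ov{c}$. This refutes $K_a\Exp_c$ whenever $a$'s records fit an error-free history, whether or not the pair is total. If that move is admissible, it constrains $E_A\Exp_A$ far more than your Steps~2--3 assume, and the proof should be organised around it. If it is not admissible, the conflict-freeness half of Step~1 is unsupported. Either way you must fix one reading of $K_a$ under $\models^\last$ and use it consistently.
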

This is a consequence of our choice of defining knowledge of a secret for the last-call semantics. An agent cannot agree on the values of all secrets for all agents unless they are all correct, otherwise there must have been a conflict. So in a way the last-call semantics is `only-correct supersuccessful'.

Continuing on that theme, despite being first-correct supersuccessful, the last-call semantics in not first-correct successful:
\begin{observation}
$\not\models^\last \Exp_A \imp \cExp_A$.
\end{observation}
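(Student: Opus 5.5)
The statement is a non-validity claim, so I will prove it by a single counterexample: an initial secret distribution $\initsecret$ and a call sequence $\sigma$ with $\initsecret,\sigma\models^\last \Exp_A$ but $\initsecret,\sigma\not\models^\last \cExp_A$. The counterexample will lean on the earlier informal semantics of $\models^\last$. An atom $b_a$ (resp.\ $\ov{b}_a$) holds iff $a$ received value $b$ (resp.\ $\ov{b}$) in her last call with some agent $c$, or it is her own initial value. Being an expert then means that for every $b$, exactly one value of $b$ occurs among these last-call receipts. Since $\Exp_A$ and $\cExp_A$ are Boolean combinations of atoms, no knowledge clause is involved, and only the stored last-call pairs need to be computed.

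The plan is to reuse the idea behind the second call sequence of Example~\ref{ONE}: one agent spreads a faulty value to agents that never meet a correct source afterwards. The twist is that the faulty agent herself should also overwrite her stale information, so that no conflicts remain anywhere in the last-call stores. I take $\initsecreto = a|b|c$ and
\[ \sigma = ab^b.ac.bc. \]
After $ab^b$, agent $a$ stores $(\{a\},\{a,\ov{b}\}...)$ for $b$; more precisely, the pair for $b$ records that she received $\ov{b}$ (with $a$ from herself). After $ac$, agent $c$ receives $a\ov{b}$ from $a$ and $a$ receives $c$. After $bc$, agent $c$ receives $a b c$ from $b$, because $b$ holds $ab$ correctly after the first call. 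At that point $c$ has conflicting last-call values $\ov{b}$ (from $a$) and $b$ (from $b$), so this particular choice fails.

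I therefore intend to adjust the sequence so that every agent holding $\ov{b}$ has no remaining last-call contact carrying $b$, while $b$ itself (stubborn, holding only $b$) receives $\ov{b}$ only from agents whose last call with $b$ came before any $\ov{b}$ existed. The cleanest route is to keep $b$ out of all later calls, so that $b$'s store only reflects earlier correct exchanges. The candidate is four agents with $\sigma = ac.cd.ad.ab^b{}.ac.ad$, plus a short extension in which $b$ obtains $c$ and $d$ early, for example by prefixing $bc.bd$ and checking that $b$ is an expert with value $b$. Agents $a,c,d$ should end up uniformly holding $\ov{b}$ in all their last-call pairs. In particular, $c$'s and $d$'s last calls with $b$ must be superseded in the store, and this is exactly where the construction is delicate: in $\models^\last$, agent $c$ still stores $b$ from the prefix call $bc$. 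That would create a conflict unless the store for $b$ is replaced.

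The main obstacle, then, is the persistence of the stored pair for each partner. An agent who ever called $b$ keeps value $b$ forever unless she calls $b$ again, and calling $b$ again would expose the error. My fallback resolves this by making the faulty agent $b$ himself the source, as in Observation~\ref{observation.one}. Take $\initsecreto=a|b|c|d$ and $\sigma=b^ba.ac.ad.ca.da$ followed by calls $bc.bd$ in which $b$ learns $c,d$ while $c$ and $d$ have not previously called $b$. In this setup the error is in the call $b^ba$ and only $a$'s stored pair with $b$ contains $\ov{b}$, so I would also use a faulty call $b^bc$ design instead. I will finalize by hand-computing all twelve stored pairs for a small candidate and verifying that $\Exp_A$ holds while some $a\neq b$ satisfies $\ov{b}_a\wedge\neg b_a$ with $b_b$ true, hence $\neg\cExp_A$.
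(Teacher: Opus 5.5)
There is a genuine gap: the proposal never produces a verified counterexample, and every concrete candidate you write down fails. You discard $ab^b{\!\!}.ac.bc$ and the version with prefix $bc.bd$ yourself. The final fallback $b^ba.ac.ad.ca.da.bc.bd$ also fails, and you do not notice this. In call $bc$, agent $c$ receives the correct value $b$ from $b$, while her stored last call with $a$ delivered $\ov{b}$. So $c$ has a conflict and $\Exp_A$ is false; the same happens to $d$ after $bd$. The suggested ``faulty call $b^bc$ design'' is not available, since a call sequence contains at most one faulty call. The closing sentence (``I will finalize by hand-computing \dots for a small candidate'') is a promise, not a proof.

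The missing observation is that in the faulty call $ab^b$ only what $a$ receives is corrupted; $b$ receives $a$'s holding correctly. So if $a$ already holds all other secrets, $b$ becomes a correct expert in that single call. Your prefix $bc.bd$, added so that $b$ would learn $c$ and $d$, is therefore unnecessary, and it is exactly what creates the conflicts you diagnosed. Drop it, and your own core candidate $ac.cd.ad.ab^b{\!\!}.ac.ad$ on $a|b|c|d$ already works:
\begin{itemize}
\item $b$'s only call gives it $a,c,d$, so $b$ holds $abcd$.
\item $a$'s stored exchange with $b$ carries only $\ov{b}$, and its later calls with $c$ and $d$ carry no value of $b$.
\item $c$ and $d$ never call $b$, and their last calls with $a$ deliver $a\ov{b}cd$.
\end{itemize}
Hence $\Exp_A \et \neg\cExp_A$ holds under $\models^\last$. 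Up to renaming agents, this is the paper's counterexample $ab.ac.bc.ad^d{\!\!}.ab.ac$. The underlying principle is that the owner of the corrupted secret takes part in exactly one call, the faulty one, scheduled after its partner has collected all other secrets. The wrong value is then relayed only to agents who never call that owner. Even the reordering $ac.ab^b{\!\!}.ac$ of your first attempt, on $a|b|c$, suffices.
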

A counterexample for four agents is $ab.ac.bc.ad^d{\!\!}.ab.ac$, after which all agents hold values for all secrets and without conflict, but agents $a,b,c$ hold the incorrect value of secret $d$. 

\subsection{Full information protocol}

In an alternative $\models^\full$ semantics for calls, agents do not merely exchange sets of values of secrets but they exchange full information, that is, they also exchange trees (or dags, see below) of subsequences of calls, namely of their own previous calls, but also of the previous calls of other agents having called them, and so on. Full information protocols are well-known from distributed computing \cite{MosesT88}. We first compare full-information semantics $\models^\full$ with the standard semantics $\models$ for correct gossip, and then repeat the exercise, more tentatively, for gossip with errors.

\paragraph*{Error-free full-information semantics}
In full-information semantics, given a gossip state $(\initsecreto,\sigma)$, in a call $ab$ the agents $a$ and $b$ exchange in a call not only their sets of secrets $\initsecreto[\sigma]_a$ respectively $\initsecreto[\sigma]_b$ but also their {\em full views} $\fv_a(\sigma)$ respectively $\fv_b(\sigma)$. The inductive definition is as follows for synchrony, where $b,c \neq a$. By identifying identical subtrees, it can be said to construct a \emph{dag}, a directed acyclic graph. The $\bullet$ symbol represents the call not involving $a$ (but $a$'s awareness of the global clock). We follow the presentation in \cite[Chapter 2]{hvd:2026}.
\[ \begin{array}{lllr}
\fv_a(\epsilon) & := & \epsilon \\
\fv_a(\sigma.bc) & := & \fv_a(\sigma).\bullet  \\
\fv_a(\sigma.ab) & := & (\fv_a(\sigma),\fv_b(\sigma)).ab \\
\fv_a(\sigma.ba) & := & (\fv_b(\sigma),\fv_a(\sigma)).ba 
\end{array}\]
If we now define (synchronous) full view observation relations just as the observation relations from Definition~\ref{def.observationrelation}, and restricted to the error-free call semantics, the only different clause is the one for agents involved in the call. We put the old and the new one next to each other, to highlight the difference (the clause for $ba$ is similar): 
\[\begin{array}{lclcl} (\initsecreto,\sigma.ab) \sim^\full_a (\initsecreto,\tau.ab) &\text{ \quad iff \quad }& (\initsecreto,\sigma) \sim^\full_a (\initsecreto,\tau) &\text{ and }& (\initsecreto,\sigma) \sim^\full_b (\initsecreto,\tau) \\ (\initsecreto,\sigma.ab) \sim_a (\initsecreto,\tau.ab) &\text{ iff }& (\initsecreto,\sigma) \sim_a (\initsecreto,\tau) &\text{ and }& \initsecreto[\sigma]_b = \initsecreto[\tau]_b \end{array}\]
One can show \cite{hvd:2026} that $(\initsecreto,\sigma) \sim^\full_a (\initsecreto,\tau)$ implies $\fv_a(\sigma)=\fv_a(\tau)$, and that $(\initsecreto,\sigma) \sim^\full_a (\initsecreto,\tau)$ implies $\initsecreto[\sigma]_a = \initsecreto[\tau]_a$.

The full-information semantics is therefore at least as strong as the standard semantics in the sense that any positive information about facts obtained with the standard semantics, such as $b_a$, $K_c b_a$, $K_c b_b$ and $\Kv_c b$, is also obtained with full information. For such formulas $\phi$ we have that $\initsecreto,\sigma \models \phi$ implies $\initsecreto,\sigma \models^\full \phi$. But it is also really stronger, as we can reach arbitrary higher-order epistemic termination goals. One can show, first, that the goal of super success is reached by the same call sequence in the full-information semantics; second, that the goal of super success can be reached faster in the full-information semantics; and third, that there are higher order goals of mutual knowledge of all secrets than can be reached in the full-information semantics but not in the standard semantics. (A way to describe the third result that contrasts better with the previous two, is to state that \[ 3. \ \initsecreto,\sigma\models^\full E_A E_A E_A\Exp_A \text{ does not imply } \initsecreto,\sigma\models E_A E_A E_A \Exp_A. \] However, the right-hand side there does not imply unsatisfiability, which is stronger.)
\begin{proposition} \label{proposition.fullinfo} \ 
\begin{enumerate}
\item $\initsecreto,\sigma\models E_A\Exp_A$ implies $\initsecreto,\sigma\models^\full E_A \Exp_A$;
\item $\initsecreto,\sigma\models^\full E_A \Exp_A$ does not imply $\initsecreto,\sigma\models E_A\Exp_A$;
\item $E_A E_A E_A\Exp_A$ is $\models^\full$ satisfiable but is not $\models$ satisfiable.
\end{enumerate}
\end{proposition}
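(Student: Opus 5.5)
Throughout we work in the error-free setting with the single initial distribution $\initsecreto$, so gossip states are call sequences and the two observation relations differ only in the clause for the agents involved in a call.

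\textbf{Item 1.} The key lemma is that $\sim^\full_a$ refines $\sim_a$: $(\initsecreto,\sigma)\sim^\full_a(\initsecreto,\tau)$ implies $(\initsecreto,\sigma)\sim_a(\initsecreto,\tau)$. I prove it by induction on $|\sigma|$, simultaneously for all agents.
\begin{itemize}
\item For calls not involving $a$, both relations just require indistinguishability of the prefixes, so the clauses coincide.
\item For a call $ab$, $\sim^\full_a$ requires $\sigma\sim^\full_a\tau$ and $\sigma\sim^\full_b\tau$. By induction these give $\sigma\sim_a\tau$ and $\sigma\sim_b\tau$.
\item Lemma~\ref{lemma.simtolocal} then turns $\sigma\sim_b\tau$ into $\initsecreto[\sigma]_b=\initsecreto[\tau]_b$, which is exactly the $\sim_a$ clause.
\item Equivalence closure is preserved, since refinement is preserved under closure.
\end{itemize}
With refinement in hand:
\begin{itemize}
\item Valuations are the same in both semantics, since error-free holdings are plain unions.
\item $\Exp_A$ is propositional, so it has the same truth value in both semantics.
\item $K_a$ under a finer relation quantifies over fewer states, so $K_a\Exp_A$ transfers from $\models$ to $\models^\full$.
\end{itemize}
Conjoining over $a$ gives item 1.

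\textbf{Item 2.} I exhibit a call sequence that is supersuccessful under $\models^\full$ but not under $\models$. The natural tool is the lucky-call phenomenon of Example~\ref{example.lucky}.
\begin{itemize}
\item Under $\models$ an agent receives only the holding of its caller. It may then know that the caller is an expert without knowing how a third agent's holding evolved.
\item Under $\models^\full$ the caller's full view reveals the third agent's calls.
\end{itemize}
Concretely, I start from $ac.ad.ac.bc.bc.ac$. Under $\models$, agent $a$ cannot distinguish it from $ac.ad.ac.bd.cd.ac$, so she does not know whether $d$ is an expert. Under full information, $a$ learns $c$'s view and hence that the calls were $bc.bc$.

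I then extend this with a short tail in which $b$, $c$ and $d$ each obtain $E$-level knowledge under $\models^\full$, while some agent still has an uncertainty under $\models$ that the full views resolve. The final verification will be by exhibiting a $\sim_x$-indistinguishable alternative sequence (for some agent $x$) in which some agent is not an expert. This is the step I expect to be the main obstacle. It needs a careful choice of the tail and a check of the relevant equivalence classes. The fallback is a minimal four-agent sequence checked exhaustively (possibly via the Lean implementation).

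\textbf{Item 3.} Satisfiability under $\models^\full$: I take a sequence such as $\rho.\rho.\rho$ with $\rho$ all pairwise calls. After the first $\rho$ every agent is an expert. Each subsequent round of pairwise calls lets every agent learn, through the full views, everyone's epistemic state from the previous round, adding one $E_A$ layer. Three rounds (possibly one extra) therefore suffice for $E_AE_AE_A\Exp_A$.

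Unsatisfiability under $\models$: I invoke the known result \cite{hvdetal.lucky:2024} that, when only secrets are exchanged, goals beyond super expertise are unreachable. The underlying argument is that for any $\sigma$ some agent $a$ considers possible a sequence $\tau$ ending in a call she was not part of. In $\tau$, some agent $b$ cannot distinguish $\tau$ from a variant in which a third agent is not yet a super expert. This argument has to be adapted to the synchronous relation by choosing the last call so that it does not involve $a$. If the cited result is only asynchronous, this adaptation is a second genuine obstacle.
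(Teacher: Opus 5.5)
Your item 1 is correct. The refinement $\sim^\full_a\subseteq\sim_a$ (by simultaneous induction via the $\sim^\full_b$ conjunct and Lemma~\ref{lemma.simtolocal}) is a cleaner packaging of the paper's induction transferring $K_a b_c$ from $\models$ to $\models^\full$. Your $\models^\full$-satisfiability argument in item 3 is likewise the paper's: $\sim^\full_a$ after $ab$ requires $\sim^\full_b$ before, so $a$ inherits $b$'s positive knowledge, which then persists. By your own layer count, though, it needs $\rho.\rho.\rho.\rho$, i.e.\ the paper's $\sigma.\tau.\tau$ with the supersuccessful $\sigma=\rho.\rho$.

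\textbf{Item 2.} The first genuine gap is here. Item 2 is an existence claim, and you never produce a witness. The paper uses the six-call $\sigma=ab.cd.ac.bd.ab.bc$, a shortening of the optimal $ab.cd.ac.bd.ab.ad.bc.cd$; ending in $cd$ instead works too. Under $\models^\full$, in the call $bd$ each of $b$ and $d$ learns from the other's view that neither was involved at time $3$, so the third call was $ac$. As $a$ then held $ab$ and $c$ held $cd$, both $b$ and $d$ know after the fourth call that all are experts. Agent $a$ learns about the fourth call from $b$'s view in the fifth, and $c$ from $b$'s view in the sixth. Under $\models$, $d$ only sees $b$ holding $ab$ in the fourth call, so $(\initsecreto,\sigma)\sim_d(\initsecreto,ab.cd.ab.bd.ab.ab)$, in which $c$ holds only $cd$. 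Hence $\initsecreto,\sigma\not\models K_d\Exp_A$. Your lucky-call start can also be completed, e.g.\ with the tail $ad.bd.cd$. The full views then pass the whole prefix to everyone, while under $\models$ agent $b$ cannot exclude that calls six and seven were $cd.cd$, leaving $a$ with only $acd$.

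\textbf{Item 3, unsatisfiability.} The second gap is the $\models$-unsatisfiability half of item 3. The result you want to cite concerns asynchronous gossip. The paper states explicitly that it is unknown whether $E_AE_A\Exp_A$ is unsatisfiable synchronously, so the citation cannot carry the claim. Your proposed adaptation (a last call without $a$, then an agent $b$ uncertain about a third agent's super expertise) is exactly the part left unproved.

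The paper sidesteps level two with a different idea. Once all agents are super experts, every later call is predictable to every agent: a call involving her exchanges the full set of secrets, and a call not involving her is between experts. Such calls are therefore uninformative, and from this the paper concludes that even if $E_AE_A\Exp_A$ were reachable, $E_AE_AE_A\Exp_A$ is not. The paper gives this only informally, but your proposal currently offers no argument for this half.

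Note also that any argument must use $n\ge 4$, which is implicit in the paper's four-agent setting. For $n=3$, an agent not involved in a call always knows which pair called. So after $ab.bc.ca$ every agent knows the call sequence up to the direction of calls, this is known by all to all, and $E_AE_AE_A\Exp_A$ holds.
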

\begin{proof} \ 

\begin{enumerate}
\item Induction on the length of call sequences proves that $\initsecreto,\sigma \models K_a b_c$ implies $\initsecreto,\sigma \models^\full K_a b_c$. Here we use that exchange of secrets is part of the standard call semantics but also of the full-information protocols, so that $\initsecreto,\sigma.ab \models c_a$ if $\initsecreto,\sigma \models c_a$ or $\initsecreto,\sigma \models c_b$ just as well as $\initsecreto,\sigma.ab \models^\full c_a$ if $\initsecreto,\sigma \models^\full c_a$ or $\initsecreto,\sigma \models^\full c_b$. We also use that all variables are local: $b_a \eq K_a b_a$ is valid in either semantics. Now note that $K_a b_a$ is a conjunct of $E_A \Exp_A = \Et_{a,b} K_a b_a$. 
\item With full-information gossip protocols one can get knowledge faster than with standard gossip protocols. An example for four agents $a,b,c,d$ is the eight-call sequence $\sigma = ab.cd.ac.bd.ab.ad.bc.cd$ that satisfies $\initsecreto,\sigma \models E_A\Exp_A$, where eight calls is the optimal $n-2+\binom{n}{2}$ for $n=4$. We can remove calls $ad$ and $bc$ from this sequence and still obtain that goal in the full semantics but therefore not standardly: \[ \begin{array}{l}
\initsecreto,ab.cd.ac.bd.ab.bc \models^\full E_A\Exp_A \\ \initsecreto,ab.cd.ac.bd.ab.bc \not\models E_A\Exp_A \end{array}\] The reason is, that in the full information protocol, in call $ab$ prior to call $ad$, agent $a$ by also sending $b$ her full view (as also implied by the definition of $\sim^\full_a$) effectively informs agent $b$ that she learnt in call $ac$ that agents $a$ and $c$ are experts (`all she knows') and that in that same call agent $b$ similarly informs agent $a$ that agents $b$ and $d$ are experts. Then, also in the same way, in call $cd$ agent $c$ informs agent $d$ that $c$ and $a$ are experts and agent $d$ informs agent $c$ that $b$ and $d$ are experts. Therefore call $ad$ has no longer to take place in order for $a$ to learn that $d$ is an expert and for $d$ to learn that $a$ is an expert, and similarly for call $bc$. (This example and similar examples are found in \cite{HerzigM17,CooperHMMR19,hvd:2026}.) 

\item The first item showed that $E_A\Exp_A$ is $\models^\full$ satisfiable. Assume a $\models^\full$ super-successful call sequence $\sigma$, that is, $\initsecreto,\sigma\models^\full E_A \Exp_A$. Then all agents $a$ are super experts so that $\initsecreto,\sigma\models^\full K_a \Exp_A$. We first show that $\initsecreto,\sigma \models^\full K_a \Exp_A$ implies $\initsecreto,\sigma.ab \models^\full K_b K_a \Exp_A$. 

\medskip

\noindent $
\initsecreto,\sigma \models^\full K_a \Exp_A \\
\Eq \\
\initsecreto,\sigma \models^\full K_a K_a \Exp_A \\
\Eq \\
\initsecreto,\tau \models^\full K_a\Exp_A \text{ for all } (\initsecreto,\tau) \sim^\full_a (\initsecreto,\sigma) \\
\Imp \\
\initsecreto,\tau \models^\full K_a\Exp_A \text{ for all } (\initsecreto,\tau) \sim^\full_a (\initsecreto,\sigma) \text{ and } (\initsecreto,\tau) \sim^\full_b \initsecreto,\sigma) \\
\Imp \hfill \text{preservation of factual knowledge} \\
\initsecreto,\tau.ab \models^\full K_a\Exp_a \text{ for all } (\initsecreto,\tau) \sim^\full_a (\initsecreto,\sigma) \text{ and } (\initsecreto,\tau) \sim^\full_b (\initsecreto,\sigma) \\
\Eq \\
\initsecreto,\tau.ab \models^\full K_a\Exp_a \text{ for all } (\initsecreto,\tau.ab) \sim^\full_b (\initsecreto,\sigma.ab) \\
\Eq \\
\initsecreto,\sigma.ab \models^\full K_b K_a\Exp_A
$

\medskip

\noindent As we can do thus reach $K_b K_a\Exp_A$ for all $a,b \in A$, we have that $\sigma.\tau \models^\full E_AE_A \Exp_A$ where $\tau$ is the sequence of all $\binom{n}{2}$ calls. Not only knowledge of secrets is preserved after call sequence extension, but also higher-order knowledge, so the proof above can be adapted to show that $\initsecreto,\sigma.ab \models^\full K_b K_a \Exp_A$ implies $\initsecreto,\sigma.ab.bc \models^\full K_c K_b K_a \Exp_A$.\footnote{One could imagine a proof for $K_a \phi^{\mathsf{pos}}$ for any $\phi^{\mathsf{pos}}$ in the fragment $b_a \mid \phi \vel \phi \mid \phi \et \phi \mid K_a \phi$ which is almost like the positive fragment corresponding to the universal fragment in first-order logic, except for the absence of basic clause $\neg b_a$.} Combining the two we then get that $\initsecreto,\sigma.\tau.\tau \models^\full E_A E_A E_A \Exp_A$. Much faster schedules exist, but we only care about satisfiability and not about optimality here.

It remains to show that $E_A E_A E_A \Exp_A$ is unsatisfiable with the standard call semantics. Now for asynchrony our life would be been simpler: $E_A E_A\Exp_A$ is unsatisfiable in the {\em asynchronous} $\models$ semantics \cite{hvdetal.lucky:2024}, so it would already have sufficed to show that $E_A E_A\Exp_A$ is satisfiable in the $\models^\full$ semantics. But it is unknown whether $E_A E_A\Exp_A$ is unsatisfiable in the synchronous $\models$ semantics (although it seems likely that it is unsatisfiable). In the asynchronous semantics one can show that, even when two agents become super experts in the same call, they both consider it possible that the other agent did not become a superexpert. We do not know if there is information leakage with synchrony such that agents that become super experts could learn this from one another. But we have another trick up our sleave. First, the synchronous and asynchronous standard semantics of error-free gossip coincide in the fact that once all agents are super experts, any further calls are not informative. Agent $a$ can now predict the information exchange with any other agent $b$: only the set of all secrets will be exchanged. What you can predict is not informative. Second, given synchrony, for all subsequent calls not involving herself, she knows (because she is a super expert) that in all such cases only agents both already knowing all the secrets will call: not informative. Therefore, even if $E_AE_A\Exp_A$ where satisfiable, yet higher-order termination goals are unsatisfiable: $E_AE_A E_A\Exp_A$ is unsatisfiable in the synchronous $\models$ semantics.
\end{enumerate}
\end{proof}
We conjecture that $E_A E_A \Exp_A$ is also unsatisfiable for synchronous gossip, just as for asynchronous gossip, in which case we would also have that $\initsecreto,\sigma\models^\full E_A E_A\Exp_A$ does not imply $\initsecreto,\sigma\models E_A E_A \Exp_A$.

\begin{example}
Given four agents $a,b,c,d$ and call sequence $\sigma = ab.cd.ac.bd.ab.ad.bd.cd$ we have that $\initsecreto,\sigma\models E_A\Exp_A$. Then also $\initsecreto,\sigma\models^\full E_A\Exp_A$. Extend this call sequence with all six calls between two agents $\tau = ab.ac.ad.bc.bd.cd$. Then $\initsecreto,\sigma.\tau\models^\full E_A E_A\Exp_A$. Now extend that call sequence once more with all six calls $\tau$ between two agents. Then $\initsecreto,\sigma.\tau\models^\full E_A E_A E_A\Exp_A$. Note that this is $\bigO(n^2)$ for $n$ agents (we made $\binom{n}{2}$ calls, three times). Faster schedules to reach this goal, of $\bigO(n)$, are given in \cite{HerzigM17}.
\end{example}

\paragraph*{Full-information semantics with errors}

In full-information semantics with errors we need to take arbitrary initial secret distributions into account in the observation relation and we thus get that $(\initsecret,\sigma) \sim^\full_a (\initsecretp,\tau)$ iff $\fv_a(\initsecret,\sigma)=\fv_a(\initsecretp,\tau)$, and furthermore there may be transmission errors in the communicated secret values. Instead of providing a formal definition, let us sketch some issues and consequences. We can no longer compute the current holding of an agent from its full view and the initially held secret values. That agents share their sets of (values of) secrets in each call is now essential, as there may be transmission errors. Assuming that the semantics of a call $ab$ is the same in $\models$ and $\models^\full$, and an appropriate adjustment of the observation relation $\sim^\full_a$ for faulty calls, we expect that we still get (where $\initsecret$ is an arbitrary initial secret distribution):
\begin{conjecture} $\initsecret,\sigma\models E_A\Exp_A$ implies $\initsecret,\sigma\models^\full E_A \Exp_A$
\end{conjecture}
A piece of good news is that when $a$ obtains a conflict for a secret $b$ and resolves this by for example calling $b$, thus obtaining a majority of correct values of $b$'s secret, in any next call to anyone, including $b$, she will not only communicate the correct value of $b$ to anyone, but also communicate her knowledge $\Kv_a b$ of the correct value, and so on, including higher-order knowledge of secrets. It therefore seems that also for gossip with errors:
\begin{conjecture}  $\initsecret,\sigma\models^\full E_A E_A E_A\Exp_A$ does not imply $\initsecret,\sigma\models E_A E_A E_A \Exp_A$
\end{conjecture}

\section{Discussion and conclusion}

There are some clear topics for further research. First, we presented a synchronous semantics of gossip with errors, but we would like to show similar results for an asynchronous semantics of gossip of errors. This poses some additional technical complications. Second, we restricted ourselves to one transmission error because the restriction came with sufficient complications and already results in a modal epistemic setting. But one should obviously think of multiple errors. Third, instead of transmission errors we also wish to investigate faulty or so-called \emph{Byzantine} agents. In the first subsection below we discuss the issues with asynchrony, the generalization of our results from one transmission error to a bound of $f$ transmission errors, and the formalization of one or $f$ faulty agents.

In this work holding a secret and knowing a secret are really different epistemic notions. Other epistemic notions may also come into question. In the second subsection below we succinctly present connections to some other epistemic notions: belief, hope, protocol-dependent knowledge, and local state knowledge. 

The final third subsection contains a short conclusion.

\subsection{Asynchrony, multiple errors, and faulty agents}

We succinctly discuss asynchrony, the generalization to gossip protocol executions with a bound of $f$ transmission errors, and how to address gossip with at most $f$ faulty agents.

\paragraph*{Asynchrony} If we change the clauses of the observation relation for calls not involving the distinguishing agent $a$, and leave everything else in the semantics the same --- in the semantics of calls, the observation relation, and the satisfaction relation --- we get an asynchronous semantics of gossip. Let us show the novel clauses for the observation relation; where $cd^{\kappa}$ is any call involving $c$ and $d$, faulty or not. 
\[\begin{array}{lll}
(\initsecret,\sigma.cd^\kappa) \sim_a (\initsecretp,\tau.fg^\kappa) & \text{iff} & (\initsecret,\sigma) \sim_a (\initsecretp,\tau) \hspace{2cm} \hfill \text{synchronous} \\
(\initsecret,\sigma.cd^\kappa) \sim_a (\initsecretp,\tau) & \text{iff} & (\initsecret,\sigma) \sim_a (\initsecretp,\tau)  \hfill \text{asynchronous}
\end{array}\]
In the synchronous semantics a $\sim_a$-equivalence class consists of a finite number of call sequences of the same length. But in the asynchronous semantics a $\sim_a$-equivalence class always consists of an infinite number of call sequences, because agent $a$ cannot distinguish the actual call sequence from a call sequence also containing an arbitrarily large number of subsequent calls not involving it (and also not from many other call sequences having such calls prior to $a$'s last call). In order to investigate whether these asynchronous semantics of gossip with errors are also well-defined, we therefore cannot use the previous partial order between pairs of call sequences and formulas, that was based on the equal length of indistinguishable call sequences. For example, consider $K_a \phi$. The semantics of knowledge (that is the same for synchrony and asynchrony) gives us that: $\initsecret,\sigma \models K_a \phi$ iff $\initsecretp,\tau \models \phi$ for all $(\initsecretp,\tau) \sim_a (\initsecret,\sigma)$. However, this time round we do not have that $(\tau,\phi) < (\sigma, K_a \phi)$, because the required $|\tau|=|\sigma|$ to establish this need not be the case. Call sequence $\tau$ may be arbitrarily long, and in particular longer than $\sigma$. 

We intend to attack this problem by defining an alternative partial order based on so-called {\em direct knowledge of secrets} \cite{logicofgossiping:2020} (see also \cite{hvd:2026}). If in a call from $a$ to $b$ either $a$ learns a new secret from $b$, or $b$ learns a new secret from $a$ we say that the direct knowledge of secrets of $a$ respectively of $b$ increases. As for $n$ agents there are $n^2$ secrets to know there is an upper bound on the direct knowledge of secrets. One can then show that, given the restricted information exchange in gossip where agents only exchange sets of secrets, the maximum length of call sequences in which direct knowledge of secrets changes is $2n^3$  \cite{logicofgossiping:2020}, and this is then used to show that each infinite $\sim_a$-equivalence class is equally informative as a finite $\sim_a$-equivalence class where the maximum length of call sequences is $2n^3$. Any other call is then redundant (bisimulation invariant). For a logical language as ours but with also call modalities $[ab]\phi$, this is then sufficient to completely axiomatize asynchronous knowledge. It seems such results are reproducable in an asynchronous logic of gossip with errors, where of course (in case of at most one transmission error) the number of directly knowable secrets is twice that of error-free gossip. This does not seem to affect the order of magnitude of the above $2n^3$ but only the constant factor. One could then envisage a partial order not based on {\em the length $|\sigma|$ of call sequences $\sigma$} but on {\em the number $\sharp\sigma$ of directly known secrets after $\sigma$}. Surely there are several snakes in the grass lying in wait for the intrepid investigator.

\paragraph*{Multiple transmission errors}

We can lift the restriction of one transmission error per protocol execution to $f$ transmission errors per protocol execution. One call may now contain multiple transmission errors, and in both directions of the exchange. We need to choose between having for each agent $f$ \emph{different} incorrect values, on the assumption that all transmission errors are different, or having $f$ \emph{occurrences} of one incorrect value. The latter seems more suitable for faulty agents than for transmission errors, as faulty agents can repeatedly and intentionally send the same faulty message, whereas all transmission errors are presumably different. We recall that the two values $a$ and $\ov{a}$ of the secret  of $a$ `really' are, respectively, pairs $(a,1)$ and $(a,0)$ of the set of secret values $A \times \{0,1\}$. By enlarging this to $A \times \{0,\dots,f\}$ we get our set of at most $f$ incorrect values per agent ($f+1$ values per agent, as it includes one correct value), denoted as $a^0,\dots,a^f$. However, in a different usage, we can also see these $a^0,\dots,a^f$ are named occurrences of the same error. Already with merely two values but allowing $f$ occurrences of an incorrect value, we can generalize the epistemic semantics with agents reasoning about sets of indistinguishable call sequences: instead of a majority of two in order to decide that a value is correct (that is, one more than the maximum of one error), we now need a majority of $f+1$, and the worst case in order to be able to obtain that therefore consists of $2f+1$ calls, a well-known figure in distributed computing \cite{abs-2106-11499}. In the presence of multiple errors or multiple names of an error, knowing the secret would now be defined as $\Kv_a b := K_a b^0_b \vel \dots \vel K_a b^f_b$ and the semantics of a call $ab$ would need clauses $*$ removing all values $b^i$ among $i=0,\dots,f$ known to be incorrect from $a$'s holding on condition that $\Kv_a b$ and therefore exactly one such $K_a b^j_b$ for $j \neq i$ holds. 

The results for one error seem easily reproduced for the case of $f$ errors. Beyond that, one can consider {\em at most $f$ errors per time frame of $y$ calls} (which seems a more realistic assumption), or {\em at most $f$ errors per agent} (which seems to go easier on the formalization because we have a distributed system). Both directions seem promising to pursue.

Instead of upper bounds on faults one could instead have a probability on the occurrence of faults. But that would call for an entirely different approach, because knowledge as complete certainty in a modal logic like ours can then never be obtained, given the non-zero chance of an error \cite{halpern:2003}. So that does not seem to be a promising direction of research to pursue in a modal logic.

\paragraph*{Faulty agents}

For a bound of $f$ faulty (Byzantine) agents the expert goal of the gossip protocol should become: for all correct agents to hold the secrets of the correct agents, and similarly for the super expert goal. We recall that \cite{BergG20} also requires the correct agents to identify the incorrect agents, which is a stronger goal that seems hard to satisfy in a fully distributed setting.

This subject is also deferred to future research, where we restrict ourselves here to a typical benchmark example for faulty agents, employing the same language and semantics. Given $B \subseteq A$, a {\em $B$ expert} is an agent who holds all secrets in the set $B$, and a {\em $B$ super expert} is an agent who knows that all agents in $B$ hold all secrets in $B$.

As we use the same semantics, in particular we use the same observation relation, encoding that agents may not know they are faulty. Such faulty agents are therefore rather byzantine, in the sense of being randomly incorrect, than malicious or otherwise intentionally incorrect. However, it does at least not rule out that agents are always incorrect, as in the following example.

\begin{example}
Given four agents $a,b,c,d$ and initial secret distribution $\initsecreto = a|b|c|d$, assume $d$ is faulty and that $d$ always sends the incorrect value $\ov{d}$ of its secret. Then after call sequence $ab.cd^d{\!\!}.ac.bd^d$, agents $a,b,c$ correctly know the secrets of $a,b,c$, that is, those in $B=A{\setminus}\{d\}=\{a,b,c\}$ are correct $B$ experts. Now consider the extended call sequence $\sigma = ab.cd^d{\!\!}.ac.bd^d{\!\!}.ab.ad^d{\!\!}.bc.cd^d$. We would like to achieve then that all the (correct) agents in $B$ are $B$ (correct) super experts. But, just as in Example~\ref{FIVE}, this is not the case. The agents do not even know they are correct, as for example the last call could also have been $cd$ instead of $cd^d$ thus giving agent $c$ a conflict for secret $d$; in other words, non-involved agents can not (and can never) rule out the possibility of a conflict in a call in which they are not involved. So $\initsecret,\sigma \not\models E_B\Exp_B$ and a fortiori also $\initsecret,\sigma\not\models E_B\cExp_B$.
\end{example}

\subsection{Epistemic perspectives}

\paragraph*{Belief} We would like to say that values correct and incorrect are \emph{believed} when not known, as there are so few errors. Such belief carries weight in our formalization, as we distinguish the (super) expert goal wherein we do not resolve such beliefs from the correct (super) expert goal wherein all is known. If $b_a \et \neg \ov{b}_a$ is true we say that {\em agent $a$ believes that the secret of agent $b$ is $b$}, and if $\neg b_a\et\ov{b}_a$ is true then {\em agent $a$ believes that the secret of agent $b$ is $\ov{b}$}. We cannot merely identify the truth of local atoms $b_a$ resp.\ $\ov{b}_a$ with this belief, as they also occur in the context $\ov{b}_a\et b_a$ wherein $a$ has a conflict for secret $b$. We can make this belief formal and define $B_a b  :=  b_a \et \neg \ov{b}_a$ and $B_a \ov{b} := \neg b_a \et \ov{b}_a$. We now have that that $\initsecret,\sigma \models B_a b$ iff $\initsecret[\sigma]_a^{\inter b} = \{b\}$ and that $\initsecret,\sigma \models B_a \ov{b}$ iff $\initsecret[\sigma]_a^{\inter b} = \{\ov{b}\}$. 

Because $\Kv_a b$ as disjunct $K_a b_b$ implies $B_a b$, and $\Kv_a b$ as disjunct $K_a \ov{b}_b$ implies $B_a \ov{b}$, `knowledge implies belief' as expected. And belief does not imply knowledge, which is also expected. Even correct belief does not imply knowledge which might have been unexpected: we recall (Section~\ref{section.syntaxsemantics}) that correct belief $b_b \et b_a \et \neg\ov{b}_a$ does not imply knowledge $\Kv_a b$, simply because correct belief $b_b \et b_a \et \neg\ov{b}_a$ does not imply known correct belief $K_a (b_b \et b_a \et \neg\ov{b}_a)$. Correct belief has to be \emph{justified} by the epistemic semantics to become known. Only that amounts to knowing the secret: $\Kv_a b$ is equivalent to $K_a (b_b \et b_a \et \neg\ov{b}_a) \vel K_a (\ov{b}_b \et \neg b_a \et \ov{b}_a)$ (Corollary~\ref{cor.knowledgecorrectbelief2}).

Such belief $B_a b$ is not a modality but a boolean abbreviation. For exampe, we cannot write $\neg\phi\et B_a\phi$ for arbitrary formulas $\phi$ to express that belief is incorrect.

\paragraph*{Hope}
Another epistemic notion to consider is the one known in the literature as \emph{hope} \cite{abs-2106-11499,hvdetal.aiml:2022}. An agent \emph{hopes} $\phi$, if $\phi$ is known conditionally on the agent being \emph{correct}, definable as $H_a \phi := K_a (\mathit{correct}_a \imp \phi)$. Such hope modalities are interpreted in Kripke models for partial equivalence relations (symmetric and transitive relations, with a corresponding modal logic {\bf KB4}), where in the faulty part of the model the agent's relation is empty. Now in our setting, agent $a$ holding conflicting values for a secret $b$ is like $a$ having contradictory values for secret $b$, so it is like $a$ knowing a contradiction (having an empty relation). But without actually being contradictory: to be interpreted as belief instead of conflict we merely require different values of secrets to be mutually exclusive.

\paragraph*{Known protocols}
We assumed that call sequences contain at most one faulty call and  transmission error. This reduces the set of possible secret distributions when extending call sequences, because extensions that contain more than one faulty call are not allowed and can therefore not be indistinguishable for the agent. Or we assume the same, but with more faulty calls and at most $f$ errors. Alternatively we could allow arbitrary call sequences, but impose restrictions on the observation relation and thus on indistinguishable call sequences, that are explicit in the logical language. A good candidate for such a logic-based restriction is the so-called {\em protocol-dependent knowledge} of \cite{DitmarschGKP19,gattinger.phd:2018}: instead of the usual knowledge $K_a \phi$ we now have knowledge $K^\prot_a\phi$ which means that agent $a$ knows that $\phi$ on the assumption that gossip protocol $\prot$ is common knowledge, where we can think of protocol $\prot$ as the combined call conditions $\prot_{ab}$ for all calls $ab$, such as `$a$ does not know the secret of $b$' or `$a$ considers it possible that $b$ does not know her secret'. In this particular case of gossip with errors we could therefore have knowledge $K^{\leq f}_a \phi$ where $f$ becomes false when the call sequences contains more than $f$ errors. (Formulas may also contain epistemic modalities $K^\prot$ and $K^{\prot'}$ for different protocols.) We would thus get the full observation model for the protocol $\ANY$ (including valuations for multiple values of secrets) but where the observation relations are partial equivalence relations that are empty when the `protocol is violated', that is: when call sequences are extended so they contain more than $f$ errors. Such $K^{\leq f}_a$ knowledge modalities are {\bf KB4}, not {\bf S5}. The observation model defined in Section~\ref{section.syntaxsemantics} is simply the restriction of this full observation model to the part with non-empty relations.

\paragraph*{Local state}
In a distributed system it is customary that an agent locally stores all the information it receives in messages from other agents. In contrast, in an (propositional) epistemic semantics some information is stored locally as the value of atomic propositions, in our case local atoms $b_a$ (and $\ov{b}$) for $a$ holding secret value $b$ ($\ov{b}$), and other information is stored indirectly in the Kripke model, in our case: the observation model.  As a consequence, what is known as a local state in distributed computing is far more extensive than the valuation of local atoms for a given agent in a gossip state in the observation model (that is, in a world in a Kripke model). However, we can relate the two perspectives in a precise way.

First: An agent $a$ gathers the following information during the execution of a call sequence: for each call $ab$ involving $a$ (and for no other calls), agents $a$ contributes a set of secret values $X$ and receives from $b$ a set of secret values $Y$ (which include possibly incorrectly transmitted values). The {\em local state of agent $a$} given an initial secret distribution $\initsecret$ and after call sequence $\sigma$ where $a$ is involved in each call of subsequence $\tau$ is therefore: \begin{quote} The set of secrets held by $a$ before each call in $\tau$ and the set of secrets received by $a$ from $b$ in that call. \hfill \emph{local state} (i) \end{quote} A \emph{global state} is then an $n$-tuple of local states. 

Second: In our Kripke semantics the correspondent for agent $a$'s local state is:  \begin{quote} The pair consisting of the observation model and the $\sim_a$ equivalence class in that model containing the actual gossip state $(\initsecret,\sigma)$, as well as agent $a$'s holding $\initsecret[\sigma]_a$ of local atoms. \hfill \emph{local state} (ii) \end{quote} In multi-{\bf S5} Kripke models (with equivalence relations) the structure of the model is common knowledge and so also known to agent $a$, whereas agent $a$'s knowledge is determined by her equivalence class. Again, a global state is the $n$-tuple of local states of which the intersection \dots is the pair consisting of the observation model and the actual gossip state $(\initsecret,\sigma)$, that is valued with secret distribution $\initsecret[\sigma]$!

\subsection{Conclusion}

We determined when epistemic goals can be obtained for a simple gossip protocol given a bound of at most one transmission error. The dynamic epistemic analysis allows to determine and resolve conflicts in novel ways, that also result in more optimal executions wherein agents may learn the correct value of the secret of another agent without calling that agent. We wish to generalize our results to asynchrony, to $f$ transmission errors, and to $f$ faulty agents. Beyond that, we wish to generalize it to other distributed epistemic gossip protocols and to restricted network conditions, and to obtain hard bounds for various optimality questions.

\bibliographystyle{plain}
\bibliography{biblio2026_error}

@preamble{ {\providecommand{\noopsort}[1]{}} }

@inproceedings{Moura:2021:Lean4,
	author = {Moura, L. de and Ullrich, S.},
	title = {The {Lean} 4 {Theorem} {Prover} and {Programming} {Language}},
	doi = {10.1007/978-3-030-79876-5_37},
	booktitle = {Automated {Deduction} – {CADE} 28},
	publisher = {Springer},
	year = {2021},
	pages = {625--635},
}

@inproceedings{mathlib2020,
	author = {The mathlib Community},
	title = {The {Lean} mathematical library},
	doi = {10.1145/3372885.3373824},
	booktitle = {Proc.\ of 9th {CPP}},
	year = {2020},
	pages = {367--381},
}

@MastersThesis{line:2018,
  title     = {Unreliable gossip},
  author    = {L. van den Berg},
  school    = {University of Amsterdam},
  series    = {Master of Logic MoL-2018-01},
  year      = 2018,
  note = {MoL-2018-01},
  url       = {https://eprints.illc.uva.nl/9863/},
}

@MastersThesis{tinafurer:2023,
  title     = {Unreliability in Social Networks},
  author    = {T. Furer},
  school    = {University of Bern},
  year      = 2023,
}

@inproceedings{DaliotD05,
  author       = {A. Daliot and
                  D. Dolev},
  editor       = {T. Herman and
                  S. Tixeuil},
  title        = {Self-stabilization of Byzantine Protocols},
  booktitle    = {Proc.\ of 7th {SSS} (Self-Stabilizing Systems)},
  series       = {LNCS},
  volume       = {3764},
  pages        = {48--67},
  year         = {2005},
  doi          = {10.1007/11577327\_4},
}

@article{DolevFPSSL14,
  author       = {D. Dolev and
                  M. F{\"{u}}gger and
                  M. Posch and
                  U. Schmid and
                  A. Steininger and
                  C. Lenzen},
  title        = {Rigorously modeling self-stabilizing fault-tolerant circuits: An ultra-robust  clocking scheme for systems-on-chip},
  journal      = {J. Comput. Syst. Sci.},
  volume       = {80},
  number       = {4},
  pages        = {860--900},
  year         = {2014},
  doi          = {10.1016/J.JCSS.2014.01.001},
}

@book{dolev:2000,
author = {D. Dolev},
title = {Self-Stabilization},
publisher = {MIT Press}, year = {2000},
 }

@article{abs-1803-03042,
  author       = {A. Casta{\~{n}}eda and
                  J. Lef{\`{e}}vre and
                  A. Trehan},
  title        = {Self-healing Routing and Other Problems in Compact Memory},
  journal      = {CoRR},
  volume       = {abs/1803.03042},
  year         = {2018},
  url          = {http://arxiv.org/abs/1803.03042},
}

@InProceedings{DH08,
author="Dolev, D.
and Hoch, E.",
editor="Taubenfeld, G.",
title="Constant-Space Localized Byzantine Consensus",
booktitle="Distributed Computing",
year="2008",
pages="167--181",
}

@article{hvdetal.lucky:2024, 
title={You can only be lucky once: optimal gossip for epistemic goals}, 
DOI={10.1017/S0960129524000082}, 
journal={Mathematical Structures in Computer Science}, 
author={van Ditmarsch, H. and Gattinger, M.}, 
year={2024}, 
pages={1–28},
}

@unpublished{hvd:2026,
author= {H. van Ditmarsch}, 
title={Reasoning about Gossip},
note={Manuscript to appear with Cambridge University Press},
year = {2026},
}

@article{DitmarschGR23,
  author       = {H. van Ditmarsch and
                  M. Gattinger and
                  R. Ramezanian},
  title        = {Everyone Knows That Everyone Knows: Gossip Protocols for Super Experts},
  journal      = {Stud Logica},
  volume       = {111},
  number       = {3},
  pages        = {453--499},
  year         = {2023},
  doi          = {10.1007/S11225-022-10032-3},
}

@inproceedings{BergG20,
  author       = {L. van den Berg and
                  M. Gattinger},
  editor       = {M.A. Martins and
                  I. Sedl{\'{a}}r},
  title        = {Dealing with Unreliable Agents in Dynamic Gossip},
  booktitle    = {Proc.\ of 3rd DaL{\'{\i}}},
  note       = {{LNCS} 12569},
  pages        = {51--67},
  year         = {2020},
  doi          = {10.1007/978-3-030-65840-3\_4},
}

@article{AspnesR07,
  author       = {J. Aspnes and
                  E. Ruppert},
  title        = {An Introduction to Population Protocols},
  journal      = {Bull. {EATCS}},
  volume       = {93},
  pages        = {98--117},
  year         = {2007},
}

@article{west82a,
  author    = {D.B. West},
  title     = {A class of solutions to the gossip problem, part {I}},
  journal   = {Discrete Mathematics},
  volume    = {39},
  number    = {3},
  pages     = {307--326},
  year      = {1982},
}

@article{MosesT88,
  author    = {Y. Moses and
               M.R. Tuttle},
  title     = {Programming Simultaneous Actions Using Common Knowledge},
  journal   = {Algorithmica},
  volume    = {3},
  pages     = {121--169},
  year      = {1988},
  doi       = {10.1007/BF01762112},
}

@inproceedings{hvdetal.aiml:2022,
  author    = {H. van Ditmarsch and K. Fruzsa and R. Kuznets},
  title     = {A New Hope},
  booktitle = {Proc.\ of the 14th {AiML}},
  pages     = {349--369},
  year      = {2022},
  editor    = {D. Fern{\'{a}}ndez{-}Duque and A. Palmigiano},
  publisher = {College Publications},
}

@inproceedings{abs-2106-11499,
  author    = {K. Fruzsa and
               R. Kuznets and
               U. Schmid},
  title     = {Fire!},
  booktitle = {Proc.\ of the 18th {TARK}},
  pages     = {139--153},
  year      = {2021},
  doi       = {10.4204/EPTCS.335.13},
  editor    = {J.Y. Halpern and A. Perea},
  series    = {{EPTCS}},
  volume    = {335},
}

@inproceedings{KuznetsP0F19,
  author    = {R. Kuznets and
               L. Prosperi and
               U. Schmid and
               K. Fruzsa},
  title     = {Epistemic Reasoning with {B}yzantine-Faulty Agents},
  booktitle = {Proc.\ of 12th {FroCoS}},
  pages     = {259--276},
  year      = {2019},
  doi       = {10.1007/978-3-030-29007-8\_15},
  note    = {LNCS 11715},
}

@article{CooperHMMR19,
  author    = {M.C. Cooper and
               A. Herzig and
               F. Maffre and
               F. Maris and
               P. R{\'{e}}gnier},
  title     = {The epistemic gossip problem},
  journal   = {Discret. Math.},
  volume    = {342},
  number    = {3},
  pages     = {654--663},
  year      = {2019},
  doi       = {10.1016/j.disc.2018.10.041},
}

@article{logicofgossiping:2020,
author    = {H. van Ditmarsch and
               W. van der Hoek and
               L.B. Kuijer},
title = {The logic of gossiping},
journal   = {Artificial Intelligence},
  volume    = {286},
  pages     = {103306},
  year      = {2020},
  doi       = {10.1016/j.artint.2020.103306},
 }

@article{AptW18,
  author    = {K.R. Apt and
               D. Wojtczak},
  title     = {Verification of Distributed Epistemic Gossip Protocols},
  journal   = {J. Artif. Intell. Res.},
  volume    = {62},
  pages     = {101--132},
  year      = {2018},
  doi       = {10.1613/jair.1.11204},
}

@article{DitmarschGKP19,
  author    = {H. van Ditmarsch and
               M. Gattinger and
               L.B. Kuijer and
               P. Pardo},
  title     = {Strengthening Gossip Protocols using Protocol-Dependent Knowledge},
  journal   = {{FLAP}},
  volume    = {6},
  number    = {1},
  pages     = {157--203},
  year      = {2019},
}

@phdthesis{gattinger.phd:2018,
author = {M. Gattinger},
title = {New Directions in Model Checking Dynamic Epistemic Logic},
year = {2018},
school = {University of Amsterdam},
note = {ILLC Dissertation Series DS-2018-11},
 }

@article{hvdetal.dynamicgossip:2019,
  author    = {H. van Ditmarsch and
               J. van Eijck and
               P. Pardo and
               R. Ramezanian and
               F. Schwarzentruber},
title = {Dynamic Gossip},
volume = {45(3)},
journal = {Bulletin of the Iranian Mathematical Society},
year = {2019},
pages = {701--728},
doi = {10.1007/s41980-018-0160-4},
 }

@article{AgotnesW17,
  author    = {T. {\AA}gotnes and
               Y.N. W{\'{a}}ng},
  title     = {Resolving distributed knowledge},
  journal   = {Artif. Intell.},
  volume    = {252},
  pages     = {1--21},
  year      = {2017},
  doi = {10.1016/j.artint.2017.07.002},
}

@article{BAKER1972191,
title = "Gossips and telephones",
journal = "Discrete Mathematics",
volume = "2",
number = "3",
pages = "191 - 193",
year = "1972",
author = "B. Baker and R. Shostak"
}

@article{HerzigM17,
  author    = {A. Herzig and
               F. Maffre},
  title     = {How to share knowledge by gossiping},
  journal   = {{AI} Commun.},
  volume    = {30},
  number    = {1},
  pages     = {1--17},
  year      = {2017},
}

@article{DitmarschEPRS17,
  author    = {H. van Ditmarsch and
               J. van Eijck and
               P. Pardo and
               R. Ramezanian and
               F. Schwarzentruber},
  title     = {Epistemic protocols for dynamic gossip},
  journal   = {J. Applied Logic},
  volume    = {20},
  pages     = {1--31},
  doi       = {10.1016/j.jal.2016.12.001},
  year      = {2017},
}

@article{kermarrecetal:2007,
 author = {Kermarrec, A.-M. and van Steen, M.},
 title = {Gossiping in Distributed Systems},
 journal = {SIGOPS Oper. Syst. Rev.},
  volume = {41},
 number = {5},
 year = {2007},
 pages = {2--7},
 doi = {10.1145/1317379.1317381},
}

@proceedings{hvdetal.handbook:2015,
title = {Handbook of epistemic logic},
editor = {H. van Ditmarsch and J.Y. Halpern and W. van der Hoek and B. Kooi},
publisher = {College Publications},
year = {2015},
 }

@inproceedings{attamahetal.ecai:2014,
title = {Knowledge and Gossip},
author = {M. Attamah and H. van Ditmarsch and D. Grossi and W. van der Hoek},
booktitle = {Proc.\ of 21st {ECAI}},
pages = {21--26},
year = {2014},
publisher = {{IOS} Press},
  doi       = {10.3233/978-1-61499-419-0-21},
 }

@article{tijdeman:1971,
author = {R. Tijdeman}, 
title = {On a telephone problem}, 
journal = {Nieuw Archief voor Wiskunde},
volume = {3(19)}, 
pages = {188--192},
year = {1971},
 }

@article{hedetniemietal:1988,
author = {S.M. Hedetniemi and S.T. Hedetniemi and A.L. Liestman}, 
title = {A survey of gossiping and broadcasting in communication networks},
journal = {Networks}, 
volume = {18}, 
pages = {319--349},
year = {1988},
doi = {10.1002/net.3230180406},
 }

@book{halpern:2003,
author = {J.Y. Halpern},
title = {Reasoning about Uncertainty},
publisher = {MIT Press},
address = {Cambridge MA},
year = 2003  }

@misc{LeanAristotle2025,
title = {Aristotle: {IMO}-level Automated Theorem Proving},
url = {http://arxiv.org/abs/2510.01346},
shorttitle = {Aristotle},
author = {Achim, T. and Best, A. and {Bietti {et al.}}, A.},
year = {2025},
}


\section*{Appendix: formalizing self-correcting gossip in Lean}

Lean is a functional programming language and an interactive theorem prover~\cite{Moura:2021:Lean4}.
We formalized the semantics for self-correcting gossip described in this paper using Lean and its mathematical library \emph{mathlib}~\cite{mathlib2020}.
In particular we verify that the semantics is well-founded, and various properties of the notions of knowledge and belief.
Here we highlight some aspects of the formalization.
The full documentation of the formalization is available at
{\small \url{https://m4lvin.github.io/Gossip-in-Lean/docs/Gossip/Error/Basic.html}}

The whole formalization is (implicitly) parameterized by the number of agents \texttt{n}.
To represent the set of all agents we use \texttt{Fin n}, the finite type with \texttt{n} elements.
Calls are then defined as follows with three constructors, corresponding to the cases where there is no transmission error ($ab$), an error from the caller ($a^c b$), or an error from the callee ($a b^c$).
\begin{lstlisting}
inductive Call : Type
  /-- ⌜a b⌝ -/
  | normal : (caller : @Agent n)
           → (callee : { b : @Agent n // b ≠ caller }) → Call
  /-- ⌜a^c b⌝ -/
  | fstE : (caller : @Agent n)
         → (err : @Agent n)
         → (callee : { b : @Agent n // b ≠ caller }) → Call
  /-- ⌜a b^c⌝ -/
  | sndE : (caller : @Agent n)
         → (callee : { b : @Agent n // b ≠ caller })
         → (err : @Agent n) → Call
\end{lstlisting}

We make heavy use of the \texttt{notation} command in Lean so that we can denote agents, values and calls in a way that is more similar to our notation here in the paper.
For example, instead of \texttt{Call.sndE a b c} we can also write \lstinline{⌜a b^c⌝} for the call $a b^c$.

A sequence of calls is then just a \texttt{List} of calls.
For convenient pattern matching we let the head of the list denote the most recent call.
To also ensure sequences contain at most one error we define a function \lean{maxOne} and then let \lean{OSequence} be its \texttt{Subtype}.

A key part of the formalization are four functions that correspond to the definitions in section~\ref{section.syntaxsemantics}.
Below is the Lean code for \lean{eval} corresponding to $\models$ from Definition~\ref{def.semanticsformulas}.

\begin{lstlisting}
def eval : @Dist n → @OSequence n → @Form n → Prop
  | _, _, .Top        => True
  | S, σ, .Neg φ      => ¬ eval S σ φ
  | S, σ, .Has a (j,k) => (j,k) ∈ resultSet a S σ
  | S, σ, .Con φ ψ    => eval S σ φ ∧ eval S σ ψ
  | S, σ, .K a φ   => ∀ t, ∀ τ , (he : equiv a (S,⟨σ,rfl⟩) (t,τ)) → eval t τ φ
termination_by
  _ σ φ => (σ.length, φ.length) -- use lexicographic order to show termination
decreasing_by -- Sequence length stays the same, but formula becomes shorter.
  · apply Prod.Lex.right [...]
\end{lstlisting}
We need to convince Lean that the \lean{eval} function terminates, corresponding to the question whether our semantics is well-founded, as discussed on page~\pageref{par:SemWellDef}.
Just like the definitions in Section~\ref{section.syntaxsemantics} are mutually recursive, also in Lean we need to place the function \lean{eval} inside a \texttt{mutual} block together with functions that encode the other definitions:
\lean{resultSet} and \lean{contribSet} formalize Definition~\ref{def.semanticscall} for the semantics of calls,
and \lean{equiv} formalizes Definition~\ref{def.observationrelation} for the observation relation.
Each function then also has a \verb|termination_by| annotation that labels each recursive call with a value in the lexicographic order of pairs of sequence length and formula length.
In the \verb|decreasing_by| block we then provide proofs that indeed in each recursive call the values are decreasing.
We visualize the mutual recursion in Figure~\ref{fig:termination} where each arrow represents a recursive call from one function to another (or itself) and is labelled by how the lexicographic value changes.

\begin{figure}
  \centering
  \begin{tikzpicture}[
    >=latex,
    boxnode/.style = {draw, rounded corners, minimum width=2.5cm, minimum height=1cm, align=center},
    edgelabel/.style = {midway, fill=none, draw=none, inner sep=1pt}
    ]

    \node[boxnode] (contrib) at (0, 3) {$\lean{contribSet}_{(|\sigma|,1)}$};
    \node[boxnode] (result)  at (6, 3) {$\lean{resultSet}_{(|\sigma|,0)}$};
    \node[boxnode] (equiv)   at (0, 0) {$\lean{equiv}_{(|\sigma|,0)}$};
    \node[boxnode] (eval)    at (6, 0) {$\lean{eval}_{(|\sigma|,|\phi|)}$};

    \draw[->] (contrib) edge [bend left=5] node[edgelabel, above] {$(=,\downarrow)$} (result);
    \draw[->] (result)  edge [bend left=5] node[edgelabel, below] {$(\downarrow,\uparrow)$} (contrib);

    \draw[->] (equiv)   -- node[edgelabel, left]  {$(\downarrow,\uparrow)$} (contrib);
    \draw[->] (result)  -- node[edgelabel, sloped, above] {$(\downarrow,=)$} (equiv);
    \draw[->] (result)  -- node[edgelabel, right] {$(\downarrow,\uparrow)$} (eval);

    \draw[->] (eval)    -- node[edgelabel, below] {$(=,\downarrow)$} (equiv);

    \draw[->] (result) edge[loop above] node {$(\downarrow,=)$} ();
    \draw[->] (equiv)  edge[loop left]  node {$(\downarrow,=)$} ();
    \draw[->] (eval)   edge[loop right] node {$(=,\downarrow)$} ();
  \end{tikzpicture}
  \caption{Mutual recursion and termination proof.
    Edge labels say how $(|\sigma|,|\phi|)$ changes.}\label{fig:termination}
\end{figure}

Again we define shorter notation, so that we can write \lstinline{S⌈σ⌉ ⊧ φ} instead of \lstinline{eval S σ φ}
and so that we can write \lstinline{S⌈σ⌉a} for \lstinline{resultSet a S σ}.
We also write \lstinline{b @ a} for the atom $b_a$ saying that $a$ has value $b$.
To denote values $b$ or $\overline{b}$ we use \lstinline{(b, k)} where \texttt{k} is a \texttt{Bool} value.

One of the first results we prove in Lean is \lean{equiv_Equivalence}, saying that $\sim_a$ is indeed an equivalence relation.
Besides this, we also show the results from Section~\ref{subsec:SemanticResults},
as shown in \autoref{fig:LeanLemmas}.
Parts of our Lean code were written by the \emph{Aristotle} tool~\cite{LeanAristotle2025}, and we refer to the git repository for details.

\begin{figure}[H]
\begin{lstlisting}
/-- Lemma #\ref{lemma.simtolocal}# -/
lemma #\lean{indistinguishable_then_same_values}# {n} {a : @Agent n} {S T: @Dist n}
  {σ τ : OSequence} : (S, σ) ~_a (T, τ)  →  S⌈σ⌉a = T⌈τ⌉a

/-- Lemma #\ref{lemma.localll}# -/
lemma #\lean{local_is_known}# {a b : @Agent n} (k : Bool) :
      ⊨ ((     ⟨b,k⟩ @ a ) ⟹ (K a (     ⟨b,k⟩ @ a) ))
    ∧ ⊨ ((Neg (⟨b,k⟩ @ a)) ⟹ (K a (Neg (⟨b,k⟩ @ a))))

/-- Lemma #\ref{lemma.stubborn}# -/
lemma #\lean{stubbornness}# m σ (h : σ.length = m) : S⌈σ⌉ ⊧ (a, k) @ a  ↔  S a = k

/-- Lemma #\ref{lemma.preservationofknowledge}# -/
lemma #\lean{knowledge_of_secrets_is_preserved}# {a b : @Agent n}
    (hKv : S⌈σ⌉ ⊧ Kv a b) (hSub : σ ⊑ τ) : S⌈τ⌉ ⊧ Kv a b

/-- Lemma #\ref{wieditleestisgek}# -/
lemma #\lean{consider_corrected}# (a b : @Agent n) {S : @Dist n} {σ : @OSequence n}
    {k : Bool} (real_b_is_k : S b = k) (a_has_no_b_k : (b, k) ∉ S⌈σ⌉a)
    : equiv a (S, ⟨σ, rfl⟩)
              (S.switch b, ⟨⟨cor b σ, cor_maxOne σ.2⟩, cor_same_length⟩)
    
/-- Proposition #\ref{cor.knowledgecorrectbelief}# -/
lemma #\lean{knowledge_implies_correct_belief}# {n} {a b : @Agent n} {k} :
  ⊨ (K a ((b,k) @ b)) ⟹ (((b,k) @ b) ⋀ ((b,k) @ a) ⋀ (¬'(b, !k) @ a))

/-- Corollary #\ref{cor.knowledgecorrectbelief2}# -/
lemma #\lean{knowledge_is_justified_true_belief}# {n} {a b : @Agent n} :
    ⊨ K a ((b,k) @ b) ⇔ K a ( ((b,k) @ b) ⋀ ((b,k) @ a) ⋀ (¬' (b, !k) @ a))
  \end{lstlisting}
  \caption{Overview of results proven in Lean (with links to documentation).}\label{fig:LeanLemmas}
\end{figure}

\end{document}